\documentclass{acm_proc_article-sp}


\usepackage{url}
\makeatletter
\newif\if@restonecol
\makeatother
\usepackage[ruled,vlined, linesnumbered]{algorithm2e}
\usepackage{amsmath,amsfonts,latexsym,amssymb}
\usepackage{tikz}
\usepackage{url}
\usepackage{lhelp}
\def\A {\ensuremath{\mathbb{A}}}
\def\B {\ensuremath{\mathbb{B}}}

\def\K {\ensuremath{\mathbf{k}}}
\def\L {\ensuremath{\mathfrak{L}}}

\def\Q {\ensuremath{\mathbb{Q}}}

\newcommand{\T}{\mathfrak{T}}

\def\KK {\ensuremath{\mathbf{K}}}


\newtheorem{Theorem}{Theorem}
\newtheorem{Proposition}{Proposition}
\newtheorem{Definition}{Definition}
\newtheorem{Lemma}{Lemma}
\newtheorem{Corollary}{Corollary}



\newcommand{\init}[1]{\mbox{{\rm init}$(#1)$}}
\newcommand{\iter}[1]{\mbox{{\rm res}$(#1)$}}
\newcommand{\mdeg}[1]{\mbox{{\rm mdeg}$(#1)$}}
\newcommand{\mvar}[1]{\mbox{{\rm mvar}$(#1)$}}
\newcommand{\prem}[1]{\mbox{{\rm prem}$(#1)$}}
\newcommand{\pquo}[1]{\mbox{{\rm pquo}$(#1)$}}
\newcommand{\rank}[1]{\mbox{{\rm rank}$(#1)$}}
\newcommand{\res}[1]{\mbox{{\rm res}$(#1)$}}
\newcommand{\sat}[1]{\mbox{{\rm sat}$(#1)$}}

\newcommand{\head}[1]{\mbox{{\rm head}$(#1)$}}
\newcommand{\tail}[1]{\mbox{{\rm tail}$(#1)$}}
\newcommand{\coeff}[1]{\mbox{{\rm coeff}$(#1)$}}
\renewcommand{\min}[1]{\mbox{{\rm min}$(#1)$}}
\renewcommand{\max}[1]{\mbox{{\rm max}$(#1)$}}


\newcommand{\Extend}[1]{\mbox{{\sf Extend}$(#1)$}}
\newcommand{\Regularize}[1]{\mbox{{\sf Regularize}$(#1)$}}

\newcommand{\Intersect}[1]{\mbox{{\sf Intersect}$(#1)$}}
\newcommand{\Triangularize}[1]{\mbox{{\sf Triangularize}$(#1)$}}

\newcommand{\IntersectAlgebraic}[1]{\mbox{{\sf IntersectAlgebraic}$(#1)$}}
\newcommand{\SubresultantChain}[1]{\mbox{{\sf SubresultantChain}$(#1)$}}
\newcommand{\resultant}[1]{\mbox{{\sf resultant}$(#1)$}}
\newcommand{\RegularGcd}[1]{\mbox{{\sf RegularGcd}$(#1)$}}

\newcommand{\IntersectFree}[1]{\mbox{{\sf IntersectFree}$(#1)$}}
\newcommand{\CleanChain}[1]{\mbox{{\sf CleanChain}$(#1)$}}

\newcommand{\Squarefree}[1]{\mbox{{\sf Squarefree}$(#1)$}}

\newcommand{\SquarefreePart}[1]{\mbox{{\sf SquarefreePart}$(#1)$}}



\newcommand{\Maple}{{\sc  Maple}}

\newcommand{\RegularChains}{{\tt  Regu\-lar\-Chains}}

\newcommand{\GCD}{{\small GCD}}

\def\x {\ensuremath{\mathbf{x}}}

\newcommand{\dpol}[1]{\mbox{{\rm dpol}$(#1)$}}

\newcommand{\red}[1]{\mbox{{\rm red}$(#1)$}}
\renewcommand{\gcd}[1]{\mbox{{\rm gcd}$(#1)$}}

\newcommand{\p}{\mathfrak{p}}
\def\DD {\ensuremath{\mathbb{D}}}
\def\LL {\ensuremath{\mathbb{L}}}

\newcommand{\citep}[1]{\mbox{ \cite{#1} }}
\newif\ifcomment
\commentfalse

\begin{document}

\title{Algorithms for Computing Triangular Decompositions of Polynomial Systems}

\numberofauthors{2} 
\author{
\alignauthor
Changbo Chen\\
       \affaddr{ORCCA, University of Western Ontario (UWO)} \\
       \affaddr{London, Ontario, Canada} \\
       \email{cchen252@csd.uwo.ca}
\alignauthor
Marc Moreno Maza\\
       \affaddr{ORCCA, University of Western Ontario (UWO)} \\
       \affaddr{London, Ontario, Canada} \\
       \email{moreno@csd.uwo.ca}
}
\date{\today}

\maketitle

\begin{abstract}
We propose new algorithms for computing triangular
decompositions of polynomial systems incrementally.
With respect to previous works, our improvements
are based on a {\em weakened} notion of a polynomial GCD 
   modulo a regular chain, which permits
to greatly simplify and optimize the sub-algorithms.
Extracting common work from similar expensive computations
is also a key feature of our algorithms.
In our experimental results the implementation
of our new algorithms, realized with the
   {\RegularChains} library in {\Maple}, 
   outperforms solvers with similar specifications by several orders 
   of magnitude on sufficiently difficult problems.
\end{abstract}

\section{Introduction}

The Characteristic Set Method~\cite{Wu87} of Wu has freed Ritt's decomposition
from polynomial factorization, opening the door to a variety of 
discoveries in polynomial system solving. 
In the past two decades the work of Wu 
has been extended to more powerful decomposition algorithms
and applied to different types
of polynomial systems or decompositions: 
differential systems~\cite{BLOP95,Hubert00},
difference systems~\cite{GaoHoevenLuoYuan2009},
real parametric systems~\cite{yhx01},
primary decomposition~\cite{ShiYo96},
cylindrical algebraic decomposition~\cite{CMXY09}.
Today, triangular decomposition algorithms
provide back-engines for 
computer algebra system front-end solvers,
such as {\sc Maple}'s {\tt solve} command.

Algorithms computing triangular decompositions of polynomial systems
can be classified in several ways. One can first consider the relation
between the input system $S$ and the output triangular systems 
$S_1, \ldots, S_e$. From that perspective, two types of decomposition 
are essentially different:  those for which $S_1, \ldots, S_e$ encode
all the points of the zero set $S$ (over the algebraic closure 
of the coefficient field of $S$) and those for which $S_1, \ldots, S_e$ 
represent only the ``generic zeros'' 
of the irreducible components of $S$.

One can also classify triangular decomposition algorithms by the
algorithmic principles on which they rely. From this other angle,
two types of algorithms are essentially different:  those
which proceed {\em by variable elimination}, that is, by reducing 
the solving of a system in $n$ unknowns to that of a system in
$n-1$ unknowns and those which proceed {\em incrementally}, that is,
by reducing the solving of a system in $m$ equations to that of a 
system in $m-1$ equations.

The Characteristic Set Method and 
the algorithm in~\cite{Wang00b} belong to the
first type in each classification.
Kalkbrener's algorithm~\cite{Kalk93}, which is an elimination
me\-thod solving in the sense of the ``generic zeros'', has brought
efficient techniques, based on the concept of a {\em regular chain}.
Other works~\cite{Laz91a,moreno00}  
on triangular decomposition algorithms 
focus on incremental solving.
This principle is quite attractive, since it allows
to control the properties and size of the intermediate
computed objects. It is used in other areas of polynomial
system solving such as the probabilistic algorithm
of Lecerf~\cite{Lecerf03} based on lifting fibers
and the numerical method of 
Sommese, Verschelde, Wample~\cite{AAG2008}
based on diagonal homotopy.

Incremental algorithms for triangular decomposition
   rely on a procedure for computing the intersection of an 
   hypersurface and the quasi-component of a regular chain.
   Thus, the input of this operation can be regarded
   as well-behaved geometrical objects.
   However, known algorithms, namely the one of 
   Lazard~\cite{Laz91a} and the one of the second author~\cite{moreno00}
   are quite involved and difficult to analyze and optimize.

In this paper, we revisit this intersection operation. 
Let $R = {\K}[x_1, \ldots, x_n]$ be the ring 
of multivariate polynomials with coefficients in {\K}
and ordered variables $\x=x_1 < \cdots < x_n$.
Given a polynomial $p \in R$
and a regular chain $T \subset {\K}[x_1, \ldots, x_n]$,
the function call $\Intersect{p, T, R}$ returns 
regular chains $T_1,\ldots,T_e \subset {\K}[x_1, \ldots, x_n]$
such that we have:
$$
V(p)\cap W(T)\subseteq W(T_1)\cup\cdots\cup W(T_e)\subseteq V(p)\cap\overline{W(T)}.
$$
(See Section~\ref{sec:regularchains} for the notion 
of a regular chain and related concepts and notations.)
We exhibit an algorithm for computing $\Intersect{p, T, R}$
which is conceptually simpler and practically much more
efficient than those of~\cite{Laz91a,moreno00}.
Our improvements result mainly from two new ideas.

\smallskip\noindent{\small \bf Weakened notion of polynomial GCDs modulo regular chain.}
Modern algorithms for triangular decomposition rely
implicitly or explicitly on a notion of GCD for 
univariate polynomials over an
arbitrary commutative ring.
A formal definition was proposed in~\cite{moreno00} 
(see Definition~\ref{Definition:regular-gcd})
and applied to residue class rings of the form 
${\A} = {\K}[\x] / \sat{T}$
where $\sat{T}$ is the saturated ideal of the regular chain $T$.
A modular algorithm for computing these GCDs appears
in~\cite{LiMorenoPan09}:
if $\sat{T}$ is known to be radical, the performance
(both in theory and practice) of this algorithm
are very satisfactory whereas if $\sat{T}$ is not
radical, the complexity of the algorithm increases
substantially w.r.t. the radical case.
In this paper, the ring {\A}
will be of the form $\K[\x]/\sqrt{\sat{T}}$
while our algorithms will not need to compute a basis nor
a characteristic set of $\sqrt{\sat{T}}$.
For the purpose of polynomial system solving
(when retaining the multiplicities of zeros is not required)
this weaker notion of a polynomial GCD is clearly sufficient.
In addition, this leads us to a very
simple procedure for computing such GCDs, 
see Theorem~\ref{Theorem:regulargcd}.
To this end, we rely on the {\em specialization property
of subresultants}.
Appendix~\ref{app:subresultantchain} reviews this property
and provides corner cases for which we could 
not find a reference in the literature.

\smallskip\noindent{\small \bf Extracting common work from similar computations.}
Up to technical details, 
if $T$ consists of a single polynomial $t$ whose main variable 
is the same as $p$, say $v$,
computing $\Intersect{p, T, R}$ can be achieved by successively
computing
\begin{itemize}
\item[$(s_1)$] the resultant $r$ of $p$ and $t$ w.r.t. $v$,
\item[$(s_2)$] a regular GCD of $p$ and $t$ modulo
      the squarefree part of $r$.
\end{itemize}
Observe that Steps $(s_1)$ and $(s_2)$ reduce essentially
to computing the subresultant chain of $p$ and $t$ w.r.t. $v$.
The algorithms of Section~\ref{sec:incremental} extend
this simple observation for computing $\Intersect{p, T, R}$ 
with an arbitrary regular chain.
In broad terms, the intermediate polynomials 
computed during the ``elimination phasis''
of $\Intersect{p, T, R}$  are recycled for
performing the ``extension phasis'' at essentially no cost.

The techniques developed for $\Intersect{p, T, R}$ 
are applied to other key sub-algorithms, such as:
\begin{itemize}
\item the regularity test of a polynomial modulo the saturated
      of a regular chain, see Section~\ref{sec:incremental},
\item the squarefree part of a regular chain, 
      see Appendix~\ref{sec:squarefree}.
\end{itemize}
The primary application of the operation {\sf Intersect}
is to obtain triangular decomposition encoding all 
the points of the zero set of the input system.
However, we also derive from it in Section~\ref{sec:Kalkbrener}
an algorithm
computing triangular decompositions in the sense of Kalkbrener.

\smallskip\noindent{\small \bf Experimental results.}
We have implemented the algorithms  presented in this 
paper within the {\RegularChains} library in {\Maple},
leading to a new implementation of the {\sf Triangularize} command.
In Section~\ref{sec:Implementation}, we report on various 
benchmarks.
This new version of  {\sf Triangularize}  outperforms
the previous ones 
(based on~\cite{moreno00}) 
by several orders 
   of magnitude on sufficiently difficult problems.
Other {\Maple} commands or packages 
for solving polynomial systems
(the {\tt WSolve} package,
 the {\tt Groebner:-Solve} command 
  and the {\tt Groebner:-Basis} command for a lexicographical term order)
are also outperformed by the implementation of the algorithms
presented in this paper both in terms of running time
and, in the case of engines based on Gr\"obner bases,
in terms of output size.

\section{Regular chains}  %
\label{sec:regularchains}

We review hereafter the notion of a
regular chain and its related concepts.
Then we state basic properties
(Propositions~\ref{Proposition:prem},
\ref{Proposition:equal-dim},
\ref{Proposition:radsatprem},
\ref{Proposition:equivalence},
and Corollaries \ref{Corollary:sqrtsat},
\ref{Corollary:regular})
of regular chains, which are at the core
of the proofs of the algorithms of Section~\ref{sec:incremental}.

Throughout this paper, $\K$ is a field, 
$\KK$ is the algebraic closure of $\K$
and $\K[\x]$ denotes the ring of  polynomials over $\K$, 
with ordered variables $\x= x_1 < \cdots < x_n$. 
Let $p\in {\K}[\x]$.

\smallskip\noindent{\small \bf Notations for polynomials.}
If $p$ is not constant, then the greatest variable appearing in $p$ is
called the {\em main variable} of $p$, denoted by $\mvar{p}$.
Furthermore, the leading coefficient, the degree, the
leading monomial, the leading term and 
the reductum of $p$, regarded as a univariate polynomial in
$\mvar{p}$, are called respectively the {\em initial}, the {\em main degree},
the {\em rank},  the {\em head} and the {\em tail} of $p$; 
they are denoted by $\init{p}$, $\mdeg{p}$, $\rank{p}$, 
$\head{p}$ and $\tail{p}$ respectively.
Let $q$ be another polynomial of $\K[\x]$.
If $q$ is not constant, 
then we denote by $\prem{p, q}$ and $\pquo{p, q}$
the pseudo-remainder and the pseudo-quotient of $p$ by $q$
as univariate polynomials in $\mvar{q}$.
We say that $p$ is less than q and write $p\prec q$ 
if either $p\in\K$ and $q\notin\K$ or 
both are non-constant polynomials such that
$\mvar{p}<\mvar{q}$ holds,
 or $\mvar{p}=\mvar{q}$ and $\mdeg{p}<\mdeg{q}$ both hold.
We write $p\sim q$ if neither $p\prec q$ nor $q\prec p$ hold.

\smallskip\noindent{\small \bf Notations for polynomial sets.}
Let $F\subset \K[\x].$
We denote by $\langle F \rangle$ the ideal generated by $F$ in
${\K}[\x]$.
For an ideal ${\cal I} \subset \K[\x]$,
we denote by ${\rm dim}({\cal I})$ its dimension.
A polynomial is {\em regular} modulo ${\cal I}$ if it is neither
zero, nor a zerodivisor modulo ${\cal I}.$
Denote by $V(F)$ the {\em zero set} 
(or algebraic variety) of $F$ in ${\KK}^n.$
Let $h \in \K[\x].$
The {\em{saturated ideal}} of ${\cal I}$   w.r.t. $h$, 
denoted by ${\cal I} :h^{\infty}$, is the ideal 
$\{q\in \K[\x]\mid \exists m\in\mathbb{N}\text{ s.t. }
h^mq\in {\cal I}\}$.

\smallskip\noindent{\small \bf Triangular set.}
Let $T \subset {\K}[\x]$ be a {\em triangular set},
that is, a set of non-constant polynomials with pairwise
distinct main variables.
The set of
main variables and the set of ranks of the polynomials in $T$
are denoted by $\mvar{T}$ and $\rank{T}$, respectively.
A variable in $\x$ is called
{\em algebraic} w.r.t. $T$ if it belongs to \mvar{T}, otherwise
it is said {\em free} w.r.t. $T$. 
For $v\in\mvar{T}$, denote by $T_v$ the polynomial in $T$
with main variable $v$.
For $v \in \x$,
we denote by $T_{<v}$ (resp. $T_{\geq v}$) the set of 
polynomials $t \in T$ such that $\mvar{t} < v$ (resp. $\mvar{t}\geq v$) holds.
Let $h_T$ be the product
of the initials of the polynomials in $T$.
We denote by \sat{T} the {\em saturated ideal} of $T$
defined as follows:
if $T$ is empty then \sat{T} is the trivial
ideal $\langle 0 \rangle$, otherwise it is the ideal
$\langle T \rangle:h_{T}^{\infty}$.
The {\em quasi-component} $W(T)$ of $T$
is defined as $V(T) \setminus V(h_T)$.
Denote $\overline{W(T)} = V(\sat{T})$ as the Zariski
closure of $W(T)$.
For $F\subset \K[\x]$,  
we write  $Z(F, T):=V(F)\cap W(T)$.

\smallskip\noindent{\small \bf Rank of a triangular set.}
Let $S \subset \K[\x]$ be another triangular set.
We say that $T$ has smaller rank than $S$ and we write
$T\prec S$ if there exists $v\in\mvar{T}$ such that
$\rank{T_{<v}}=\rank{S_{<v}}$ holds and: $(i)$ either $v\notin\mvar{S}$;
$(ii)$ or $v\in\mvar{S}$ and $T_{v}\prec S_{v}$.
We write $T\sim S$ if $\rank{T}=\rank{S}$.

\smallskip\noindent{\small \bf Iterated resultant.}
Let $p,q \in \K[\x]$. 
Assume $q$ is nonconstant and let $v=\mvar{q}$.
We define $\res{p, q, v}$ as follows:
if the degree ${\deg}(p,v)$ of $p$ in $v$ is null, 
then $\res{p, q, v}=p$; otherwise $\res{p, q, v}$
is the resultant of $p$ and $q$ w.r.t. $v$.
Let $T$ be a triangular set of $\K[\x]$.
We define $\res{p, T}$ by induction:
if $T=\varnothing$, then $\res{p, T}=p$;
otherwise let $v$ be greatest variable appearing in $T$, then 
$\res{p, T}=\res{\res{p, T_v, v}, T_{<v}}$.

\smallskip\noindent{\small \bf Regular chain.}
A triangular set $T \subset \K[\x]$ is a {\em regular chain} 
if: $(i)$ either $T$ is empty; 
$(ii)$ or $T \setminus \{ T_{\rm max} \}$ is a regular chain,
where $T_{\rm max}$ is the polynomial in $T$ with maximum rank, 
and the initial of $T_{\rm max}$ is regular w.r.t. 
\sat{T \setminus \{ T_{\rm max} \}}.
The empty regular chain is simply denoted by $\varnothing$.

\smallskip\noindent{\small \bf Triangular decomposition.}
Let $F\subset \K[\x]$ be finite. Let $\T := \{T_1,\ldots,T_e\}$
be a finite set of regular chains of $\K[\x]$.
We call $\T$ a {\em Kalkbrener triangular decomposition} of $V(F)$
if we have $V(F)=\cup_{i=1}^e \overline{W(T_i)}$.
We call $\T$ a {\em Lazard-Wu triangular decomposition} of $V(F)$
if we have $V(F)=\cup_{i=1}^e W(T_i)$.

\begin{Proposition}[Th. 6.1. in~\cite{ALM97}]
\label{Proposition:prem}    %
Let $p$ and $T$ be respectively a polynomial and a regular chain
of $\K[\x]$. Then, 
$\prem{p, T}=0$ holds if and only if 
$p\in\sat{T}$ holds.
\end{Proposition}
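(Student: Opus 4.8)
\medskip
\noindent\emph{Proof proposal.}
The plan is to prove the two implications separately; the forward one is immediate and the converse rests on an induction on the number of polynomials in $T$. I would begin by recalling the iterated pseudo-division identity: applying the defining relation of pseudo-division successively to the polynomials of $T$, taken in decreasing order of main variable, produces nonnegative integers $e_t$ and polynomials $q_t\in\K[\x]$ ($t\in T$) with
\[
\Big(\prod_{t\in T}\init{t}^{e_t}\Big)\,p \;=\; \sum_{t\in T} q_t\, t \;+\; \prem{p,T},
\]
where $\prem{p,T}$ is \emph{reduced} with respect to $T$, that is, $\deg(\prem{p,T},\mvar{t})<\mdeg{t}$ for every $t\in T$. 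For the direction ``$\prem{p,T}=0\Rightarrow p\in\sat{T}$'', the identity then gives $\big(\prod_{t}\init{t}^{e_t}\big)p\in\ideal{T}$, and since $\prod_{t}\init{t}^{e_t}$ divides a power of $h_T$ this yields $p\in\ideal{T}:h_T^{\infty}=\sat{T}$ by definition of the saturated ideal. This direction uses nothing about $T$ beyond its being a triangular set.

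For the converse, set $r:=\prem{p,T}$. The identity above, together with $p\in\sat{T}$ and $\ideal{T}\subseteq\sat{T}$, gives $r=\big(\prod_{t}\init{t}^{e_t}\big)p-\sum_{t}q_t t\in\sat{T}$. Thus $r$ is reduced with respect to $T$ and lies in $\sat{T}$, and the desired conclusion $\prem{p,T}=0$ reduces to the claim: \emph{if $r\in\sat{T}$ is reduced with respect to the regular chain $T$, then $r=0$.} I would argue by induction on the number of polynomials in $T$. If $T=\varnothing$, then $\sat{T}=\ideal{0}$ and $r=0$. Otherwise let $T_{\rm max}$ be the element of $T$ of largest rank, put $v:=\mvar{T_{\rm max}}$, $h:=\init{T_{\rm max}}$ and $C:=T\setminus\{T_{\rm max}\}$, and let $h_C$ be the product of the initials of $C$. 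Then $C=T_{<v}$ is a regular chain, $h_T=h_C\,h$, both $h$ and every polynomial of $C$ are free of $v$, and $h$ is regular modulo $\sat{C}$ --- the last fact being precisely condition $(ii)$ in the definition of a regular chain.

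From $r\in\sat{T}=\ideal{T}:h_T^{\infty}$ there is an $m$ with $h_C^{m}h^{m}r\in\ideal{T}=\ideal{C}+\ideal{T_{\rm max}}$; write $h_C^{m}h^{m}r=\sum_{t\in C}a_t\,t+b\,T_{\rm max}$. Replacing each $a_t$ by its pseudo-remainder modulo $T_{\rm max}$ with respect to $v$ and transferring the resulting quotient multiples of $T_{\rm max}$ into the last term --- which only requires multiplying the relation by a further power $h^{d}$, since $\init{T_{\rm max}}=h$ --- rewrites it as
\[
h_C^{m}h^{m+d}\,r \;=\; g' + b'\,T_{\rm max},\qquad g'\in\ideal{C},\quad \deg(g',v)<\mdeg{T_{\rm max}}.
\]
Here $\deg(g',v)<\mdeg{T_{\rm max}}$ because each pseudo-remainder has $v$-degree $<\mdeg{T_{\rm max}}$ while the polynomials $t\in C$ and the factor $h$ are free of $v$; and $h_C,h$ being free of $v$ with $\deg(r,v)<\mdeg{T_{\rm max}}$ ($r$ reduced w.r.t.\ $T$) forces the left-hand side to have $v$-degree $<\mdeg{T_{\rm max}}$ too. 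Comparing $v$-degrees then forces $b'=0$, so $h_C^{m}h^{m+d}r=g'\in\ideal{C}$, whence $h^{m+d}r\in\ideal{C}:h_C^{\infty}=\sat{C}$; since $h$ is regular modulo $\sat{C}$ we get $r\in\sat{C}$. As $r$ is still reduced with respect to $C$ (its defining inequalities for $C$ being a subset of those for $T$), the induction hypothesis yields $r=0$.

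The step I expect to be the main obstacle is the degree bookkeeping in the inductive step: producing the displayed relation with $\deg(g',v)<\mdeg{T_{\rm max}}$, which is what legitimizes the ``$b'=0$'' comparison, requires first cleaning the $C$-part by pseudo-division modulo $T_{\rm max}$ and then carefully checking that none of $h$, $h_C$, or the polynomials of $C$ ever reintroduces the variable $v$. The remaining ingredients --- the iterated pseudo-division identity, the elementary manipulations of colon ideals, and the transfer of regularity of $h$ modulo $\sat{C}$ to a power of $h$ --- are routine.
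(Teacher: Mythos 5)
Your argument is correct. Note, however, that the paper itself offers no proof of this statement: it is quoted as Theorem~6.1 of the cited work of Aubry, Lazard and Moreno Maza, so there is no in-paper argument to compare against. What you have written is a valid self-contained reconstruction of the classical argument behind that theorem. The forward implication is, as you say, immediate from the iterated pseudo-division identity together with $\langle T\rangle : h_T^{\infty} = \sat{T}$. The substance is your reduction of the converse to the claim that $\sat{T}$ contains no nonzero polynomial reduced with respect to $T$, proved by induction on $\#T$; the decisive steps are all in order: the polynomials of $C=T\setminus\{T_{\rm max}\}$, the initial $h$ and $h_C$ are free of $v=\mvar{T_{\rm max}}$, the cleaning of the $C$-coefficients by pseudo-division modulo $T_{\rm max}$ costs only a further power of $h$, the degree comparison in $v$ legitimately forces $b'=0$ because $\K[\x]$ is a domain, and the passage from $h^{m+d}r\in\sat{C}$ to $r\in\sat{C}$ is exactly where the regular-chain condition (regularity of $h$ modulo $\sat{C}$) is used — without it the statement is false for general triangular sets, so it is good that your proof isolates this as the essential hypothesis. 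The one assertion you leave implicit, that the iterated pseudo-remainder is reduced with respect to every element of $T$ (pseudo-division by a polynomial with smaller main variable does not increase the degree in the larger variables), is a standard fact and is easily checked along the lines you indicate. In short: the paper buys brevity by citation; your route buys a complete elementary proof from the definitions given in Section~2.
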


\begin{Proposition}[Prop. 5 in~\cite{moreno00}]
\label{Proposition:equal-dim} %
Let $T$ and $T'$ be two regular chains of $\K[\x]$
such that $\sqrt{\sat{T}}\subseteq\sqrt{\sat{T'}}$ 
and $\dim{(\sat{T})}=\dim{(\sat{T'})}$ hold.
Let $p\in\K[\x]$ such that $p$ is regular w.r.t. $\sat{T}$. 
Then $p$ is also regular w.r.t. $\sat{T'}$.
\end{Proposition}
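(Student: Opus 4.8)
The plan is to reduce the statement about regularity modulo saturated ideals to a statement about membership in associated primes, using the hypothesis that the two saturated ideals have the same radical up to inclusion and the same dimension. Recall that a polynomial $p$ is a zerodivisor modulo an ideal $\I$ exactly when $p$ lies in one of the associated primes of $\I$; for a saturated ideal $\sat{T}$ of a regular chain, these associated primes are all minimal (this is the unmixedness property of regular chains, which gives $\dim(\sat{T}) = \dim(\p)$ for every $\p \in \mathrm{Ass}(\sat{T})$, equivalently every irreducible component of $V(\sat{T})$ has the same dimension). So the first step is to translate ``$p$ regular w.r.t. $\sat{T}$'' into ``$p \notin \p$ for every minimal prime $\p$ of $\sat{T}$'', and similarly for $T'$, and to recall why the minimal primes of $\sat{T}$ (resp. $\sat{T'}$) are exactly the minimal primes over $\sqrt{\sat{T}}$ (resp. $\sqrt{\sat{T'}}$).

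Next I would argue by contradiction: suppose $p$ is a zerodivisor modulo $\sat{T'}$. Then $p$ lies in some minimal prime $\p'$ of $\sat{T'}$, and by the unmixedness of $T'$ we have $\dim(\p') = \dim(\sat{T'}) = \dim(\sat{T})$. Now $\p' \supseteq \sqrt{\sat{T'}} \supseteq \sqrt{\sat{T}}$, so $\p'$ contains $\sqrt{\sat{T}}$, hence contains some minimal prime $\p$ of $\sqrt{\sat{T}}$ with $\p \subseteq \p'$. Since $\p$ is a minimal prime of $\sat{T}$, unmixedness of $T$ gives $\dim(\p) = \dim(\sat{T}) = \dim(\p')$. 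But a chain of primes $\p \subseteq \p'$ with equal dimension (equal coheight, in a polynomial ring over a field, where dimension theory is well-behaved and catenary) forces $\p = \p'$. Therefore $p \in \p' = \p$, so $p$ is a zerodivisor modulo $\sat{T}$, contradicting the hypothesis that $p$ is regular w.r.t. $\sat{T}$. It remains to check $p \neq 0$ modulo $\sat{T'}$: if $p \in \sat{T'}$ then $p \in \sqrt{\sat{T'}}$, but this does not immediately give $p \in \sqrt{\sat{T}}$ since the inclusion goes the wrong way; however $p \in \sat{T'}$ means $p$ lies in every minimal prime of $\sat{T'}$, and the argument above already shows each such prime equals a minimal prime of $\sat{T}$, so $p$ would lie in a minimal prime of $\sat{T}$, again contradicting regularity. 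Hence $p$ is regular w.r.t. $\sat{T'}$.

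The main obstacle, and the point that needs the most care, is justifying that $\dim(\p) = \dim(\p')$ together with $\p \subseteq \p'$ implies $\p = \p'$: this is where one uses that $\K[\x]$ is a catenary (indeed Cohen--Macaulay) ring, so that the dimension of a prime is $n$ minus its height and strict inclusions of primes strictly decrease the dimension. One must also be careful about the degenerate cases: if $\sat{T}$ or $\sat{T'}$ is the zero ideal (empty regular chain) the statement is immediate, and if $p$ is a nonzero constant it is trivially regular modulo anything proper, so these can be dispatched first. A secondary point is to have cleanly available the fact that every regular chain's saturated ideal is unmixed of dimension $n - |T|$; this is standard (it follows from the iterated structure of regular chains and is essentially the content that will be recorded as Proposition~\ref{Proposition:equal-dim}'s companion results), and I would cite it rather than reprove it.
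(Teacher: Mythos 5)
Your argument is correct. Note that the paper itself does not prove this statement: it is imported by citation (Prop.~5 of~\cite{moreno00}), and your proof is essentially the standard one behind that reference. The two ingredients you isolate are exactly the right ones: (a) the unmixedness (equidimensionality) of $\sat{T}$ for a regular chain $T$, which lets you pass from ``zerodivisor'' to ``lies in a minimal prime of the stated dimension'' and which is legitimately cited from the regular-chain literature (it is in~\cite{ALM97}, not restated in this paper); and (b) the fact that in the affine ring $\K[\x]$ a proper inclusion of primes strictly drops dimension, so $\p\subseteq\p'$ with $\dim(\p)=\dim(\p')$ forces $\p=\p'$, whence every associated prime of $\sat{T'}$ is an associated prime of $\sat{T}$ and regularity transfers. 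Your side remark on the zero case is not even needed, since $p\in\sat{T'}$ already places $p$ in an associated prime of the proper ideal $\sat{T'}$, which your main argument rules out.
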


\begin{Proposition}[Prop. 4.4 in~\cite{ALM97}]
\label{Proposition:radsatprem}   %
Let $p\in\K[\x]$ and $T\subset\K[\x]$ be a regular chain. 
Let $v=\mvar{p}$ and $r=\prem{p, T_{\geq v}}$
such that $r\in\sqrt{\sat{T_{<v}}}$ holds.
Then, we have $p\in\sqrt{\sat{T}}$.
\end{Proposition}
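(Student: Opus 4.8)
The plan is to reduce the claim to a single pseudo-division identity together with a purely formal manipulation of the saturated ideal; no property of regular chains beyond the definitions of $\sat{\cdot}$ and of the iterated pseudo-remainder will be needed.

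First I would record the monotonicity $\sat{T_{<v}}\subseteq\sat{T}$, hence $\sqrt{\sat{T_{<v}}}\subseteq\sqrt{\sat{T}}$. Indeed, writing $h_T=h_{T_{<v}}\,g$ with $g=\prod_{t\in T\setminus T_{<v}}\init{t}$, any $q$ with $h_{T_{<v}}^{m}q\in\langle T_{<v}\rangle\subseteq\langle T\rangle$ satisfies $h_T^{m}q=g^{m}\,(h_{T_{<v}}^{m}q)\in\langle T\rangle$. So the hypothesis $r\in\sqrt{\sat{T_{<v}}}$ already places $r$ in $\sqrt{\sat{T}}$.

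Next I would analyse $r=\prem{p,T_{\geq v}}$ using $v=\mvar{p}$. A polynomial of $T_{\geq v}$ whose main variable is strictly above $v$ meets $p$ with a dividend of degree $0$ in that variable, so it contributes trivially to the iterated pseudo-division; and $T_{\geq v}$ has no polynomial with main variable below $v$. Two cases result. If $v\notin\mvar{T}$, then $r=p$ and we are done by the previous paragraph. If $v\in\mvar{T}$, then $r=\prem{p,T_v}$, the pseudo-remainder in $v$, so there are $a\in\mathbb{N}$ and $q\in\K[\x]$ with $\init{T_v}^{a}\,p=q\,T_v+r$.

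In this latter case: $T_v\in\langle T\rangle\subseteq\sqrt{\sat{T}}$ and $r\in\sqrt{\sat{T}}$, hence $\init{T_v}^{a}p\in\sqrt{\sat{T}}$; choosing $m\geq 1$ with $(\init{T_v}^{a}p)^{m}\in\sat{T}$ and clearing the saturation yields $h_T^{k}\,\init{T_v}^{am}\,p^{m}\in\langle T\rangle$ for some $k$. Here is the key point: $\init{T_v}$ is one of the factors of $h_T=\prod_{t\in T}\init{t}$, so $h_T^{k}\,\init{T_v}^{am}$ divides $h_T^{k+am}$; therefore $h_T^{k+am}\,p^{m}\in\langle T\rangle$, i.e. $p^{m}\in\langle T\rangle:h_T^{\infty}=\sat{T}$, and thus $p\in\sqrt{\sat{T}}$. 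The step most likely to hide an error is the pseudo-division bookkeeping above — correctly seeing that only $T_v$ (and none of the higher, nor the absent lower, elements of $T_{\geq v}$) actually contributes; and the reason the argument closes without any claim about $\init{T_v}$ being regular modulo $\sqrt{\sat{T}}$ is precisely that $\init{T_v}$ divides $h_T$, which lets the stray power of the initial be absorbed into the saturation exponent.
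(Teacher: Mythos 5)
Your proof is correct. Note that the paper itself gives no argument for this statement: it is imported as Prop.~4.4 of~\cite{ALM97}, so there is no in-paper proof to compare against. Your argument is a valid, self-contained replacement, and each of its steps checks out: the monotonicity $\sat{T_{<v}}\subseteq\sat{T}$ via $h_T=h_{T_{<v}}\,g$; the reduction of the iterated pseudo-remainder to the single division by $T_v$ (every element of $T_{\geq v}$ with main variable above $v$ acts trivially because $v=\mvar{p}$, and if $v\notin\mvar{T}$ one gets $r=p$ and the first step already finishes); and the closing manipulation, where from $\init{T_v}^{a}p=qT_v+r\in\sqrt{\sat{T}}$ you pass to $h_T^{k}\,\init{T_v}^{am}\,p^{m}\in\langle T\rangle$ and absorb the stray power of $\init{T_v}$ into a power of $h_T$ by multiplying by $g^{am}$, since $\init{T_v}$ is a factor of $h_T$. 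That last observation is exactly what makes the argument work without ever invoking regularity of initials modulo the saturated ideal, so in fact your proof establishes the statement for an arbitrary triangular set $T$, which is slightly more general than the proposition as stated; this extra generality costs nothing and is consistent with how the result is used in the paper (e.g.\ in the proof that {\sf Regularize} meets its specification).
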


\begin{Corollary}
\label{Corollary:sqrtsat}   %
Let $T$ and $T'$ be two regular chains of $\K[x_1,\ldots,x_{k}]$, 
where $1\leq k < n$. 
Let $p\in\K[\x]$ with $\mvar{p}=x_{k+1}$ such that 
$\init{p}$ is regular w.r.t. both $\sat{T}$ and $\sat{T'}$.
Assume that $\sqrt{\sat{T}} \,  \subseteq \,  \sqrt{\sat{T'}}$ holds. 
Then we also have $\sqrt{\sat{T\cup p}} \, \subseteq\ \, \sqrt{\sat{T'\cup p}}$.
\end{Corollary}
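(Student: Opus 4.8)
The plan is to verify the two defining conditions of a regular chain for $T' \cup p$ and then establish the radical inclusion by a membership argument based on pseudo-remainders.

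First I would check that $T' \cup p$ is indeed a regular chain. Since $T'$ is a regular chain in $\K[x_1,\dots,x_k]$ and $\mvar{p} = x_{k+1}$ is larger than every variable occurring in $T'$, the polynomial $p$ is the element of maximal rank in $T' \cup p$, and $T' \cup p \setminus \{p\} = T'$ is a regular chain by hypothesis. By the definition of a regular chain it then suffices that $\init{p}$ be regular w.r.t. $\sat{T'}$, which is assumed; hence $T' \cup p$ is a regular chain, and by the same reasoning so is $T \cup p$.

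Next, for the inclusion $\sqrt{\sat{T\cup p}} \subseteq \sqrt{\sat{T'\cup p}}$, I would take an arbitrary $f \in \sqrt{\sat{T\cup p}}$ and show $f \in \sqrt{\sat{T'\cup p}}$. The key tool is Proposition~\ref{Proposition:radsatprem} applied to the regular chain $T' \cup p$: writing $w = \mvar{f}$, if $w \leq x_{k+1}$ it is enough to show that $\prem{f, (T'\cup p)_{\geq w}}$ lies in $\sqrt{\sat{(T'\cup p)_{<w}}}$. One first reduces $f$ modulo $p$ (when $w = x_{k+1}$), obtaining a polynomial $g = \prem{f, p}$ in $\K[x_1,\dots,x_k]$ up to a power of $\init{p}$; since $\init{p}$ is regular modulo $\sqrt{\sat{T'}}$ it is a non-zerodivisor there, so it suffices to show $g \in \sqrt{\sat{T'}}$. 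Now I would use that $f \in \sqrt{\sat{T\cup p}}$: the same pseudo-division identity shows $g \in \sqrt{\sat{T\cup p}} \cap \K[x_1,\dots,x_k]$, and by Proposition~\ref{Proposition:radsatprem} (or directly, since $p$ is the top polynomial and contributes nothing below $x_{k+1}$) this contraction equals $\sqrt{\sat{T}}$; thus $g \in \sqrt{\sat{T}} \subseteq \sqrt{\sat{T'}}$ by hypothesis, and tracing the implications back gives $f \in \sqrt{\sat{T'\cup p}}$. The case $w < x_{k+1}$, where $f$ already lies in $\K[x_1,\dots,x_k]$, reduces immediately to the hypothesis $\sqrt{\sat{T}} \subseteq \sqrt{\sat{T'}}$ together with the observation that $\sqrt{\sat{T'}} \subseteq \sqrt{\sat{T'\cup p}}$.

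The main obstacle I anticipate is making rigorous the claim that $\sqrt{\sat{T\cup p}}$ contracts to $\sqrt{\sat{T}}$ in $\K[x_1,\dots,x_k]$ and that pseudo-reduction by $p$ faithfully reflects radical membership; this requires care with the powers of $\init{p}$ introduced by pseudo-division and with the fact that $\init{p}$ is a non-zerodivisor modulo $\sqrt{\sat{T}}$ and modulo $\sqrt{\sat{T'}}$ but need not be a unit. Geometrically the statement is transparent --- $\overline{W(T'\cup p)}$ is (a union of components of) $\overline{W(T')} \cap V(p)$ and similarly for $T$, so the inclusion of closures is inherited --- and an alternative route is to argue entirely on the variety side using $\overline{W(T\cup p)} = \overline{W(T)} \cap V(p)$ when $\init{p}$ is regular, then invoke that $V$ reverses the radical inclusions. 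I would likely present the algebraic pseudo-remainder argument as the primary proof and remark on the geometric interpretation.
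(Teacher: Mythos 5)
Your opening step (that $T\cup p$ and $T'\cup p$ are regular chains) is fine, and reducing modulo $p$ and invoking Proposition~\ref{Proposition:radsatprem} for $T'\cup p$ is the right general direction, but the step that is supposed to supply the hypothesis of that proposition fails. You claim that $g=\prem{f,p}$ lies in $\K[x_1,\ldots,x_k]$ and that $f\in\sqrt{\sat{T\cup p}}$ forces $g\in\sqrt{\sat{T\cup p}}\cap\K[x_1,\ldots,x_k]=\sqrt{\sat{T}}$. Neither assertion holds: pseudo-division by $p$ only lowers the degree in $x_{k+1}$ below $\mdeg{p}$, so $g$ may still involve $x_{k+1}$, and radical membership does not survive a single pseudo-division of $f$ itself. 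Concretely, take $k=1$, $T=T'=\{x_1\}$, $p=x_2^2$, $f=x_2$. Then $f\in\sqrt{\sat{T\cup p}}=\ideal{x_1,x_2}$, while $g=\prem{f,p}=x_2$ is neither in $\K[x_1]$ nor in $\sqrt{\sat{T'}}=\ideal{x_1}$; so your intermediate goal ``$g\in\sqrt{\sat{T'}}$'' is unreachable even though the corollary is trivially true here --- the strategy, not the statement, is at fault. A further omission: when $\mvar{f}>x_{k+1}$ the set $(T'\cup p)_{\geq \mvar{f}}$ is empty, so Proposition~\ref{Proposition:radsatprem} gives no information, and this case is not covered by your dichotomy $w\leq x_{k+1}$ versus $w<x_{k+1}$.

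The repair is to unwind the radical \emph{before} reducing. Choose $N$ with $f^N\in\sat{T\cup p}$; Proposition~\ref{Proposition:prem}, applied to the regular chain $T\cup p$ (this is where the hypothesis that $\init{p}$ is regular w.r.t. $\sat{T}$ is used), gives $\prem{f^N,T\cup p}=0$, hence $g:=\prem{f^N,p}\in\sat{T}\subseteq\sqrt{\sat{T'}}$. Writing the pseudo-division identity, $\init{p}^{d}f^N=qp+g\in\ideal{p}+\sqrt{\sat{T'}}\subseteq\sqrt{\sat{T'\cup p}}$, since $\sat{T'}+\ideal{p}\subseteq\sat{T'\cup p}$. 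Finally cancel $\init{p}$: because $T'\cup p$ is a regular chain, $\sat{T'\cup p}$ is proper and saturated w.r.t. $h_{T'}\,\init{p}$, so $\init{p}$ is regular modulo $\sat{T'\cup p}$ and hence modulo its radical, which yields $f\in\sqrt{\sat{T'\cup p}}$; this argument is uniform in $\mvar{f}$, so no case distinction is needed. Your closing geometric remark needs the same correction: $\overline{W(T\cup p)}$ is not $\overline{W(T)}\cap V(p)$ in general, but the closure of $\bigl(\overline{W(T)}\cap V(p)\bigr)\setminus V(\init{p})$, equivalently $\sat{T\cup p}=(\sat{T}+\ideal{p}):\init{p}^{\infty}$.
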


\begin{Proposition}[Lemma 4 in~\cite{CGLMP07}]
\label{Proposition:equivalence}     %
Let $p\in\K[\x]$. Let $T\subset\K[\x]$ be a regular chain. 
Then the following statements are equivalent:
\begin{itemize}
\item[$(i)$] the polynomial $p$ is regular w.r.t. $\sat{T}$,
\item[$(ii)$] for each prime ideal ${\p}$  associated 
              with $\sat{T}$, we have $p \not\in {\p}$,
\item[$(iii)$] the iterated resultant $\iter{p, T}$ is not zero.
\end{itemize}
\end{Proposition}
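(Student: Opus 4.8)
I would first prove $(i)\Leftrightarrow(ii)$ as pure commutative algebra, and then $(i)\Leftrightarrow(iii)$ by induction on the number of polynomials in $T$. For $(i)\Leftrightarrow(ii)$: since $\K[\x]$ is Noetherian, the zerodivisors of $\K[\x]/\sat{T}$ form the union of the primes associated with $\sat{T}$; moreover $\sat{T}$ is a proper ideal (otherwise $1\in\sat{T}$, so $\prem{1,T}=0$ by Proposition~\ref{Proposition:prem}, whereas in fact $\prem{1,T}=1$), hence $\sat{T}$ is contained in each of its associated primes. So $p$ is regular modulo $\sat{T}$ --- neither zero nor a zerodivisor --- precisely when $p$ lies in no prime associated with $\sat{T}$, which is $(ii)$.

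For $(i)\Leftrightarrow(iii)$, the base case $T=\varnothing$ is immediate: $\sat{T}=\langle 0\rangle$ and $\res{p,T}=p$, and both statements read $p\neq 0$. Assume $|T|\geq 1$ and the result for regular chains with fewer polynomials. Let $v$ be the greatest variable occurring in $T$; then $v=\mvar{T_{\rm max}}$, $T_v=T_{\rm max}$, the set $C:=T_{<v}$ is a regular chain, and $\init{T_v}$ is regular modulo $\sat{C}$. With $r:=\res{p,T_v,v}$ we have $\res{p,T}=\res{r,C}$ by definition, so the induction hypothesis gives that $\res{p,T}\neq 0$ if and only if $r$ is regular modulo $\sat{C}$. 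It therefore suffices to prove that $p$ is regular modulo $\sat{T}$ if and only if $r$ is regular modulo $\sat{C}$.

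For this last equivalence I would apply $(i)\Leftrightarrow(ii)$ to both sides, reducing it to: $p$ lies in some prime associated with $\sat{T}$ if and only if $r$ lies in some prime associated with $\sat{C}$. I would then use the structure of the saturated ideal of the regular chain $T=C\cup\{T_v\}$, namely that it is unmixed of dimension $n-|T|$ and that, for each prime $\mathfrak{q}$ associated with $\sat{C}$, the primes $\p$ associated with $\sat{T}$ that contain $\mathfrak{q}$ are exactly those cut out by the irreducible factors, over $\LL:=\mathrm{Frac}(\K[\x]/\mathfrak{q})$, of the image $\overline{T_v}$ of $T_v$ in $\LL[v]$ --- a polynomial of degree $\mdeg{T_v}\geq 1$ in $v$, because the image of $\init{T_v}$ modulo $\mathfrak{q}$ is nonzero. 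Then $p$ lies in one of those primes over $\mathfrak{q}$ if and only if $\overline{p}$ and $\overline{T_v}$ have a nonconstant common factor in $\LL[v]$; and by the specialization property of subresultants (Appendix~\ref{app:subresultantchain}) --- using once more that the image of $\init{T_v}$ is nonzero --- the image of $r$ in $\K[\x]/\mathfrak{q}$ equals, up to the degree-drop corrections recorded there, the resultant of $\overline{p}$ and $\overline{T_v}$ with respect to $v$ (with $r=p$ directly when $\deg(p,v)=0$), which vanishes exactly in that case. Letting $\mathfrak{q}$ range over the associated primes of $\sat{C}$ yields the equivalence; the degenerate cases ($p$ constant, $\deg(p,v)=0$, or $p$ involving variables larger than every variable of $T$) follow from the definition of the iterated resultant and from the invariance of associated primes under adjoining free variables.

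\noindent\textbf{Main obstacle.} The technical heart is the structural correspondence between the primes associated with $\sat{T}$ and those associated with $\sat{C}$ --- that the associated primes of $\sat{C\cup\{T_v\}}$ are exactly those produced by factoring $T_v$ over the residue fields of the associated primes of $\sat{C}$, with no embedded primes appearing --- which rests on the theory of saturated ideals of regular chains; a secondary point is checking at each reduction that the hypotheses of the specialization property of the resultant are met, i.e.\ that the relevant leading coefficient does not vanish modulo the prime by which we reduce (here ensured by $\init{T_v}$ being regular modulo $\sat{C}$). The remaining resultant bookkeeping and the case analysis are routine.
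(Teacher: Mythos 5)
A preliminary remark: the paper does not prove this proposition at all --- it is quoted with a citation to Lemma~4 of~\cite{CGLMP07} --- so there is no in-paper argument to compare you against; the assessment below is of your plan on its own terms.

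Your route is the standard one and is sound in outline. The equivalence $(i)\Leftrightarrow(ii)$ is exactly as you say: associated primes of $\sat{T}$ contain $\sat{T}$, their union is the set of zerodivisors on $\K[\x]/\sat{T}$, and properness of $\sat{T}$ follows from Proposition~\ref{Proposition:prem} via $\prem{1,T}=1\neq 0$. For $(i)\Leftrightarrow(iii)$, your reduction of the inductive step to ``$p$ regular modulo $\sat{T}$ iff $\res{p,T_v,v}$ regular modulo $\sat{T_{<v}}$'', justified by specializing at each associated prime $\mathfrak{q}$ of $\sat{T_{<v}}$ (legitimate because $\init{T_v}\notin\mathfrak{q}$, which you get from the already-established $(i)\Leftrightarrow(ii)$ applied to $T_{<v}$, so no circularity on that point), is how this equivalence is usually obtained, and the corner cases you defer --- $\deg(p,v)=0$, the image of $p$ vanishing modulo $\mathfrak{q}$, free variables of $p$ above those of $T$ --- do check out routinely. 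The one load-bearing ingredient you do not prove is the one you flag yourself: that $\sat{T_{<v}\cup\{T_v\}}$ is unmixed and that its associated primes are exactly those obtained by factoring $T_v$ over the fraction fields of $\K[x_1,\ldots]/\mathfrak{q}$ as $\mathfrak{q}$ ranges over the associated primes of $\sat{T_{<v}}$ (including that every associated prime of $\sat{T}$ arises this way). That is a genuine structure theorem of the regular-chain literature, of comparable depth to the proposition being proved, and it is not among the facts this paper states; to make the argument complete you must either import it with a precise reference --- taking care to choose a source that does not itself derive it from the iterated-resultant criterion (such non-circular proofs exist, via localization arguments) --- or prove it, and that is where most of the real work of this lemma lives. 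So: correct approach, correct bookkeeping, but with the hardest step outsourced rather than closed.
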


\begin{Corollary}
\label{Corollary:regular}  %
Let $p\in\K[\x]$ and $T\subset\K[\x]$ be a regular chain. 
Let $v := \mvar{p}$ and $r:=\iter{p, T_{\geq v}}$.
We have:
\begin{itemize}
\item[$(1)$]  the polynomial $p$ is regular w.r.t. $\sat{T}$ if and only if 
$r$ is regular w.r.t. $\sat{T_{<v}}$;
\item[$(2)$]  if $v\notin\mvar{T}$ and $\init{p}$ is
regular w.r.t. $\sat{T}$, then $p$ is regular w.r.t. $\sat{T}$.
\end{itemize}
\end{Corollary}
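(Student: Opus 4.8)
The plan is to reduce both parts to Proposition~\ref{Proposition:equivalence}, which equates regularity modulo $\sat{T}$ with non-vanishing of the iterated resultant, once we have one bookkeeping fact about how $\iter{p,\cdot}$ unfolds. I would first record this fact: if $v=\mvar{p}$ and $S$ is any regular chain, then
\[
  \iter{p,S}=\iter{r,S_{<v}},\qquad r:=\iter{p,S_{\geq v}},
\]
and moreover $r=p$ whenever $v\notin\mvar{S}$. This follows directly from unrolling the recursive definition of $\res{\cdot,\cdot}$: for each $t\in S$ with $\mvar{t}>v$ we have $\deg(p,\mvar{t})=0$, so stripping $t$ off leaves the first argument unchanged; if $v\in\mvar{S}$, the next step replaces $p$ by $\res{p,S_v,v}$, which is exactly the value of $r$; and the computation then proceeds, in either case, with respect to $S_{<v}$. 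Note that $S_{<v}$ is again a regular chain, being obtained from $S$ by repeatedly deleting the polynomial of largest rank. With this in hand, part $(1)$ is immediate: $p$ is regular w.r.t.\ $\sat{T}$ iff $\iter{p,T}\neq 0$ (Proposition~\ref{Proposition:equivalence}), iff $\iter{r,T_{<v}}\neq 0$ (the fact above), iff $r$ is regular w.r.t.\ $\sat{T_{<v}}$ (Proposition~\ref{Proposition:equivalence} applied to the polynomial $r$ and the regular chain $T_{<v}$).

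For part $(2)$, write $v=x_j$ and assume $v\notin\mvar{T}$ and $\init{p}$ regular w.r.t.\ $\sat{T}$. Since $\mvar{p}=v$, we have $\init{p}\in\K[x_1,\ldots,x_{j-1}]$, hence $\deg(\init{p},\mvar{t})=0$ for every $t\in T$ with $\mvar{t}\geq v$; unrolling $\iter{\init{p},T}$ as above gives $\iter{\init{p},T}=\iter{\init{p},T_{<v}}$, so by Proposition~\ref{Proposition:equivalence} the polynomial $\init{p}$ is regular w.r.t.\ $\sat{T_{<v}}$ as well. The key structural point is that every polynomial of $T_{<v}$ lies in $\K[x_1,\ldots,x_{j-1}]$, so $\sat{T_{<v}}$ is the extension to $\K[\x]$ of the saturated ideal of $T_{<v}$ formed inside $\K[x_1,\ldots,x_{j-1}]$ (saturation by $h_{T_{<v}}$ commutes with adjoining the remaining variables). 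Consequently $\K[\x]/\sat{T_{<v}}\cong B[x_j,\ldots,x_n]$ with $B:=\K[x_1,\ldots,x_{j-1}]/\sat{T_{<v}}$, and regularity of an element of $B$ is unaffected by this polynomial extension.

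It remains to see that $p$ is regular w.r.t.\ $\sat{T_{<v}}$ and then transfer back. The image $\overline{\init{p}}$ of $\init{p}$ is a non-zerodivisor of $B$ (in particular nonzero), and the image of $p$ is $\bar p=\overline{\init{p}}\,x_j^{\,d}+(\text{terms of }x_j\text{-degree }<d)$, where $d:=\mdeg{p}\geq 1$. A leading-coefficient argument then shows $\bar p$ is a non-zerodivisor of $B[x_j,\ldots,x_n]$: if $g\bar p=0$ with $g\neq 0$, comparing top $x_j$-coefficients shows $\overline{\init{p}}$ kills the leading $x_j$-coefficient of $g$, forcing it to vanish. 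Hence $p$ is regular w.r.t.\ $\sat{T_{<v}}$; and since $v\notin\mvar{T}$ the fact above gives $\iter{p,T_{\geq v}}=p$, so part $(1)$ says $p$ is regular w.r.t.\ $\sat{T}$ iff it is regular w.r.t.\ $\sat{T_{<v}}$, which we have just shown.

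The main thing to get right is the unrolling of the iterated-resultant recursion, including the degenerate cases where $r$ is a nonzero constant or is $0$ (these are harmless, since a constant is a unit and $0$ is by definition not regular, in agreement with $\iter{0,\cdot}=0$), together with the justification that $\sat{T_{<v}}$ is literally the extension of an ideal from $\K[x_1,\ldots,x_{j-1}]$; granting these, the leading-coefficient argument in part $(2)$ is routine.
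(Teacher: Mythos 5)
Your proposal is correct. The paper itself gives no proof of this corollary (it is stated as an immediate consequence of Proposition~\ref{Proposition:equivalence}), and your part $(1)$ is exactly the derivation one expects: the identity $\iter{p,T}=\iter{r,T_{<v}}$ with $r=\iter{p,T_{\geq v}}$ really is just the unrolled recursion, since every $t\in T$ with $\mvar{t}>v$ contributes a trivial step, and $T_{<v}$ is again a regular chain, so Proposition~\ref{Proposition:equivalence} applied to $(p,T)$ and to $(r,T_{<v})$ closes the equivalence. For part $(2)$ you add genuine content beyond what the paper records: after reducing via $(1)$ to regularity w.r.t. $\sat{T_{<v}}$, you prove the needed fact from scratch by noting that $T_{<v}\subset\K[x_1,\ldots,x_{j-1}]$, that saturation commutes with adjoining the remaining variables (true, by the coefficientwise argument), and then running the leading-coefficient non-zerodivisor argument in $B[x_j,\ldots,x_n]$; all three steps are sound, including the transfer of regularity of $\init{p}$ from $\sat{T}$ to $\sat{T_{<v}}$ via the same unrolling. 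An alternative route to $(2)$ would go through item $(ii)$ of Proposition~\ref{Proposition:equivalence} and the unmixedness/dimension properties of $\sat{T}$ (a free variable cannot become algebraic modulo an associated prime without dropping the dimension), but that relies on structure theorems the paper does not state, so your more elementary, self-contained argument is a perfectly good fit here.
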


\section{Regular GCDs}    %
\label{sec:regulargcds}

As mentioned before,
Definition~\ref{Definition:regular-gcd} 
was introduced in~\cite{moreno00} as part of a formal
framework for algorithms manipulating 
regular chains~\cite{D5,Laz91a,ChGa92,Kalk93, YaZh91}.
In the present paper, the ring {\A}
will always be of the form $\K[\x]/\sqrt{\sat{T}}$.
Thus, a regular GCD of $p,t$ in $\A[y]$ is also called a regular GCD
of $p,t$ modulo $\sqrt{\sat{T}}.$

\begin{Definition}
\label{Definition:regular-gcd}
Let {\A} be a commutative ring with unity.
Let $p, t, g \in\A[y]$ with $t \neq 0$ and $g \neq 0$.
We say that $g\in\A[y]$ is a {\em regular \GCD} of $p, t$ if:
\begin{itemize}
\item[$(R_1)$] the leading coefficient of $g$ in $y$ is a regular element;
\item[$(R_2)$] $g$ belongs to the ideal generated by $p$ and $t$ 
               in ${\A}[y]$;
\item[$(R_3)$] if $\deg(g, y)>0$, then $g$ pseudo-divides both $p$ and $t$,
               that is, $\prem{p, g}=\prem{t, g}=0$.
\end{itemize}
\end{Definition}

\begin{Proposition}
\label{Proposition:regular-gcd}
For $1 \leq k \leq n$, 
let $T \subset \K[x_1,\ldots,x_{k-1}]$ be a regular chain, possibly empty.
Let $p,t,g\in\K[x_1,\ldots,x_{k}]$ be polynomials
with main variable $x_{k}$.
Assume $T\cup\{t\}$ is a regular chain and 
$g$ is a regular GCD of $p$ and $t$ modulo $\sqrt{\sat{T}}$.
We have:
\begin{itemize}
\item[$(i)$] if $\mdeg{g}=\mdeg{t}$, then $\sqrt{\sat{T\cup t}}=\sqrt{\sat{T\cup g}}$ and
$W(T\cup t) \ \subseteq \ Z(h_g, T\cup t) \, \cup \, W(T\cup g)$ both hold,
\item[$(ii)$] if $\mdeg{g}<\mdeg{t}$, let $q=\pquo{t,g}$, then $T\cup q$ is a regular chain and the following two relations hold:
\begin{itemize}
\item[$(ii.a)$] $\sqrt{\sat{T\cup t}} \ = \ \sqrt{\sat{T\cup g}} \, \cap \, \sqrt{\sat{T\cup q}}$,
\item[$(ii.b)$] $W(T\cup t) \ \subseteq \ Z(h_g, T\cup t) \, \cup \, W(T\cup g)\cup W(T\cup q),$
\end{itemize}
\item[$(iii)$] $W(T\cup g) \ \subseteq \ V(p)$,
\item[$(iv)$] $Z(p, T\cup t) \ \subseteq \ W(T\cup g) \, \cup \, Z(\{p, h_g\}, T\cup t)$.
\end{itemize}
\end{Proposition}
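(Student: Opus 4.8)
The plan is to reduce all four items to a handful of pointwise evaluations: first to establish the three \emph{set} inclusions in $(iii)$, $(iv)$, $(ii.b)$ together with the inclusion displayed in $(i)$, and then to bootstrap from these to the \emph{radical ideal} identities in $(i)$ and $(ii.a)$. Throughout I would write $v=x_{k}$, set $Z_{0}:=V(\sqrt{\sat{T}})$, recall that $h_{g}=\init{g}$, and use freely that $W(C)\subseteq\overline{W(C)}=V(\sat{C})$ for a triangular set $C$ and that $W(T)\subseteq Z_{0}$. The setup needs a few reductions. By $(R_{1})$, $\init{g}$ is regular modulo $\sqrt{\sat{T}}$; since $\sat{T}$ is unmixed this is equivalent to regularity modulo $\sat{T}$, and with $v\notin\mvar{T}$ it shows $T\cup g$ is a regular chain. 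Lifting $(R_{2})$ and $(R_{3})$ to $\K[\x]$ produces $A,B\in\K[\x]$ with $g-Ap-Bt\in\sqrt{\sat{T}}\K[\x]$ and pseudodivision identities $\init{g}^{a}p=\pquo{p,g}\,g+\rho_{p}$ and $\init{g}^{\delta}t=q\,g+\rho_{t}$, with $q:=\pquo{t,g}$, $a=\mdeg{p}-\mdeg{g}+1$, $\delta=\mdeg{t}-\mdeg{g}+1$ and $\rho_{p},\rho_{t}\in\sqrt{\sat{T}}\K[\x]$; here $\init{t}$ is regular modulo $\sqrt{\sat{T}}$, so the image of $t$ in $\A[y]$ keeps degree $\mdeg{t}$, while the sole degenerate case $p\in\sqrt{\sat{T}}\K[\x]$ would be dealt with directly. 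Comparing leading coefficients in $v$ in the second identity gives $\init{q}=\init{g}^{e}\init{t}$ with $e=\mdeg{t}-\mdeg{g}$, so in case $(ii)$ the polynomial $q$ is non-constant with $\mvar{q}=v$ and $\init{q}$ regular modulo $\sat{T}$, whence $T\cup q$ is a regular chain. Finally, since $\deg(\init{g},v)=0$ the iterated resultant $\res{\init{g},T\cup t}$ equals $\res{\init{g},T}$, which is non-zero because $\init{g}$ is regular modulo $\sat{T}$ (Proposition~\ref{Proposition:equivalence}); hence $\init{g}$ is regular modulo $\sat{T\cup t}$, and symmetrically $\init{t}$ is regular modulo $\sat{T\cup g}$.

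Next I would prove the set inclusions by a single point chase. Pick $z$ in the left-hand set; in every case $z\in W(T)\subseteq Z_{0}$, so $\rho_{p}(z)=\rho_{t}(z)=(g-Ap-Bt)(z)=0$. For $(iii)$: $z\in W(T\cup g)$ gives $g(z)=0$, $\init{g}(z)\neq0$, so $\init{g}(z)^{a}p(z)=\pquo{p,g}(z)\,g(z)=0$ forces $p(z)=0$ (if instead $p\in\sqrt{\sat{T}}\K[\x]$, then $p(z)=0$ at once). For $(iv)$: $z\in Z(p,T\cup t)$ gives $p(z)=t(z)=0$; if $\init{g}(z)=0$ then $z\in Z(\{p,h_{g}\},T\cup t)$, otherwise $g(z)=A(z)p(z)+B(z)t(z)=0$, and with $z\in V(T)$, $h_{T}(z)\neq0$, $\init{g}(z)\neq0$ this puts $z\in W(T\cup g)$. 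For $(i)$ and $(ii.b)$: $z\in W(T\cup t)$ gives $t(z)=0$, $\init{t}(z)\neq0$; if $\init{g}(z)=0$ then $z\in Z(h_{g},T\cup t)$, otherwise $\init{g}(z)^{\delta}t(z)=q(z)\,g(z)=0$ with $\init{g}(z)\neq0$ forces $q(z)\,g(z)=0$. When $\mdeg{g}=\mdeg{t}$ one has $q=\init{t}$, so $g(z)=0$ and $z\in W(T\cup g)$, which is $(i)$; when $\mdeg{g}<\mdeg{t}$, either $g(z)=0$ and $z\in W(T\cup g)$, or $q(z)=0$ and then $\init{q}(z)=\init{g}(z)^{e}\init{t}(z)\neq0$ gives $z\in W(T\cup q)$, which is $(ii.b)$.

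For the ideal identities in $(i)$ and $(ii.a)$, it suffices by the Nullstellensatz to prove $V(\sat{T\cup t})=V(\sat{T\cup g})$, respectively $V(\sat{T\cup t})=V(\sat{T\cup g})\cup V(\sat{T\cup q})$. For the inclusion $\subseteq$: since $\init{g}$ is regular modulo $\sat{T\cup t}$, the locus $Z(h_{g},T\cup t)=V(\init{g})\cap W(T\cup t)$ is nowhere dense in $\overline{W(T\cup t)}=V(\sat{T\cup t})$, so $W(T\cup t)\setminus V(\init{g})$ is dense there; by the inclusions above, applied at points with $\init{g}(z)\neq0$, this dense set lies in $W(T\cup g)$, respectively in $W(T\cup g)\cup W(T\cup q)$, and passing to Zariski closures gives the inclusion. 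For $\supseteq$: a point $z\in W(T\cup g)$ has $g(z)=0$, $\init{g}(z)\neq0$, hence $t(z)=0$ from $\init{g}(z)^{\delta}t(z)=q(z)\,g(z)=0$ (in case $(i)$ one uses $q=\init{t}$); with $z\in V(T)$, $h_{T}(z)\neq0$ this shows $W(T\cup g)\setminus V(\init{t})\subseteq W(T\cup t)$, and since $\init{t}$ is regular modulo $\sat{T\cup g}$ the left side is dense in $V(\sat{T\cup g})$, so $V(\sat{T\cup g})\subseteq V(\sat{T\cup t})$; the same argument with $q$ replacing $g$ (using that $\init{q}(z)\neq0$ implies $\init{g}(z)\neq0$) gives $V(\sat{T\cup q})\subseteq V(\sat{T\cup t})$ in case $(ii)$. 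Together the two directions finish $(i)$ and $(ii.a)$.

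I expect the bootstrap in the third paragraph to be the delicate point: the pseudodivision identities only constrain $z$ where $\init{g}$ (or $\init{t}$) is non-zero, so the set inclusions cannot be promoted to equalities of varieties without the genericity observation that the vanishing loci of these initials are nowhere dense in the relevant quasi-component closures — and this is exactly where the non-vanishing of the iterated resultant (Proposition~\ref{Proposition:equivalence}, via the trivial fact $\deg(\init{g},v)=0$) is used. The rest is bookkeeping: carrying the reduction modulo $\sqrt{\sat{T}}$ through, invoking the unmixedness of $\sat{T}$ to identify regularity modulo $\sat{T}$ with regularity modulo $\sqrt{\sat{T}}$, and clearing the degenerate degree subcase of $(R_{3})$ by hand.
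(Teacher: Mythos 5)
Your proof is correct, but it takes a genuinely different route to the two radical identities, which are the substantive part of the statement. Both you and the paper start from the same data, namely the lift of $(R_2)$ and the pseudo-division identities from $(R_3)$ with remainders in $\sqrt{\sat{T}}$, and items $(iii)$, $(iv)$, the inclusion in $(i)$ and $(ii.b)$ then come out in essentially the same pointwise way. For $(i)$ and $(ii.a)$ the paper argues entirely at the level of ideals: it deduces $t\in\sqrt{\sat{T\cup g}}$ and $t\in\sqrt{\sat{T\cup q}}$, shows $g\in\sqrt{\sat{T\cup t}}$ using that in the case $\mdeg{g}=\mdeg{t}$ the pseudo-quotient is regular modulo $\sat{T}$, and proves the reverse inclusion of $(ii.a)$ by chasing an element $f$ with $(h_gh_q)^{e_2}f^{e_2}-aqg\in\sat{T}$. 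You instead translate everything to varieties via the Nullstellensatz and $\overline{W(C)}=V(\sat{C})$, and upgrade the pointwise inclusions, which are only valid off $V(h_g)$ (resp.\ $V(\init{t})$, $V(\init{q})$), to inclusions of closures by a density argument: the relevant initial is regular modulo the saturated ideal of the extended chain (obtained from the iterated-resultant criterion of Proposition~\ref{Proposition:equivalence}, since the initial has degree zero in $x_k$), so its vanishing locus is nowhere dense in the quasi-component closure. That density mechanism is the crux of your argument and it is sound; your explicit computation $\init{q}=\init{g}^{e}\,\init{t}$ also makes the regular-chain claim for $T\cup q$ and the membership in $W(T\cup q)$ needed for $(ii.b)$ more transparent than the paper's one-line assertions. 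What the paper's route buys is that it never leaves $\K[\x]$ (no Nullstellensatz, no topological reasoning); what yours buys is a uniform geometric mechanism that handles both directions of $(i)$ and $(ii.a)$ at once. The one background fact you must (and do) invoke, the passage from $(R_1)$-regularity modulo $\sqrt{\sat{T}}$ to regularity modulo $\sat{T}$ via unmixedness of $\sat{T}$, is also used silently by the paper's proof, so it is a shared prerequisite rather than a gap.
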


\begin{proof}
We first establish a relation between $p$, $t$ and $g$.
By definition of pseudo-division, there exist polynomials $q, r$ 
and a nonnegtive integer $e_0$ such that 
\begin{equation}
\label{eqs:tgq0}
h_g^{e_0}t = qg + r \ \ {\rm and}  \ \ r\in\sqrt{\sat{T}}
\end{equation}
both hold.
Hence, there exists an integer $e_1\geq0$ such that: 
\begin{equation}
\label{eqs:tgq}
(h_T)^{e_1}(h_g^{e_0}t-qg)^{e_1} \ \in \ \langle T \rangle
\end{equation}
holds, which implies: $t\in\sqrt{\sat{T\cup g}}$.
We first prove $(i)$. Since $\mdeg{t}=\mdeg{g}$ holds, 
we have $q \in \K[x_1,\ldots,x_{k-1}]$, and thus we have
$h_g^{e_0} \, h_t \ = \ q \, h_g$.
Since $h_t$ and $h_g$ are regular modulo $\sat{T}$,
the same property holds for $q$.
Together with~(\ref{eqs:tgq}), we obtain $g\in\sqrt{\sat{T\cup t}}$.
Therefore $\sqrt{\sat{T\cup t}}=\sqrt{\sat{T\cup g}}$.
The inclusion relation in $(i)$ follows from~(\ref{eqs:tgq0}). 

We prove $(ii)$. Assume $\mdeg{t}>\mdeg{g}$.
With~(\ref{eqs:tgq0}) and~(\ref{eqs:tgq}), this hypothesis implies
that $T \cup q$ is a regular chain and 
$t\in\sqrt{\sat{T\cup q}}$ holds. 
Since $t\in\sqrt{\sat{T\cup g}}$ also holds, 
$\sqrt{\sat{T\cup t}}$ is contained
in $\sqrt{\sat{T\cup g}} \, \cap \, \sqrt{\sat{T\cup q}}$.
Conversely, 
for any $f \in \sqrt{\sat{T\cup g}} \, \cap \, \sqrt{\sat{T\cup q}}$, 
there exists an integer $e_2\geq 0$ and $a \in {\K}[\x]$ such that 
$(h_gh_q)^{e_2}f^{e_2}-aqg\in\sat{T}$ holds.
With~(\ref{eqs:tgq0}) we deduce that 
$f\in\sqrt{\sat{T\cup t}}$ holds and so does $(ii.a)$.
With~(\ref{eqs:tgq0}), we have $(ii.b)$ holds.

We prove $(iii)$ and $(iv)$. 
Definition~\ref{Definition:regular-gcd} implies: 
$\prem{p, g} \, \in \, \sqrt{\sat{T}}$. Thus 
$p \in \sqrt{\sat{T \cup g}}$ holds, that is, 
$\overline{W(T \cup g)} \, \subseteq \, V(p)$,
which implies $(iii)$. 
Moreover, since $g\in\langle p, t, \sqrt{\sat{T}} \rangle$, 
we have $Z(p, T\cup t) \, \subseteq \, V(g)$, so we deduce $(iv)$.
\end{proof}

Let $p, t$ be two polynomials of $\K[x_1,\ldots,x_k]$, for $k \geq 1$.
Let $m=\deg(p, x_k)$, $n=\mdeg{t, x_k}$.
Assume that $m,n\geq 1$.
Let $\lambda=\min{m,n}$.
Let $T$ be a regular chain of $\K[x_1,\ldots,x_{k-1}]$.
Let $\B=\K[x_1,\ldots,x_{k-1}]$ and $\A=\B / \sqrt{\sat{T}}$.

Let $S_0,\ldots,S_{\lambda-1}$ be the subresulant polynomials~\cite{Mis93, Ducos00} 
of $p$ and $t$ w.r.t. $x_k$ in $\B[x_k]$.
Let $s_i=\coeff{S_i, x_k^{i}}$ be the principle
subresultant coefficient of $S_i$, for $0 \leq i \leq \lambda-1 $.
If $m\geq n$, we define $S_{\lambda}=t$, $S_{\lambda+1}=p$, 
$s_{\lambda}=\init{t}$ and $s_{\lambda+1}=\init{p}$.
If $m<n$, we define $S_{\lambda}=p$, $S_{\lambda+1}=t$, 
$s_{\lambda}=\init{p}$ and $s_{\lambda+1}=\init{t}$.

The following theorem provides sufficient conditions for 
$S_j$ (with $1\leq  j \leq \lambda+1$) 
to be a regular GCD of $p$ and $t$ in $\A[x_k]$.
\begin{Theorem}
\label{Theorem:regulargcd}
Let $j$  be an integer, with $1\leq  j \leq \lambda+1$,
such that $s_j$ is a regular element of $\A$
and such that for any $0\leq i< j$, we have $s_i=0$ in $\A$.
Then $S_j$ is a regular GCD of $p$ and $t$ in $\A[x_k]$.
\end{Theorem}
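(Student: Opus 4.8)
\textsc{Proof plan.}
The plan is to verify the three conditions of Definition~\ref{Definition:regular-gcd} for $g=S_j$, viewed as an element of $\A[x_k]$. Conditions $(R_1)$ and $(R_2)$ are short. For $(R_1)$: since $s_j$ is regular in $\A$ it is in particular nonzero in $\A$, so $s_j\notin\sqrt{\sat T}$ and hence $s_j\neq 0$ already in $\B$; one then checks case by case ($1\le j\le\lambda-1$, $j=\lambda$, $j=\lambda+1$) that the leading coefficient of $S_j$ in $x_k$, over $\B$ and over $\A$, is exactly $s_j$ (for $j\le\lambda-1$ this uses that $S_j$ has $x_k$-degree at most $j$ and $\coeff{S_j,x_k^j}=s_j\neq 0$), which is regular by hypothesis. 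For $(R_2)$: for $0\le i\le\lambda-1$ the subresultant $S_i$ is a $\B[x_k]$-linear combination of $p$ and $t$ by the determinantal formula, while $S_\lambda,S_{\lambda+1}\in\{p,t\}$, so in every case the image of $S_j$ lies in $\langle p,t\rangle\subseteq\A[x_k]$.

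The substance is in $(R_3)$. By the previous paragraph $\deg(S_j,x_k)\ge 1$, so one must show $\prem{p,S_j}=\prem{t,S_j}=0$ in $\A[x_k]$. Since the $x_k$-degree of $S_j$ is unchanged under $\B\to\A$ (the leading coefficient remaining $s_j$), these pseudo-remainders may be computed in $\B[x_k]$ and then reduced, so it is equivalent to show that every coefficient of $\prem{p,S_j}$ and $\prem{t,S_j}$ lies in $\sqrt{\sat T}$. Now $\A$ is reduced, so $\sqrt{\sat T}=\bigcap_\p\p$ over the minimal primes $\p$ of $\sat T$; the regularity of $s_j$ means $s_j\notin\p$ for each such $\p$, whereas $s_i\in\sqrt{\sat T}\subseteq\p$ for $0\le i<j$ (see Proposition~\ref{Proposition:equivalence}). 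Hence it suffices to fix one minimal prime $\p$, pass to the field $\A_\p:=\operatorname{Frac}(\B/\p)$, and prove that the image $\overline{S_j}$ divides both $\overline p$ and $\overline t$ in $\A_\p[x_k]$; letting $\p$ vary then yields $(R_3)$.

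To obtain $\overline{S_j}\mid\overline p$ and $\overline{S_j}\mid\overline t$ over the field $\A_\p$, I would apply the specialization property of subresultants for the ring morphism $\B\to\A_\p$ (reviewed in Appendix~\ref{app:subresultantchain}) to relate $\overline{S_0},\ldots,\overline{S_{\lambda-1}}$ to the subresultant chain of $\overline p,\overline t$ over $\A_\p$, and then invoke the block (``gap'') structure of subresultant chains over a field: the conditions $\overline{s_i}=0$ for $i<j$ together with $\overline{s_j}\neq 0$ pin down $\overline{S_j}$ as, up to a nonzero scalar of $\A_\p$, a greatest common divisor of $\overline p$ and $\overline t$ in $\A_\p[x_k]$ (in the degenerate cases where $\overline p$ or $\overline t$ vanishes, the same conditions force the other polynomial to play that role), so it divides both.

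The main obstacle is precisely this last step. Because $\init p$ and $\init t$ (that is, $s_{\lambda+1}$ and $s_\lambda$, in one order or the other) are not assumed regular modulo $\sqrt{\sat T}$, the images $\overline p$ or $\overline t$ may drop in $x_k$-degree, or even vanish, under $\B\to\A_\p$, so the textbook form of the specialization property does not apply verbatim. One must instead appeal to the corner cases collected in Appendix~\ref{app:subresultantchain} and verify that, under the standing hypotheses on the $s_i$, they still let one read $\gcd(\overline p,\overline t)$ off the $\overline{S_i}$ in every configuration of vanishing $\overline{s_i}$ that can occur.
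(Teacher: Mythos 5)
Your plan is correct and follows essentially the same route as the paper's proof: reduce to $(R_3)$, pass to each prime $\p$ associated with $\sat{T}$ and to the fraction field $\LL$ of $\K[x_1,\ldots,x_{k-1}]/\p$, and invoke the specialization properties of subresultants (Theorem~\ref{Theorem:gcd} in Appendix~\ref{app:subresultantchain}, which is exactly where the paper disposes of the degenerate cases you flag, where $\init{p}$ or $\init{t}$ specializes to zero) to conclude that the image of $S_j$ is a GCD of the images of $p$ and $t$, hence pseudo-divides both modulo every such $\p$ and therefore modulo $\sqrt{\sat{T}}$. Your explicit verification of $(R_1)$ and $(R_2)$ is a harmless addition the paper leaves implicit.
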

\begin{proof}
By  Definition~\ref{Definition:regular-gcd},
 it suffices to 
prove that both $\prem{p, S_j, x_k}=0$ and $\prem{t, S_j, x_k}=0$
hold in $\A$.
By symmetry we only prove the former equality.

Let ${\p}$ be any prime ideal associated with $\sat{T}$.
Define $\DD=\K[x_1,\ldots,x_{k-1}]/\p$ and let $\LL$ be 
the fraction field of the integral domain $\DD$.
Let $\phi$ be the homomorphism from $\B$ to $\LL$.
By Theorem~\ref{Theorem:gcd} of Appendix~\ref{app:subresultantchain}, we know that $\phi(S_j)$
is a GCD of $\phi(p)$ and $\phi(t)$ in $\LL[x_k]$.
Therefore there exists a polynomial $q$ of $\LL[x_k]$
such that $p=qS_j$ in $\LL[x_k]$, which implies that 
there exists a nonzero element $a$ of $\DD$ 
and a polynomial $q'$ of $\DD[x_k]$ such that 
$ap=q'S_j$ in $\DD[x_k]$.
Therefore $\prem{ap, S_j}=0$ in $\DD[x_k]$, 
which implies that $\prem{p, S_j}=0$ in $\DD[x_k]$.
Therefore $\prem{p, S_j}$ belongs to $\p$ and
thus to $\sqrt{\sat{T}}$.
So $\prem{p, S_j, x_k}=0$ in $\A$.
\end{proof}

\section{The incremental algorithm}  %
\label{sec:incremental}

\begin{figure*}
\centering
{\small
\begin{tabular}{|l|c|c|c|c|c||c|c|c|c|c|}
\hline
 & sys                     & \multicolumn{4}{c||}{Input size} & \multicolumn{5}{c|}{Output size}\\\hline
 &  & \#v & \#e & deg & dim& GL & GS & GD & TL & TK \\\hline
1& 4corps-1parameter-homog & 	  4 & 3 & 8 & 1 & -  & -  &21863& - & 30738 \\
2& 8-3-config-Li & 	  12 & 7 & 2 & 7 & 67965 & -  &72698&7538& 1384 \\
3&Alonso-Li & 	  7 & 4 & 4 & 3 & 1270 & -  &614&2050& 374 \\
4&Bezier & 	  5 & 3 & 6 & 2 & -  & -  &32054& - & 114109 \\
5&Cheaters-homotopy-1 & 	  7 & 3 & 7 & 4 & 26387452 & -  &17297& - & 285 \\
7&childDraw-2 & 	  10 & 10 & 2 & 0  & 938846 & -  &157765& - & -  \\
8&Cinquin-Demongeot-3-3 & 	  4 & 3 & 4 & 1 & 1652062 & -  &680&2065& 895 \\
9&Cinquin-Demongeot-3-4 & 	  4 & 3 & 5 & 1 & -  & -  &690& - & 2322 \\
10&collins-jsc02 & 	  5 & 4 & 3 & 1 & -  & -  &28720&2770& 1290 \\
11&f-744 & 	  12 & 12 & 3 & 1 & 102082 & -  &83559&4509& 4510 \\
12&Haas5 & 	  4 & 2 & 10 & 2 & -  & -  &28& - & 548 \\
14&Lichtblau & 	  3 & 2 & 11 & 1 & 6600095 & -  &224647&110332& 5243 \\
16&Liu-Lorenz & 	  5 & 4 & 2 & 1 & 47688 & 123965 &712&2339& 938 \\
17&Mehta2 & 	  11 & 8 & 3 & 3 & -  & -  &1374931&5347& 5097 \\
18&Mehta3 & 	  13 & 10 & 3 & 3 & -  & -  & - &25951& 25537 \\
19&Mehta4 &  	  15 & 12  & 3 & 3 & -  & -  & - &71675& 71239 \\
21&p3p-isosceles & 	  7 & 3 & 3 & 4 & 56701 & -  &1453&9253& 840 \\
22&p3p & 	  8 & 3 & 3 & 5 & 160567 & -  &1768& - & 1712 \\
23&Pavelle & 	  8 & 4 & 2 & 4 & 17990 & -  &1552&3351& 1086 \\
24&Solotareff-4b & 	  5 & 4 & 3 & 1 & 2903124 & -  &14810&2438& 872 \\
25&Wang93 & 	  5 & 4 & 3 & 1 & 2772 & 56383 &1377&1016& 391 \\
26&Xia & 	  6 & 3 & 4 & 3 & 63083 & 2711 &672&1647& 441 \\
27&xy-5-7-2 & 	  6 & 3 & 3 & 3  & 12750 & - &599& - & 3267 \\

\hline
\end{tabular}
}
\newline
{\textbf{Table 1} The input and output sizes of systems}
\end{figure*}

In this section, we present an algorithm to compute 
Lazard-Wu triangular decompositions in an incremental manner. 
We recall the concepts of a {\em process} and a {\em regular (delayed) split}, 
which were introduced as Definitions 9 and 11 
in~\cite{moreno00}. 
To serve our purpose, we modify the definitions as below.

\begin{Definition}
\label{Definition:process}  %
A {\em process} of $\K[\x]$ is a pair $(p, T)$, 
where $p\in\K[\x]$ is a polynomial and 
$T\subset \K[\x]$ is a regular chain.
The process $(0, T)$ is also written as $T$ for short. 
Given two processes $(p, T)$ and $(p', T')$, 
let $v$ and $v'$ be respectively the greatest variable
appearing in $(p, T)$ and $(p', T')$.
We say $(p, T)\prec(p', T')$ if:
$(i)$ either $v<v'$;
$(ii)$ or $v=v'$ and $\dim{T}<\dim{T'}$;
$(iii)$ or $v=v'$, $\dim{T}=\dim{T'}$ and $T\prec T'$;
$(iv)$ or $v=v'$, $\dim{T}=\dim{T'}$, $T\sim T'$ and $p\prec p'$.
We write $(p, T)\sim (p', T')$ if neither $(p, T)\prec(p', T')$
nor $(p', T')\prec(p, T)$ hold.
Clearly any sequence of processes which is strictly decreasing w.r.t. $\prec$
is finite.
\end{Definition}

\begin{Definition}
\label{Definition:split}
Let $T_i$, $1\leq i\leq e$, be regular chains of $\K[\x]$.
Let $p\in\K[\x]$.
We call $T_1,\ldots,T_e$ a {\em regular split} of $(p, T)$ 
whenever we have 
\begin{itemize}
\item[$(L_1)$] $\sqrt{\sat{T}}\subseteq\sqrt{\sat{T_i}}$
\item[$(L_2)$] $W(T_i)\subseteq V(p)$ (or equivalently $p\in\sqrt{\sat{T_i}}$)
\item[$(L_3)$] $V(p)\cap W(T)\subseteq\cup_{i=1}^e W(T_i)$
\end{itemize}
We write as $(p, T)\longrightarrow T_1,\ldots,T_e$.
Observe that the above three conditions are equivalent to 
the following relation.
$$
V(p)\cap W(T)\subseteq W(T_1)\cup\cdots\cup W(T_e)\subseteq V(p)\cap\overline{W(T)}.
$$
Geometrically, this means that 
we may compute a little more than $V(p)\cap W(T)$; 
however,  $W(T_1)\cup\cdots\cup W(T_e)$
is a ``sharp'' approximation of the intersection
of  $V(p)$ and $W(T)$.
\end{Definition}

Next we list the specifications of our triangular decomposition algorithm
and its subroutines.
We denote by $R$ the polynomial ring $\K[\x]$, 
where $\x=x_1<\cdots<x_n$.

\smallskip\noindent{\Triangularize{F, R}}
\begin{itemize}
\item {\bf Input:}
$F$, a finite set of polynomials of $R$
\item {\bf Output:} 
A Lazard-Wu triangular decomposition of $V(F)$.
\end{itemize}
\Intersect{p, T, R}
\begin{itemize}
\item {\bf Input:}
$p$, a polynomial of $R$; 
$T$, a regular chain of $R$
\item {\bf Output:} a set of regular chains $\{T_1,\ldots, T_e\}$ such chat
$(p, T)\longrightarrow T_1,\ldots,T_e$.
\end{itemize}
\Regularize{p, T, R}
\begin{itemize}
\item {\bf Input:}
$p$, a polynomial of $R$;
$T$, a regular chain of $R$.
\item {\bf Output:} a set of pairs $\{[p_1, T_1],\ldots, [p_e, T_e]\}$ such that 
for each $i, 1\leq i\leq e$: 
$(1)$ $T_i$ is a regular chain;
$(2)$ $p=p_i~ {\rm mod}~ \sqrt{\sat{T_i}}$;
$(3)$ if $p_i=0$, then $p_i\in\sqrt{\sat{T_i}}$ 
   otherwise $p_i$ is regular modulo $\sqrt{\sat{T_i}}$; 
moreover we have $T\longrightarrow T_1,\ldots,T_e$.
\end{itemize}
\SubresultantChain{p, q, v, R}
\begin{itemize}
\item {\bf Input:}
$v$, a variable of $\{x_1,\ldots,x_n\}$;
$p$ and $q$, polynomials of $R$, whose main variables are both $v$.
\item {\bf Output:} a list of polynomials $(S_0,\ldots,S_{\lambda})$, 
where $\lambda={\rm min}(\mdeg{p}, \mdeg{q})$, such that 
$S_i$
 is the $i$-{th} subresultant of $p$ and $q$ w.r.t. $v$.
\end{itemize}
\RegularGcd{p, q, v, S, T, R}
\begin{itemize}
\item {\bf Input:} $v$, a variable of $\{x_1,\ldots,x_n\}$,
\begin{itemize}
\item $T$, a regular chain of $R$ such that $\mvar{T}<v$,
\item $p$ and $q$, polynomials of $R$ with the same main variable $v$ such that: 
$\init{q}$ is regular modulo $\sqrt{\sat{T}}$; 
\res{p,q,v} belongs to $\sqrt{\sat{T}}$,
\item $S$, the subresultant chain of $p$ and $q$ w.r.t. $v$.
\end{itemize}
\item {\bf Output:} a set of pairs $\{[g_1, T_1],\ldots,[g_e, T_e]\}$ such that
$T\longrightarrow T_1,\ldots,T_e$ and for each $T_i$: 
if $\dim{T}=\dim{T_i}$, 
then $g_i$ is a regular GCD of $p$ and $q$ modulo $\sqrt{\sat{T_i}}$;
otherwise $g_i=0$, which means undefined.
\end{itemize}
\IntersectFree{p, x_i, C, R}
\begin{itemize}
\item {\bf Input:}
$x_i$, a variable of $\x$; 
$p$, a polynomial of $R$ with main variable $x_i$;
$C$, a regular chain of $\K[x_1,\ldots,x_{i-1}]$.
\item {\bf Output:} a set of regular chains $\{T_1,\ldots, T_e\}$ 
such that $(p, C)\longrightarrow(T_1,\ldots,T_e)$.
\end{itemize}
\IntersectAlgebraic{p, T, x_i, S, C, R}
\begin{itemize}
\item {\bf Input:} $p$, a polynomial of $R$ with main variable $x_i$,
\begin{itemize}
\item $T$, a regular chain of $R$, where $x_i\in\mvar{T}$,
\item $S$, the subresultant chain of $p$ and $T_{x_i}$ w.r.t. $x_i$,
\item $C$, a regular chain of $\K[x_1,\ldots,x_{i-1}]$,
such that: $\init{T_{x_i}}$ is regular modulo $\sqrt{\sat{C}}$; 
the resultant of $p$ and $T_{x_i}$, which is $S_0$, belongs to $\sqrt{\sat{C}}$.
\end{itemize}
\item {\bf Output:}
a set of regular chains $T_1,\ldots,T_e$ such that 
$(p, C\cup T_{x_i})\longrightarrow T_1,\ldots,T_e$.
\end{itemize}
\CleanChain{C, T, x_i, R}
\begin{itemize}
\item {\bf Input:}
$T$, a regular chain of $R$;
$C$, a regular chain of $\K[x_1,\ldots,x_{i-1}]$ such that
$\sqrt{\sat{T_{<x_i}}}\subseteq\sqrt{\sat{C}}$.
\item {\bf Output:}
if $x_i\notin\mvar{T}$, return $C$; 
otherwise return a set of regular chains $\{T_1,\ldots, T_e\}$ such that 
$\init{T_{x_i}}$ is regular modulo each $\sat{T_j}$, 
$\sqrt{\sat{C}}\subseteq \sqrt{\sat{T_j}}$ and
$W(C)\setminus V(\init{T_{x_i}})\subseteq\cup_{j=1}^e W(T_j)$.
\end{itemize}
\Extend{C, T, x_i, R}
\begin{itemize}
\item {\bf Input:}
$C$, is a regular chain of $\K[x_1,\ldots,x_{i-1}]$. 
$T$, a regular chain of $R$ such that $\sqrt{\sat{T_{<x_i}}}\subseteq\sqrt{\sat{C}}$.
\item {\bf Output:}
a set of regular chains $\{T_1,\ldots, T_e\}$ of $R$ such that 
$W(C\cup T_{\geq x_{i}})\subseteq\cup_{j=1}^e W(T_j)$ and $\sqrt{\sat{T}}\subseteq \sqrt{\sat{T_j}}$.
\end{itemize}

Algorithm {\sf SubresultantChain}
is standard, see~\cite{Ducos00}.
The algorithm {\sf Triangularize} is  a {\em principle algorithm}
which was first presented in~\cite{moreno00}.
We use the following conventions in our pseudo-code:
the keyword {\bf return} yields a result 
and terminates the current function call
while the keyword {\bf output} yields a result 
and keeps executing the current function call.

\begin{algorithm}
\linesnumbered
\caption{\Intersect{p, T, R}\label{Algo:Intersect}}
{\bf if} $\prem{p,T}=0$ {\bf then} return $\{T\}$\;
{\bf if} $p\in\K$ {\bf then} return $\{~\}$\;
$r := p$; $P := \{r\}$; $S := \{~\}$\; 
\While{$\mvar{r} \in \mvar{T}$}{
   $v := \mvar{r}$; $src := \SubresultantChain{r, T{_v}, v, R}$\;
   $S := S\cup\{src\}$; $r := \resultant{src}$\;
   {\bf if} $r=0$ {\bf then} break\;
   {\bf if} $r \in \K$ {\bf then} return $\{~ \}$\;
   $P := P\cup\{r\}$
}
$\T := \{\varnothing\}$; $\T' := \{~\}$; $i := 1$\;
\While{$i\leq n$}{
\For{$C\in\T$}{
   \uIf{$x_i \notin \mvar{P}$ {\bf and} $x_i \notin \mvar{T}$}{
        $\T' := \T'\cup\CleanChain{C, T, x_{i+1}, R}$
   }
   \uElseIf{$x_i \notin \mvar{P}$}{
            $\T' := \T'\cup\CleanChain{C\cup T_{x_i}, T, x_{i+1}, R}$
   }
   \uElseIf{$x_i\notin\mvar{T}$}{
       \For{$D \in \IntersectFree{P_{x_i}, x_i, C, R}$}{
            $\T' := \T'\cup\CleanChain{D, T, x_{i+1}, R}$
       }
   }
   \Else{
       {\small
       \For{$D\in\IntersectAlgebraic{P_{x_i}, T, x_i, S_{x_i}, C, R}$}{
            $\T' := \T'\cup\CleanChain{D, T, x_{i+1}, R}$
       }
       }
   }
}
   $\T := \T'$; $\T' := \{~\}$; $i := i+1$
}
   return $\T$
\end{algorithm}
\begin{algorithm}
\caption{\RegularGcd{p, q, v, S, T, R}}
    $\T := \{(T, 1)\}$\; 
    \While{$\T\neq\emptyset$}{
        let $(C, i)\in \T$; $\T := \T\setminus\{(C, i)\}$\;
        \For{$[f, D]\in\Regularize{s_i, C, R}$}{
             \lIf{$\dim{D}<\dim{C}$}{
                  output $[0, D]$
             }\;
             \lElseIf{$f=0$}{
                       $\T := \T\cup\{(D, i+1)\}$
             }\;
             \lElse{
                   output $[S_i, D]$
             }
        }
     }
\end{algorithm}
\begin{algorithm}
\linesnumbered
\caption{\IntersectFree{p, x_i, C, R}\label{Algo:IntersectFree}}
    \For{$[f, D] \in \Regularize{\init{p}, C, R}$}{
         \lIf{$f=0$}{
              output $\Intersect{\tail{p}, D, R}$
            }\;
          \Else{
              output $D\cup p$\;
              \For{$E\in\Intersect{\init{p}, D, R}$}{
                   output $\Intersect{\tail{p}, E, R}$
              }
         }
    }
\end{algorithm}
\begin{algorithm}
\linesnumbered
\caption{\IntersectAlgebraic{p, T, x_i, S, C, R}\label{Algo:IntersectAlgebraic}}
    \For{$[g, D]\in\RegularGcd{p, T_{x_i}, x_i, S, C, R}$}{
        \eIf{$\dim{D}<\dim{C}$}{
             \For{$E\in\CleanChain{D, T, x_{i}, R}$}{
                  output $\IntersectAlgebraic{p, T, x_i, S, E, R}$ 
             }
           }{
             output $D\cup g$\;
             \For{$E \in \Intersect{\init{g}, D, R}$}{
                  \For{$F\in\CleanChain{E, T, x_{i}, R}$}{
                       output $\IntersectAlgebraic{p, T, x_i, S, F, R}$ 
                  }
             }  
        }
    }
\end{algorithm}
\begin{algorithm}
\linesnumbered
\caption{\Regularize{p, T, R}}\label{Regularize}
{\bf if} $p\in\K$ {\bf or} $T=\varnothing$ {\bf then} return $[p, T]$\;
$v := \mvar{p}$\;
\eIf{$v\notin \mvar{T}$}{
     \For{$[f, C] \in\Regularize{\init{p}, T, R}$}{
          \lIf{$f=0$}{
               output $\Regularize{\tail{p}, C, R}$\;
          }\;
          \lElse
          {
               output $[p, C]$\;
          }
     }
}{

     $src := \SubresultantChain{p, T_v, v, R}$; $r:=\resultant{src}$\;
     \For{$[f, C]\in\Regularize{r, T_{<v}, R}$}{
          \uIf{$\dim{C}<\dim{T_{<v}}$}{
              \For{$D\in\Extend{C, T,v, R}$}{
                   output $\Regularize{p, D, R}$
              }
          }
          \lElseIf{$f\neq 0$}{
                   output $[p, C\cup T_{\geq v}]$
          }\;
          \Else{
              \For{$[g, D]\in\RegularGcd{p, T_v, v, src, C, R}$}{
                  \eIf{$\dim{D}<\dim{C}$}{
                       \For{$E\in\Extend{D, T, v, R}$}{
                            output $\Regularize{p, E, R}$\;
                       }
                     }{
                       {\bf if} $\mdeg{g}=\mdeg{T_v}$ {\bf then} output $[0, D\cup T_{\geq v}]$; next\;
                       output $[0, D\cup g\cup T_{>v}]$\;
                       $q := \pquo{T_v, g}$\;
                       output $\Regularize{p, D\cup q\cup T_{>v}, R}$\;
                       \For{$E\in\Intersect{h_g, D, R}$}{
                            \For{$F\in\Extend{E, T,v, R}$}{
                                 output $\Regularize{p, F, R}$
                            }
                       }
                  }
              }
          }
     }
}
\end{algorithm}
\begin{algorithm}
\linesnumbered
\caption{\Extend{C, T, x_i, R}}
    {\bf if} $T_{\geq x_i}=\varnothing$ {\bf then} return $C$\;
    let $p\in T$ with greatest main variable; $T':=T\setminus\{p\}$\;
    \For{$D\in\Extend{C, T', x_i, R}$}{
         \For{$[f, E]\in\Regularize{\init{p}, D}$}{
              {\bf if} $f\neq 0$ {\bf then} output $E\cup p$\;

         }            
    }
\end{algorithm}
\begin{algorithm}
\linesnumbered
\caption{\CleanChain{C, T, x_i, R}}
    {\bf if} $x_i\notin\mvar{T}$ {\bf or} $\dim{C}=\dim{T_{<x_i}}$ {\bf then} return $C$\;
    \For{$[f, D] \in \Regularize{\init{T_{x_i}}, C, R}$}{
         {\bf if} $f\neq 0$ {\bf then} output $D$
    }
\end{algorithm}
\begin{algorithm}
\caption{\Triangularize{F, R}\label{Algo:Triangularize}}
{\bf if} $F=\{~\}$ {\bf then} return $\{\varnothing\}$\;
Choose a polynomial $p\in F$ with maximal rank\;
\For{$T\in\Triangularize{F\setminus\{p\}, R}$}{
     output $\Intersect{p, T, R}$
}
\end{algorithm}

\section{Proof of the algorithms}  %
\label{Section:proof}
\begin{figure}
{
\centering
\includegraphics[width=0.45\textwidth]{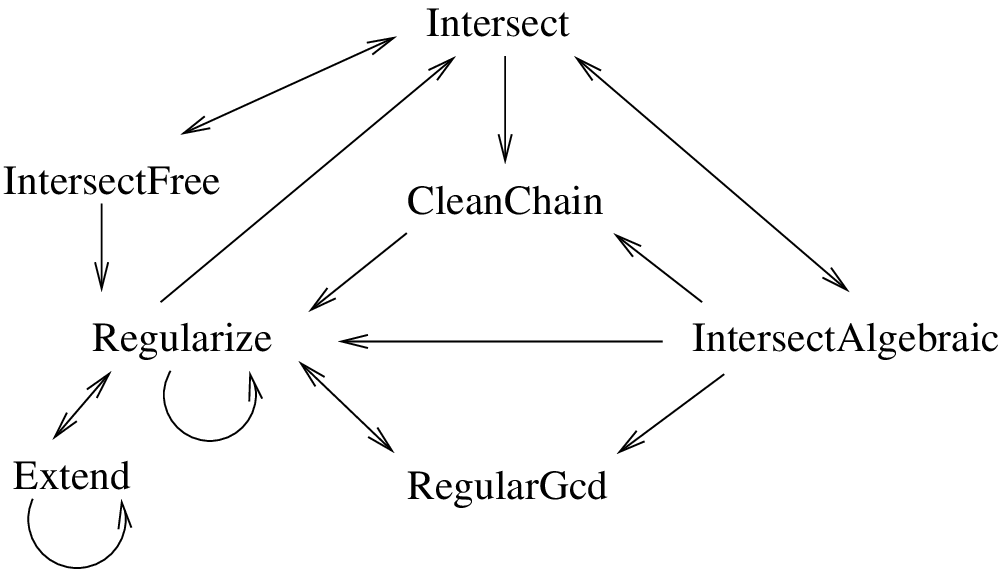}
\label{figure:algorithm}
\caption{Flow graph of the Algorithms}
}
\end{figure}

\begin{Theorem}
All the algorithms in Fig. 1 terminate.
\end{Theorem}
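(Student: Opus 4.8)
The plan is to prove termination by \emph{well-founded induction} on a measure attached to each procedure call. Two procedures are disposed of apart from the rest: termination of \textsf{SubresultantChain} is classical~\cite{Ducos00}, and \textsf{Triangularize}$(F,R)$ is handled by induction on the cardinality of $F$, since its body recurses once on the strictly smaller set $F\setminus\{p\}$ and then issues one \textsf{Intersect} call per element of the finite result. Everything thus reduces to the seven mutually recursive procedures \textsf{Intersect}, \textsf{Regularize}, \textsf{RegularGcd}, \textsf{IntersectFree}, \textsf{IntersectAlgebraic}, \textsf{CleanChain} and \textsf{Extend}, which I would treat by one simultaneous well-founded induction.

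To each call of one of these procedures I attach a lexicographically ordered tuple $\mu$. Its leading part is the \emph{process} naturally carried by the call — the pair consisting of the polynomial being processed (or $0$ when there is none) and the regular chain being refined — compared with the order $\prec$ of Definition~\ref{Definition:process}, which is well-founded and records, in that order, the greatest variable occurring, the dimension of the chain, its rank, and finally the polynomial. The trailing part of $\mu$ is a finite bookkeeping entry that (i) breaks ties between the seven procedures by a fixed precedence, (ii) for \textsf{RegularGcd} records $\lambda+1-i$, where $i$ is the running index of the main \textbf{while} loop, and (iii) for \textsf{Extend} records the number of polynomials of $T$. The induction step is to check, procedure by procedure, that every recursive call in a body has $\mu$ strictly below the caller's, and that inside a single body every \textbf{while} loop turns only finitely many times while every \textbf{for} loop ranges over the finite output of an already-terminating sub-call; since $\mu$ is then valued in a well-ordered set, this yields termination.

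Most of this verification is routine. A resultant or iterated resultant with respect to a variable $v$ involves only variables $<v$, so the sub-calls \textsf{Regularize}$(r,T_{<v},R)$, \textsf{Regularize}$(\init{p},\cdot,R)$, \textsf{Intersect}$(\init{p},\cdot,R)$, \textsf{Intersect}$(\init{g},\cdot,R)$ and \textsf{Regularize}$(s_i,C,R)$ all lower the greatest variable of the process. In \textsf{Intersect} the first \textbf{while} loop replaces $r$ by $\res{r,T_v,v}$, whose main variable is strictly below $\mvar{r}$, so it runs at most $n$ times. In \textsf{RegularGcd} the \textbf{while} loop only increments $i$, and the hypothesis that $\init{q}$ is regular modulo $\sqrt{\sat{T}}$ — hence, by Proposition~\ref{Proposition:equal-dim}, modulo $\sqrt{\sat{D}}$ whenever $\dim{D}=\dim{C}$ — forbids $i$ from passing $\lambda+1$. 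The self-call of \textsf{Extend} is on $T\setminus\{p\}$, which strictly shrinks the bookkeeping entry. And the self-calls of \textsf{IntersectAlgebraic}, like the calls \textsf{Intersect}$(\init{g},D,R)$ and \textsf{Intersect}$(h_g,D,R)$ that precede them, take place on a regular chain of strictly smaller dimension — either because the regular-GCD step already dropped the dimension, or because one has first intersected with $V(\init{g})$ where $\init{g}$ is regular modulo the current chain. All the \textbf{for} loops run over finite sets returned by sub-calls covered by the induction hypothesis.

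The one genuinely delicate point, which I would single out as the main obstacle, is the last branch of \textsf{Regularize} together with its interaction with \textsf{Extend}: there the process may keep both the same greatest variable and the same dimension, and one must still exhibit a strict decrease in the \emph{rank}. When $\mdeg{g}<\mdeg{T_v}$, the recursion \textsf{Regularize}$(p,D\cup q\cup T_{>v},R)$ with $q=\pquo{T_v,g}$ is controlled by Proposition~\ref{Proposition:regular-gcd}(ii), which guarantees that $T\cup q$ is again a regular chain; since the regular GCD returned by \textsf{RegularGcd} is a subresultant $S_i$ with $i\geq 1$ and regular principal coefficient, one has $\mdeg{g}\geq 1$, so the polynomial of main variable $v$ in $D\cup q\cup T_{>v}$ has main degree $\mdeg{T_v}-\mdeg{g}<\mdeg{T_v}$, and the rank of the chain strictly decreases while its greatest variable and dimension are unchanged. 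For the ensuing calls \textsf{Regularize}$(p,F,R)$ with $F$ produced by \textsf{Extend} (and, dually, for the \textsf{CleanChain} and \textsf{Extend} steps elsewhere) one needs the uniform fact that the regular chains returned by all seven procedures are \emph{refinements} of their input chain, so that their dimension never increases and, at equal dimension, their rank never increases; this is precisely what the ``$\longrightarrow$'' relations in the specifications, together with Proposition~\ref{Proposition:equal-dim}, express. Pinning this uniform claim down rigorously amounts to proving the output specifications and termination by a single simultaneous induction, and that interleaving is the part that demands care, the manipulation of $\mu$ itself being bookkeeping.
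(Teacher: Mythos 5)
Your proposal is correct and takes essentially the same route as the paper: the paper likewise reduces termination to the mutually recursive pair \textsf{Intersect}/\textsf{Regularize} (the remaining procedures being inlined, their internal recursions bounded) and then argues that every nested call receives a process strictly smaller for the well-founded order $\prec$ of Definition~\ref{Definition:process}, which is exactly the leading component of your measure $\mu$. Your tie-breaking bookkeeping and your discussion of the equal-dimension GCD branch merely make explicit what the paper's ``one can check'' glosses over; the only slip is minor --- for calls such as $\Regularize{\init{p}, T, R}$ the chain $T$ may retain variables above $\mvar{p}$, so the greatest variable need not drop and the decrease occurs instead in the polynomial component of $\prec$ --- and it does not affect the argument.
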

\begin{proof}
The key observation is  that 
the flow graph of  Fig. 1 can be transformed 
into an equivalent flow graph satisfying the following properties:
(1) the algorithms {\sf Intersect} and {\sf Regularize}
only call each other or themselves;
(2) all the other algorithms only call either 
  {\sf Intersect} or {\sf Regularize}.
Therefore, it suffices to show that
{\sf Intersect} and {\sf Regularize} terminate.

Note that the input of both functions is a process, say $(p, T)$.
One can check that, while executing a call with $(p, T)$ as input,
any subsequent call to either functions 
{\sf Intersect} or {\sf Regularize} will take 
a process $(p', T')$ as input such that $(p', T') \prec (p, T)$ holds.
Since a descending chain of processes is necessarily finite,  
both algorithms terminate.
\end{proof}

Since all algorithms terminate, and following the
flow graph of Fig. 1, each call to one of our algorithms
unfold to a finite dynamic acyclic graph (DAG) where
each vertex is a call to one of our algorithms.
Therefore, proving the
correctness of these algorithms
reduces to prove the following two points.
\begin{itemize}
\item {\em Base:} each algorithm call, 
which makes no subsequent calls to another algorithm or to itself,
is correct.
\item {\em Induction:} each algorithm call, 
which makes subsequent calls to another algorithm or to itself,
is correct, as soon as all subsequent calls are themselves correct.
\end{itemize}
For all algorithms in  Fig. 1, proving the base cases is straightforward.
Hence we focus on the induction steps.

\begin{Proposition}
\label{Proposition:IntersectFree}
${\sf IntersectFree}$ satisfies its specification.
\end{Proposition}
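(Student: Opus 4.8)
The plan is to verify that the output of \IntersectFree{p, x_i, C, R} satisfies the required relation $(p, C) \longrightarrow T_1, \ldots, T_e$, i.e. the three conditions $(L_1), (L_2), (L_3)$ of Definition~\ref{Definition:split}, under the inductive hypothesis that every subsequent call (to \Regularize\ and to \Intersect) is correct. First I would set up notation: write $h = \init{p}$ and $\tail{p}$ for the tail, so that $p = h\,x_i^{d} + \tail{p}$ with $d = \mdeg{p} \geq 1$; note $\mvar{h}, \mvar{\tail{p}} < x_i$, and $C$ is a regular chain in $\K[x_1,\ldots,x_{i-1}]$, so $x_i \notin \mvar{C}$. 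The algorithm first calls $\Regularize{\init{p}, C, R}$, which by hypothesis returns pairs $[f, D]$ with $C \longrightarrow \{D\}$ (over all pairs), $f \equiv h \bmod \sqrt{\sat{D}}$, and $f$ either zero modulo $\sqrt{\sat{D}}$ or regular modulo $\sqrt{\sat{D}}$.

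Next I would split into the two branches and check correctness of each, then assemble. In the branch $f = 0$: here $h \in \sqrt{\sat{D}}$, so on $W(D)$ the polynomial $p$ reduces to $\tail{p}$; more precisely $\overline{W(D)} = V(\sat{D})$ and $V(p) \cap \overline{W(D)} = V(\tail{p}) \cap \overline{W(D)}$, hence $V(p)\cap W(D) = V(\tail{p})\cap W(D)$, and the call $\Intersect{\tail{p}, D, R}$ correctly produces regular chains realizing $(\tail{p}, D) \longrightarrow \cdots$, which by the above set identity also realizes $(p, D) \longrightarrow \cdots$. Combined with $C \longrightarrow \{D\}$ (which gives $\sqrt{\sat{C}} \subseteq \sqrt{\sat{D}}$ and $W(C) \subseteq \bigcup W(D)$, and by the nesting inequality of Definition~\ref{Definition:split} also $W(D) \subseteq \overline{W(C)}$), transitivity of $\longrightarrow$ yields the contribution of this branch to $(p, C)\longrightarrow\cdots$. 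In the branch $f \neq 0$: $h$ is regular modulo $\sqrt{\sat{D}}$, so by Corollary~\ref{Corollary:regular}$(2)$ (since $x_i = \mvar{p} \notin \mvar{D}$ and $\init{p}$ is regular w.r.t. $\sat{D}$) the set $D \cup p$ is a regular chain, and moreover $\sat{D\cup p} = \sat{D} + \ideal{p}$-type reasoning gives $W(D\cup p) = W(D) \cap V(p) \setminus V(h)$ and $\overline{W(D\cup p)} = V(\sat{D\cup p})$. The remaining points of $V(p)\cap W(D)$ are those where $h = 0$; these are covered by first running $\Intersect{\init{p}, D, R}$ to get regular chains $E$ with $(h, D)\longrightarrow \cdots$, and then $\Intersect{\tail{p}, E, R}$ on each, using again that on $V(h)$ we have $V(p) = V(\tail{p})$. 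So $V(p)\cap W(D) \subseteq W(D\cup p) \cup \bigcup (\text{outputs of the inner calls})$, and $(L_1), (L_2)$ hold for each output chain by construction and by $(iii)$-type inclusions; combine with $C\longrightarrow\{D\}$ via transitivity.

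The key lemmas I would invoke are: transitivity of the relation $\longrightarrow$ (if $(p,T)\longrightarrow T_1,\ldots,T_e$ and each $T_j$ arises from $T\longrightarrow\{T_j\}$, the composition works — this follows formally from the $(L_1)$–$(L_3)$ conditions plus the ``sharp approximation'' sandwich, using $\sqrt{\sat{T_j}}\subseteq\sqrt{\sat{T'}}$ to propagate inclusions); the decomposition $V(p) = (V(p)\cap V(h)) \cup (V(p)\setminus V(h))$ and on each piece replacing $p$ by $\tail{p}$; and Corollary~\ref{Corollary:regular}$(2)$ to guarantee $D\cup p$ is a regular chain with $\init{p}$ regular modulo its saturated ideal. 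I expect the main obstacle to be the bookkeeping for $(L_3)$ across the three nested calls in the $f\neq 0$ branch — one must argue that every point of $Z(p, C)$ lands in one of the three families (the ``generic'' chain $D\cup p$, or the branches where $\init{p}$ vanishes, or the branches where we already descended via \Regularize\ with $f=0$), and that no point is lost when passing between $W$ and $\overline{W}$; the cleanest way is to prove set-theoretic equalities like $V(p)\cap W(D) = \big(W(D\cup p)\big) \cup \big(V(\tail{p})\cap V(\init{p})\cap W(D)\big)$ and then apply the inductive correctness of \Intersect\ to the second term. The right-hand containment $\bigcup W(T_j) \subseteq V(p)\cap\overline{W(C)}$ is comparatively easy: each $T_j$ satisfies $p\in\sqrt{\sat{T_j}}$ (so $W(T_j)\subseteq V(p)$) and $\sqrt{\sat{C}}\subseteq\sqrt{\sat{T_j}}$ (so $W(T_j)\subseteq V(\sat{C}) = \overline{W(C)}$), both of which are immediate from $(L_1),(L_2)$ for the subsidiary calls and the regular-chain construction.
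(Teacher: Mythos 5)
Your proof is correct and follows essentially the same route as the paper: the two key facts are that \Regularize{\init{p}, C, R} yields $C\longrightarrow D_1,\ldots,D_s$, and the set identity $V(p)\cap W(D)=W(D\cup p)\cup\bigl(V(\init{p},\tail{p})\cap W(D)\bigr)$ (with $p$ replaced by $\tail{p}$ on $V(\init{p})$), after which the two branches and transitivity of $\longrightarrow$ give $(L_1)$--$(L_3)$. One small nitpick: the fact that $D\cup p$ is a regular chain follows directly from the definition of a regular chain (main variable $x_i\notin\mvar{D}$ and $\init{p}$ regular w.r.t. $\sat{D}$), not from Corollary~\ref{Corollary:regular}$(2)$, which concerns regularity of $p$ itself modulo $\sat{D}$.
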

\begin{proof}
We have the following two key observations:
\begin{itemize}
\item $C\longrightarrow D_1,\ldots,D_s$, where $D_i$
are the regular chains in the output of ${\sf Regularize}$.
\item $V(p)\cap W(D)=W(D, p)\cup V(\init{p}, \tail{p})\cap W(D).$
\end{itemize}
Then it is not hard to conclude that $(p, C)\longrightarrow T_1,\ldots,T_e$.
\end{proof}

\begin{Proposition}
\label{Proposition:IntersectAlgebraic}
${\sf IntersectAlgebraic}$ is correct.
\end{Proposition}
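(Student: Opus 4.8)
The plan is to trace the recursion of \IntersectAlgebraic{p, T, x_i, S, C, R} using the induction hypothesis that all subsequent calls (to \RegularGcd{}, \Intersect{}, \CleanChain{}, and \IntersectAlgebraic{} itself on smaller processes) are correct, and then assemble the geometric identities so that the output $T_1,\ldots,T_e$ satisfies $(p, C\cup T_{x_i})\longrightarrow T_1,\ldots,T_e$. First I would fix notation: by the specification of \RegularGcd{}, the loop produces pairs $[g, D]$ with $C \longrightarrow D_1,\ldots,D_s$, and for each $D$ with $\dim{D}=\dim{C}$ we have that $g$ is a regular GCD of $p$ and $T_{x_i}$ modulo $\sqrt{\sat{D}}$, while for $\dim{D}<\dim{C}$ we have $g=0$ (undefined). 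Note the precondition of \IntersectAlgebraic{} transfers correctly into the precondition of \RegularGcd{}: $\init{T_{x_i}}$ is regular modulo $\sqrt{\sat{C}}$ and $\res{p, T_{x_i}, x_i}=S_0 \in \sqrt{\sat{C}}$, with $\mvar{C}<x_i$.

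The core of the argument is the case split inside the loop. In the branch $\dim{D}=\dim{C}$, we set $g$ to be the regular GCD and output $D\cup g$ together with recursive contributions. Here I would invoke Proposition~\ref{Proposition:regular-gcd}: since $\init{T_{x_i}}$ is regular modulo $\sqrt{\sat{D}}$ (using $\sqrt{\sat{C}}\subseteq\sqrt{\sat{D}}$ and that $\init{T_{x_i}}$ is regular modulo $\sqrt{\sat{C}}$, via Proposition~\ref{Proposition:equal-dim} using the equal-dimension hypothesis), $T\cup T_{x_i}$ localized at $D$ behaves well, and parts $(iii)$ and $(iv)$ of that proposition give $W(D\cup g)\subseteq V(p)$ and $Z(p, D\cup T_{x_i}) \subseteq W(D\cup g)\cup Z(\{p,h_g\}, D\cup T_{x_i})$. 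The residual piece $Z(\{p, h_g\}, D\cup T_{x_i})$ is covered by first calling \Intersect{\init{g}, D, R} (which by induction gives $(\init{g}, D)\longrightarrow E_1,\ldots$), then \CleanChain{E, T, x_i, R} to restore regularity of $\init{T_{x_i}}$, then recursing into \IntersectAlgebraic{} on each resulting $F$; here I must check that $V(h_g)\cap W(D\cup T_{x_i})$, after noting $h_g$ divides a power of $\init{g}$ times lower stuff, is absorbed --- this is where \CleanChain{} and the recursive \IntersectAlgebraic{} call on a strictly smaller process close the gap. In the branch $\dim{D}<\dim{C}$, $g$ is undefined, so instead we run \CleanChain{D, T, x_i, R} and recurse; here I would use the \CleanChain{} specification ($\sqrt{\sat{D}}\subseteq\sqrt{\sat{E}}$, $\init{T_{x_i}}$ regular modulo $\sat{E}$, and $W(D)\setminus V(\init{T_{x_i}})\subseteq\cup W(E)$) to see that the recursive \IntersectAlgebraic{} calls are legitimate and that no points of $V(p)\cap W(D\cup T_{x_i})$ are lost, since any lost point lies on $V(\init{T_{x_i}})\cap W(D\cup T_{x_i})$, which is empty because $\init{T_{x_i}}$ is regular modulo $\sat{T}$ (a defining property of the regular chain $T$), hence nonzero on $W(T_{x_i})$ over $W(D)$ --- more precisely, $V(\init{T_{x_i}})\cap W(T) = \varnothing$.

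To conclude I would verify the three conditions $(L_1)$, $(L_2)$, $(L_3)$ of Definition~\ref{Definition:split} for the process $(p, C\cup T_{x_i})$. For $(L_1)$, $\sqrt{\sat{C\cup T_{x_i}}}\subseteq\sqrt{\sat{T_j}}$ follows by chaining the inclusions $\sqrt{\sat{C}}\subseteq\sqrt{\sat{D}}$ from \RegularGcd{}, then applying Corollary~\ref{Corollary:sqrtsat} to adjoin $T_{x_i}$, then the inclusions from \CleanChain{}, \Intersect{}, and the recursive \IntersectAlgebraic{}. For $(L_2)$, $W(T_j)\subseteq V(p)$ comes from part $(iii)$ of Proposition~\ref{Proposition:regular-gcd} in the direct output $D\cup g$, and from the induction hypothesis on the recursive branches. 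For $(L_3)$, I would combine part $(iv)$ of Proposition~\ref{Proposition:regular-gcd}, the covering property of \CleanChain{}, and the inductive covering property of the sub-calls, pushing the "missing" locus $V(\init{g})\cup V(h_g)$ through \Intersect{} and then recursively through \IntersectAlgebraic{}; termination of this descent is guaranteed because each recursive \IntersectAlgebraic{} call takes a strictly smaller process (the input $C$ is replaced by objects of strictly smaller dimension or smaller rank). The main obstacle I anticipate is the bookkeeping in the first branch: showing that after outputting $D\cup g$, the leftover set $Z(\{p, h_g\}, D\cup T_{x_i})$ is exactly recaptured by the composite \Intersect{\init{g},\cdot}\,/\,\CleanChain{}\,/\,recursive \IntersectAlgebraic{} pipeline --- in particular that splitting on $\init{g}$ (rather than directly on $h_g$) suffices, and that the \CleanChain{} step does not discard any point of $V(p)\cap W(D\cup T_{x_i})$ lying over $V(\init{T_{x_i}})$, which is vacuous precisely because $\init{T_{x_i}}$ is regular modulo $\sat{T}$.
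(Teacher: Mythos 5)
Your argument is correct and follows essentially the same route as the paper's proof: it verifies $(L_1)$--$(L_3)$ of Definition~\ref{Definition:split} by chaining the specifications of the subcalls with Corollary~\ref{Corollary:sqrtsat} and points $(iii)$--$(iv)$ of Proposition~\ref{Proposition:regular-gcd}, the residual loci being absorbed by the induction hypothesis on the recursive calls. One minor remark: the emptiness of $V(\init{T_{x_i}})\cap W(D\cup T_{x_i})$ follows directly from the definition of a quasi-component (since $\init{T_{x_i}}$ divides the product of initials), so your appeal to regularity of $\init{T_{x_i}}$ modulo $\sat{T}$ there is unnecessary.
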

\begin{proof}
We need to prove: $(p, C\cup T_{x_i})\longrightarrow T_1,\ldots,T_e$.
Let us prove $(L_1)$ now,
that is, for each regular chain $T_j$ in the output, we have
$\sqrt{\sat{C\cup T_{x_i}}}\subseteq \sqrt{\sat{T_j}}$.
First by the specifications of the called functions,
we have $\sqrt{\sat{C}}\subseteq\sqrt{\sat{D}}\subseteq \sqrt{\sat{E}}$, 
thus,
$\sqrt{\sat{C\cup T_{x_i}}}\subseteq\sqrt{\sat{E\cup T_{x_i}}}$
by Corollary~\ref{Corollary:sqrtsat}, since
$\init{T_{x_i}}$ is regular modulo both $\sat{C}$ and $\sat{E}$.
Secondly, since $g$ is a regular GCD of $p$ and $T_{x_i}$ modulo $\sqrt{\sat{D}}$,
we have $\sqrt{\sat{C\cup T_{x_i}}}\subseteq\sqrt{\sat{D\cup g}}$ by 
Corollaries~\ref{Corollary:sqrtsat} and Proposition~\ref{Proposition:regular-gcd}.

Next we prove $(L_2)$.
It is enough to prove that $W(D\cup g)\subseteq V(p)$ holds. 
Since $g$ is a regular GCD of $p$ and $T_{x_i}$ modulo 
$\sqrt{\sat{D}}$, 
the conclusion follows from point $(iii)$ of 
Proposition~\ref{Proposition:regular-gcd}.

Finally we prove $(L_3)$, that is 
$Z(p, C\cup T_{x_i})\subseteq\bigcup_{j=1}^e W(T_j)$.
Let $D_1,\ldots,D_s$ be the regular chains returned 
from Algorithm ${\sf RegularGcd}$.
We have $C\longrightarrow D_1,\ldots,D_s$,
which implies $Z(p, C\cup T_{x_i})\subseteq\cup_{j=1}^e  Z(p, D_j\cup T_{x_i})$.
Next since $g$ is a regular GCD of $p$ and $T_{x_i}$ modulo $\sqrt{\sat{D_j}}$, 
the conclusion follows from point $(iv)$ of 
Proposition~\ref{Proposition:regular-gcd}.
\end{proof}

\begin{Proposition}
\label{Proposition:Intersect}
${\sf Intersect}$ satisfies its
specification.
\end{Proposition}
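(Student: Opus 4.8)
The plan is to prove that \Intersect{p, T, R}\ satisfies its specification, namely that $(p,T)\longrightarrow T_1,\ldots,T_e$ for the output regular chains $T_1,\ldots,T_e$, by following the two-phase structure of Algorithm~\ref{Algo:Intersect}: the ``elimination phase'' (the first \textbf{while} loop, which builds the list $P$ of iterated-resultant polynomials and the list $S$ of subresultant chains) and the ``extension phase'' (the second \textbf{while} loop over the variables $x_1,\ldots,x_n$, which dispatches to \IntersectFree, \IntersectAlgebraic\ and \CleanChain). First I would dispose of the two opening base cases: if $\prem{p,T}=0$ then $p\in\sat{T}\subseteq\sqrt{\sat{T}}$ by Proposition~\ref{Proposition:prem}, so $W(T)\subseteq V(p)$ and returning $\{T\}$ is correct; if $p$ is a nonzero constant then $V(p)\cap W(T)=\varnothing$ and returning $\{\}$ is correct.

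Next I would analyze the elimination phase. The loop invariant I want is: after processing, $P=\{r_0=p, r_1,\ldots\}$ where each $r_{j+1}=\iter{r_j, T_{v_j}, v_j}$ for $v_j=\mvar{r_j}\in\mvar{T}$, and $S$ collects the corresponding subresultant chains. The key geometric fact, proved by repeated use of the specialization property of resultants (Proposition~\ref{Proposition:equivalence} / Corollary~\ref{Corollary:regular} and the definition of \res{}), is that on $W(T)$ the vanishing of $p$ is equivalent to the simultaneous vanishing of all polynomials in $P$ whose main variable is \emph{not} in $\mvar{T}$, together with the algebraic constraints already in $T$; more precisely $V(p)\cap W(T)=V(P)\cap W(T)$, and the last element $r$ of $P$ either is $0$ (the \textbf{break} case, meaning $p$ is a zerodivisor and we keep all of $T$'s constraints) or has main variable free w.r.t. $T$ or is a nonzero constant (the early \textbf{return} $\{\}$ case). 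If $r\in\K\setminus\{0\}$ then $\iter{p,T}\neq 0$, so by Proposition~\ref{Proposition:equivalence} $p$ is regular modulo $\sat{T}$, hence $V(p)\cap W(T)=\varnothing$ and the empty output is correct.

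Then I would turn to the extension phase, which I would handle by induction on $i$ with the invariant: after the iteration for $x_i$, the set $\T$ consists of regular chains $C\subseteq\K[x_1,\ldots,x_i]$ such that $V(P)\cap W(T)\subseteq\bigcup_{C\in\T}W(C\cup T_{>x_i})\subseteq V(P)\cap\overline{W(T)}$, and each $C$ satisfies $\sqrt{\sat{T_{\le x_i}}}\subseteq\sqrt{\sat{C}}$ with $\init{T_{x_{i+1}}}$ regular modulo $\sat{C}$ (the latter being exactly what \CleanChain\ guarantees, via its own specification already proved). The four branches of the \textbf{for} loop correspond to: $x_i$ free for both $P$ and $T$ (nothing to intersect, just clean); $x_i$ free for $P$ but algebraic for $T$ (adjoin $T_{x_i}$, then clean); $x_i$ algebraic for $P$ but free for $T$ (call \IntersectFree\ on $P_{x_i}$, using Proposition~\ref{Proposition:IntersectFree}, then clean); and $x_i$ algebraic for both (call \IntersectAlgebraic, using Proposition~\ref{Proposition:IntersectAlgebraic}, then clean). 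In each branch I would invoke the already-established specification of the called subroutine together with the transitivity of $\longrightarrow$ (composition of regular splits is a regular split, which follows directly from conditions $(L_1)$--$(L_3)$ and the fact that $\overline{W(C')}\subseteq\overline{W(C)}$ when $\sqrt{\sat{C}}\subseteq\sqrt{\sat{C'}}$) to propagate the invariant; crucially, because $S_{x_i}$ is the subresultant chain computed and stored during the elimination phase, the precondition of \IntersectAlgebraic\ that $S_0=\iter{p,T_{x_i},x_i}$ lies in $\sqrt{\sat{C}}$ is met precisely because the corresponding resultant was pushed into $P$ and cleared to $0$ on $C$ by the earlier branches—this is the ``recycling'' that makes the extension phase free. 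After the loop terminates at $i=n+1$, $T_{>x_n}=\varnothing$ so the invariant reads $V(P)\cap W(T)\subseteq\bigcup W(T_j)\subseteq V(P)\cap\overline{W(T)}$; combined with $V(p)\cap W(T)=V(P)\cap W(T)$ and $V(P)\subseteq V(p)$ (since $p=r_0\in P$), this gives exactly $(p,T)\longrightarrow T_1,\ldots,T_e$.

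The main obstacle I anticipate is not any single deep fact but the bookkeeping linking the two phases: one must argue carefully that the polynomials $P_{x_i}$ and subresultant chains $S_{x_i}$ produced in the elimination phase are exactly the objects whose preconditions the extension-phase subroutines require, and in particular that ``$S_0\in\sqrt{\sat{C}}$'' holds for every $C$ reaching the \IntersectAlgebraic\ branch. Establishing this requires tracking how each $r\in P$ gets regularized-to-zero along the chain of earlier variable iterations (the first two branches force $x_i\notin\mvar{P}$ cases to have already driven the relevant resultants into $\sqrt{\sat{C}}$ via \CleanChain\ and the inductive structure), and verifying that when the \textbf{break} occurred in the elimination phase (so $r=0$ identically) the downstream behavior still respects $V(p)\cap W(T)=V(P)\cap W(T)$. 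The rest—base cases, transitivity of $\longrightarrow$, and the per-branch verifications—is routine given Propositions~\ref{Proposition:IntersectFree} and~\ref{Proposition:IntersectAlgebraic} and the \CleanChain\ specification.
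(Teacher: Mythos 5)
Your overall route is the same as the paper's: the two base cases, the elimination-phase identity $V(p)\cap W(T)=V(P)\cap W(T)$, and an induction over the second loop driven by the specifications of {\sf CleanChain}, {\sf IntersectFree}, {\sf IntersectAlgebraic} together with Corollary~\ref{Corollary:sqrtsat}. However, two steps as written would not survive a detailed write-up. First, your intermediate loop invariant is too strong: the upper inclusion $\bigcup_{C\in\T}W(C\cup T_{>x_i})\subseteq V(P)\cap\overline{W(T)}$ fails for intermediate $i$, because the constraints of $P$ in variables above $x_i$ have not yet been imposed. Concretely, take $T=\{x_2^2-x_1\}$ and $p=x_2-1$; then $P=\{x_2-1,\,1-x_1\}$, and after the $x_1$-iteration one has $C=\{x_1-1\}$, so $W(C\cup T_{x_2})$ contains $(1,-1)$, which is not in $V(P)$. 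The inductive step for your invariant therefore cannot be closed. The repair is exactly the paper's invariant: only require $\overline{W(C)}\subseteq V(P_{<x_{i+1}})$ (together with $\sqrt{\sat{T_{<x_{i+1}}}}\subseteq\sqrt{\sat{C}}$ and the regularity of $\init{T_{x_{i+1}}}$ modulo $\sat{C}$), and keep the lower bound in the form $V(p)\cap W(T)\subseteq\bigcup_C Z(P_{\geq x_{i+1}},C\cup T_{\geq x_{i+1}})$; the full inclusions $W(C)\subseteq V(P)\subseteq V(p)$ and $\sqrt{\sat{T}}\subseteq\sqrt{\sat{C}}$ are then recovered only at $i=n+1$.

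Second, your justification of the early return when $r$ is a nonzero constant is invalid: from $\iter{p,T}\neq 0$ you conclude via Proposition~\ref{Proposition:equivalence} that $p$ is regular modulo $\sat{T}$ and then that $V(p)\cap W(T)=\varnothing$, but regularity only excludes vanishing on a whole component, not at individual points of $W(T)$ (e.g.\ $p=x_1$, $T=\{x_2\}$). The correct argument is the one already implicit in your elimination-phase claim: each computed resultant lies in the ideal generated by the previous $P$ and $T$, so a nonzero constant $r$ forces $V(p)\cap V(T)=\varnothing$ (equivalently, use $V(p)\cap W(T)=V(P)\cap W(T)$ with a nonzero constant among the generators). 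For the same reason, your parenthetical claim that on $W(T)$ the vanishing of $p$ is equivalent to the vanishing of only those elements of $P$ with main variable free w.r.t.\ $T$ is not correct and should be dropped; the identity $V(p)\cap W(T)=V(P)\cap W(T)$, proved as in the paper from $r\in\langle P',T\rangle$ at each step, is what is needed. With these two corrections your argument coincides with the paper's proof.
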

\begin{proof}
The first while loop can be seen as a projection process. 
We claim that it produces a nonempty triangular set $P$ such that
$V(p)\cap W(T)=V(P)\cap W(T)$. The claim holds before staring the 
while loop. For each iteration, let $P'$ be the set of polynomials
obtained at the previous iteration. 
We then compute a polynomial $r$, which is the resultant of 
a polynomial in $P'$ and a polynomial in $T$. 
So $r\in\langle P', T\rangle$. By induction, 
we have $\langle p, T\rangle=\langle P, T\rangle$. So the claim holds.

Next, we claim that the elements in $\T$ satisfy the following invariants:
at the beginning of the $i$-{th} iteration of the second while loop, we have
\begin{itemize}
\item[$(1)$] each $C\in \T$ is a regular chain; if $T_{x_i}$ exists, then
      $\init{T_{x_i}}$ is regular modulo $\sat{C}$, 
\item[$(2)$] for each $C\in \T$, we have $\sqrt{\sat{T_{< x_i}}}\subseteq\sqrt{\sat{C}}$,
\item[$(3)$] for each $C\in \T$, we have $\overline{W(C)}\subseteq V(P_{< x_i})$,
\item[$(4)$] $
V(p)\cap W(T) \subseteq \bigcup_{C\in \T} Z(P_{\geq x_i},C\cup T_{\geq x_i}).
$
\end{itemize}
When $i=n+1$,  
we then have $\sqrt{\sat{T}}\subseteq \sqrt{\sat{C}}$, 
$W(C)\subseteq V(P)\subseteq V(p)$ for each $C\in\T$ and
$
V(p)\cap W(T)\subseteq \cup_{C\in \T} W(C).
$
So $(L_1), (L_2), (L_3)$ of Definition~\ref{Definition:split}
all hold. 
This concludes the correctness of the algorithm. 

Now we prove the above claims $(1)$, $(2)$, $(3)$, $(4)$ by induction. 
The claims clearly hold when $i=1$ since 
$C=\varnothing$ and 
$V(p)\cap W(T)=V(P)\cap W(T)$. 
Now assume that the loop invariants hold at the beginning 
of the $i$-th iteration.
We need to prove that it still holds at the beginning of the $(i+1)$-th iteration.
Let $C\in\T$ be an element picked up at the beginning of $i$-{th} iteration and 
let $L$  be the set of the new elements of $\T'$ generated from $C$.

Then for any $C'\in L$, 
claim $(1)$ clearly holds by specification of ${\sf CleanChain}$.
Next we prove $(2)$.
\begin{itemize}
\item if $x_i\notin\mvar{T}$, then $T_{<x_{i+1}}=T_{<x_i}$. 
By induction and specifications of called functions, 
we have $$\sqrt{\sat{T_{<x_{i+1}}}}\subseteq\sqrt{\sat{C}}\subseteq\sqrt{\sat{C'}}.$$
\item if $x_i\in\mvar{T}$, by induction we have 
$\sqrt{\sat{T_{<x_{i}}}}\subseteq\sqrt{\sat{C}}$ and $\init{T_{x_i}}$ 
is regular modulo both $\sat{C}$ and $\sat{T_{<x_{i}}}$. 
By Corollary~\ref{Corollary:sqrtsat} we have 
$$
\sqrt{\sat{T_{<x_{i+1}}}}\subseteq\sqrt{\sat{C\cup T_{x_i}}}\subseteq\sqrt{\sat{C'}}.
$$
\end{itemize}
Therefore $(2)$ holds. Next we prove claim $(3)$.
By induction and the specifications of called functions, 
we have $\overline{W(C')}\subseteq\overline{W(C\cup T_{x_i})}\subseteq V(P_{< x_{i}}).$
Secondly, we have $\overline{W(C')}\subseteq V(P_{x_i})$.
Therefore $\overline{W(C')}\subseteq V(P_{< x_{i+1}})$, that is $(3)$ holds. 
Finally,
since 
$V(P_{x_i})\cap W(C\cup T_{x_i})\setminus V(\init{T_{x_{i+1}}}) \subseteq\cup_{C'\in L} W(C')$, 
we have 
$
Z(P_{\geq x_i},C\cup T_{\geq x_i}) \subseteq \cup_{C'\in L} Z(P_{\geq x_{i+1}},C'\cup T_{\geq x_{i+1}}),
$ 
which implies that $(4)$ holds.
This completes the proof.
\end{proof}

\begin{Proposition}
${\sf Regularize}$ satisfies its specification.
\end{Proposition}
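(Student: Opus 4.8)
The plan is to treat this as one more induction step in the call--DAG argument already set up: the induction runs over the finite acyclic graph of recursive calls whose well-foundedness is the content of the termination theorem, so, granting that every subsequent call is correct, I only have to check that a single invocation of {\sf Regularize} on a process $(p,T)$ produces pairs $[p_1,T_1],\ldots,[p_e,T_e]$ each satisfying conditions $(1)$--$(3)$ of the specification and, collectively, satisfying $T\longrightarrow T_1,\ldots,T_e$. Two elementary remarks streamline the last point. First, because the polynomial part of the process here is $0$, condition $(L_2)$ of Definition~\ref{Definition:split} is vacuous, so $\longrightarrow$ is transitive in the obvious way: if $T\longrightarrow C_1,\ldots,C_s$ and each $C_j\longrightarrow$ some regular chains, then $T\longrightarrow$ the union of all of them, since the inclusions $\sqrt{\sat{T}}\subseteq\sqrt{\sat{C_j}}$ chain up and $W(T)\subseteq\bigcup_j W(C_j)$ unfolds. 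Second, whenever the algorithm builds a chain $C\cup T_{\geq v}$ (or $D\cup g\cup T_{>v}$, $D\cup q\cup T_{>v}$, etc.) out of a chain $C\subset\K[x_1,\ldots,x_{v-1}]$ with $\sqrt{\sat{T_{<v}}}\subseteq\sqrt{\sat{C}}$ and $\dim{C}=\dim{T_{<v}}$, I get both ``it is a regular chain of $R$'' and $\sqrt{\sat{T}}\subseteq\sqrt{\sat{C\cup T_{\geq v}}}$ by a repeated application of Proposition~\ref{Proposition:equal-dim} and Corollary~\ref{Corollary:sqrtsat} along the polynomials of $T_{\geq v}$; this book-keeping will be invoked tacitly.

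The two easy branches come first. If $p\in\K$ or $T=\varnothing$, the output $[p,T]$ is correct since a nonzero constant is a unit, $0$ lies in every radical ideal, and $\sat{\varnothing}=\ideal{0}$ is radical in $\K[\x]$, while $T\longrightarrow T$ trivially. For $v:=\mvar{p}\notin\mvar{T}$, we recursively regularize $\init{p}$ modulo $T$; for an output pair $[f,C]$ with $f=0$ we have $\init{p}\in\sqrt{\sat{C}}$, hence $p\equiv\tail{p}\pmod{\sqrt{\sat{C}}}$, and the recursive call $\Regularize{\tail{p}, C, R}$ (legitimate since $(\tail{p},C)\prec(p,T)$) finishes the job, the congruence being transported along $\sqrt{\sat{C}}\subseteq\sqrt{\sat{T_i}}$; for $f\neq 0$ we output $[p,C]$ and conclude that $p$ is regular modulo $\sqrt{\sat{C}}$ from Corollary~\ref{Corollary:regular}$(2)$, using that $v\notin\mvar{C}$ (the algorithm never turns a free variable into an algebraic one) and that $\init{p}\equiv f$ is regular modulo $\sqrt{\sat{C}}$.

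The substantive branch is $v\in\mvar{T}$. Here $r:=\res{p,T_v,v}$, and since every polynomial of $T_{>v}$ has main variable above $v=\mvar{p}$, we have $r=\iter{p,T_{\geq v}}$. We regularize $r$ modulo $T_{<v}$; for each output pair $[f,C]$ there are three subcases. If $\dim{C}<\dim{T_{<v}}$, the chains rebuilt by $\Extend{C, T, v, R}$ have dimension strictly below that of $T$, so each $\Regularize{p, D, R}$ is on a smaller process and is handled by induction. If $\dim{C}=\dim{T_{<v}}$ and $f\neq 0$, then $C\cup T_{\geq v}$ is a regular chain and, $r=\iter{p,T_{\geq v}}$ being regular modulo $\sqrt{\sat{C}}$, Corollary~\ref{Corollary:regular}$(1)$ gives that $p$ is regular modulo $\sqrt{\sat{C\cup T_{\geq v}}}$, so $[p,C\cup T_{\geq v}]$ is correct. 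If $\dim{C}=\dim{T_{<v}}$ and $f=0$, the hypotheses of $\RegularGcd{p, T_v, v, S, C, R}$ hold: $\init{T_v}$ is regular modulo $\sqrt{\sat{C}}$ by Proposition~\ref{Proposition:equal-dim}, and $\res{p,T_v,v}=r\in\sqrt{\sat{C}}$ because $f=0$; so we run over its output pairs $[g,D]$. When $\dim{D}<\dim{C}$ we again drop dimension, {\sf Extend}, and recurse. When $\dim{D}=\dim{C}$, $g$ is a regular GCD of $p$ and $T_v$ modulo $\sqrt{\sat{D}}$, and Proposition~\ref{Proposition:regular-gcd} governs the rest. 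If $\mdeg{g}=\mdeg{T_v}$, part $(i)$ gives $\sqrt{\sat{D\cup T_v}}=\sqrt{\sat{D\cup g}}$, while $\prem{p,g}\in\sqrt{\sat{D}}$ gives $p\in\sqrt{\sat{D\cup g}}$, hence $p\in\sqrt{\sat{D\cup T_{\geq v}}}$ and $[0,D\cup T_{\geq v}]$ is correct. If $\mdeg{g}<\mdeg{T_v}$, set $q=\pquo{T_v,g}$; part $(ii)$ supplies that $D\cup q$ (hence $D\cup q\cup T_{>v}$) is a regular chain, the splitting $\sqrt{\sat{D\cup T_v}}=\sqrt{\sat{D\cup g}}\cap\sqrt{\sat{D\cup q}}$, and the covering $W(D\cup T_v)\subseteq Z(h_g,D\cup T_v)\cup W(D\cup g)\cup W(D\cup q)$; accordingly we output $[0,D\cup g\cup T_{>v}]$ (again $p\in\sqrt{\sat{D\cup g}}$), recurse with $\Regularize{p, D\cup q\cup T_{>v}, R}$ on the $D\cup q$ piece (a strictly smaller process, the main degree at $v$ having dropped), and for the $Z(h_g,\cdot)$ leftover we compute $\Intersect{h_g, D, R}$, {\sf Extend} the results over $T_{\geq v}$, and recurse with {\sf Regularize}; the last recursion is again on a smaller process since $\init{g}=h_g$ regular modulo $\sqrt{\sat{D}}$ makes $V(h_g)\cap\overline{W(D)}$ a proper subvariety.

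It remains to glue these pieces into $T\longrightarrow T_1,\ldots,T_e$, and this is the step I expect to be the main obstacle. Condition $(L_1)$ reduces, through the tacit book-keeping above, to the inclusions already delivered by the sub-calls. For $(L_3)$ I would chase a point $\xi\in W(T)$: its projection to $\K[x_1,\ldots,x_{v-1}]$ lies in $W(T_{<v})$, hence in some $W(C)$ returned by regularizing $r$; since $h_{T_{\geq v}}(\xi)\neq 0$ and $\xi\in V(T_{\geq v})$, we get $\xi\in W(C\cup T_{\geq v})$; when $f\neq 0$ we are done, and when $f=0$ we push $\xi$ through the covering $C\longrightarrow D_1,\ldots$ of {\sf RegularGcd} and then through covering $(i)$ or $(ii.b)$ of Proposition~\ref{Proposition:regular-gcd} (noting that over $W(D)$ the fibre of $T_v$ splits into those of $g$ and of $q$), with the $Z(h_g,\cdot)$ remnant absorbed by $\Intersect{h_g, D, R}$ and {\sf Extend}, after which the recursive {\sf Regularize} calls carry $\xi$ into the final list. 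The two genuinely delicate points in this assembly are (a) the repeated verification, leaning on Propositions~\ref{Proposition:equal-dim} and~\ref{Proposition:regular-gcd}, that the various built-up chains $C\cup T_{\geq v}$, $D\cup T_{\geq v}$, $D\cup g\cup T_{>v}$, $D\cup q\cup T_{>v}$ really are regular chains of $R$, and (b) keeping track of the process ordering at every recursive node so that the induction hypothesis is legitimately invoked; neither introduces a new idea, but both must be done case by case.
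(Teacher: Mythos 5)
Your argument is correct and essentially coincides with the paper's proof: both proceed by induction on the finite call graph and then case-split along the branches of the algorithm, using Proposition~\ref{Proposition:equal-dim} for the regular-chain property of the rebuilt chains, Corollary~\ref{Corollary:regular} for the two regularity conclusions, Corollary~\ref{Corollary:sqrtsat} for the radical saturated-ideal inclusions, and Proposition~\ref{Proposition:regular-gcd} (parts $(i)$, $(ii)$, $(ii.b)$, $(iii)$) for the GCD cases and the covering. The only cosmetic difference is in the case $\mdeg{g}=\mdeg{T_v}$, where you invoke part $(i)$ of Proposition~\ref{Proposition:regular-gcd} while the paper observes $g=T_v$ and applies Proposition~\ref{Proposition:radsatprem}; the two arguments are interchangeable.
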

\begin{proof}
If $v\notin\mvar{T}$, 
the conclusion follows directly from point $(2)$ of Corollary~\ref{Corollary:regular}.
From now on, assume $v\in\mvar{T}$. 
Let $\L$ be the set of pairs $[p', T']$ in the output. 
We aim to prove the following facts
\begin{itemize}
\item[$(1)$] each $T'$ is a regular chain,
\item[$(2)$] if $p'=0$, then $p$ is zero modulo $\sqrt{\sat{T'}}$, 
otherwise $p$ is regular modulo $\sat{T}$,
\item[$(3)$] we have $\sqrt{\sat{T}}\subseteq\sqrt{\sat{T'}}$,
\item[$(4)$] we have $W(T)\subseteq \cup_{T'\in\L} W(T')$.
\end{itemize}
Statement $(1)$ is due to Proposition~\ref{Proposition:equal-dim}.
Next we prove $(2)$.
First, when there are recursive calls, the conclusion is obvious.
Let $[f, C]$ be a pair in the output of ${\Regularize{r, T_{<v}, R}}$. 
If $f\neq0$, the conclusion follows directly from point $(1)$ of Corollary~\ref{Corollary:regular}.
Otherwise, let $[g, D]$ be a pair in the output of the algorithm ${\RegularGcd{p, T_v, v, src, C, R}}$. 
If $\mdeg{g}=\mdeg{T_v}$, then by the algorithm of ${\sf RegularGcd}$, $g=T_v$. 
Therefore we have $\prem{p, T_v}\in\sqrt{\sat{C}}$, 
which implies that $p\in\sqrt{\sat{C\cup T_{\geq v}}}$ by Proposition~\ref{Proposition:radsatprem}.

Next we prove $(3)$.
Whenever  {\sf Extend} is called, $(3)$ holds immediately.
Otherwise, 
let $[f, C]$ be a pair returned by ${\Regularize{r, T_{<v}, R}}$. 
When $f\neq 0$, since $\sqrt{\sat{T_{<v}}}\subseteq \sqrt{\sat{C}}$ holds, 
we conclude $\sqrt{\sat{T}}\subseteq \sqrt{\sat{C\cup T_{\geq v}}}$ 
by Corollary~\ref{Corollary:sqrtsat}. 
Let $[g, D] \in {\RegularGcd{p, T_v, v, src, C, R}}$.
Corollary~\ref{Corollary:sqrtsat} 
and point $(ii)$ of Proposition~\ref{Proposition:regular-gcd}
imply that $\sqrt{\sat{T}} \ \subseteq \ \sqrt{\sat{D\cup T_{\geq v}}}$, 
$\sqrt{\sat{T}} \ \subseteq \ \sqrt{\sat{D\cup g\cup T_{> v}}}$ 
together with 
$\sqrt{\sat{T}} \ \subseteq \ \sqrt{\sat{D\cup q\cup T_{> v}}}$
hold. Hence $(3)$ holds.

Finally by point $(ii.b)$ of Proposition~\ref{Proposition:regular-gcd}, 
we have $W(D\cup T_v)\subseteq Z(h_g, D\cup T_v)\cup W(D\cup g)\cup W(D\cup q)$.
So $(4)$ holds.
\end{proof}

\begin{Proposition}
{\sf Extend} satisfies its specification.
\end{Proposition}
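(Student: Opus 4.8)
The plan is to argue by induction on the number of polynomials of $T$ whose main variable is at least $x_i$, i.e.\ on $|T_{\geq x_i}|$. In the base case $T_{\geq x_i}=\varnothing$ we have $T=T_{<x_i}$, the algorithm returns $\{C\}$, and the three required properties are immediate: $C$ is a regular chain by hypothesis, $W(C\cup T_{\geq x_i})=W(C)$, and $\sqrt{\sat{T}}=\sqrt{\sat{T_{<x_i}}}\subseteq\sqrt{\sat{C}}$ is exactly the input assumption.

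For the inductive step, let $p$ be the polynomial of $T$ with greatest main variable, say $x_m$ with $m\geq i$, and set $T':=T\setminus\{p\}$. Since $T'_{<x_i}=T_{<x_i}$, the hypothesis $\sqrt{\sat{T'_{<x_i}}}\subseteq\sqrt{\sat{C}}$ still holds, so the induction hypothesis applies to the recursive call $\Extend{C,T',x_i,R}$: it returns regular chains $D_1,\ldots,D_s$ with $W(C\cup T'_{\geq x_i})\subseteq\cup_k W(D_k)$ and $\sqrt{\sat{T'}}\subseteq\sqrt{\sat{D_k}}$ for each $k$. For a fixed $D_k$, the call $\Regularize{\init{p},D_k,R}$ returns pairs $[f_1,E_1],\ldots,[f_t,E_t]$ with $D_k\longrightarrow E_1,\ldots,E_t$ (so in particular $\sqrt{\sat{D_k}}\subseteq\sqrt{\sat{E_l}}$ and $W(D_k)\subseteq\cup_l W(E_l)$), with $f_l=\init{p}\bmod\sqrt{\sat{E_l}}$, and with $f_l$ regular modulo $\sqrt{\sat{E_l}}$ whenever $f_l\neq 0$; the algorithm outputs $E_l\cup p$ precisely for those $l$ with $f_l\neq 0$.

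Then I would verify the three output conditions for each such $E_l\cup p$. First, that it is a regular chain: since $\mvar{p}=x_m$ is the greatest main variable occurring in $T$ and none of the invoked subroutines, run on inputs all of whose variables lie below $x_m$, introduces a polynomial with main variable $\geq x_m$, the set $E_l$ lies in $\K[x_1,\ldots,x_{m-1}]$, so $p$ has maximal rank in $E_l\cup\{p\}$ and the set is triangular; moreover, because the saturated ideal of a regular chain is unmixed, regularity of $\init{p}$ modulo $\sqrt{\sat{E_l}}$ is equivalent to regularity modulo $\sat{E_l}$, which is what the definition of a regular chain requires. Second, $\sqrt{\sat{T}}\subseteq\sqrt{\sat{E_l\cup p}}$: we have $\sqrt{\sat{T'}}\subseteq\sqrt{\sat{D_k}}\subseteq\sqrt{\sat{E_l}}$, and $\init{p}$ is regular modulo both $\sat{T'}$ (because $T=T'\cup\{p\}$ is a regular chain with $p$ of maximal rank) and $\sat{E_l}$, so Corollary~\ref{Corollary:sqrtsat} yields $\sqrt{\sat{T'\cup p}}\subseteq\sqrt{\sat{E_l\cup p}}$, i.e.\ $\sqrt{\sat{T}}\subseteq\sqrt{\sat{E_l\cup p}}$. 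Third, the geometric containment: a point of $W(C\cup T_{\geq x_i})$ satisfies $C\cup T'_{\geq x_i}$ with all their initials nonzero, hence lies in $W(C\cup T'_{\geq x_i})$, and also satisfies $p=0$ with $\init{p}\neq 0$; by the induction hypothesis it lies in some $W(D_k)$, and then in some $W(E_l)$ by the split property of \Regularize; since $\init{p}$ does not vanish there, $\init{p}\notin\sqrt{\sat{E_l}}=I(\overline{W(E_l)})$, so $f_l\neq 0$ and the branch $E_l\cup p$ was indeed output, and the point clearly lies in $W(E_l\cup p)$.

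The main obstacle I anticipate is the first of these checks, i.e.\ certifying that $E_l\cup\{p\}$ is a genuine regular chain. This rests on two facts that are true but require care: the bookkeeping that $\mvar{p}$ is strictly greater than every main variable occurring in $E_l$ (which follows by an auxiliary invariant, namely that \Extend, \Regularize, \RegularGcd and \Intersect, when called on inputs whose variables all lie below $x_m$, only produce regular chains with main variables below $x_m$), and the passage from regularity modulo the radical $\sqrt{\sat{E_l}}$ back to regularity modulo $\sat{E_l}$ itself, which uses the unmixedness of saturated ideals of regular chains (equivalently, that the primes associated with $\sat{E_l}$ are exactly the minimal primes of $\sqrt{\sat{E_l}}$, via Proposition~\ref{Proposition:equivalence}). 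Everything else is a routine combination of the subroutine specifications with Corollary~\ref{Corollary:sqrtsat}.
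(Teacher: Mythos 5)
Your argument is correct and follows essentially the same route as the paper's proof: induction on the recursive structure, Corollary~\ref{Corollary:sqrtsat} for the containment $\sqrt{\sat{T}}\subseteq\sqrt{\sat{E\cup p}}$, and the combination of the induction hypothesis with the split property of {\sf Regularize} for the quasi-component containment. You simply spell out details the paper leaves implicit (that $E\cup p$ is a triangular set with $p$ of maximal rank, and the passage from regularity modulo $\sqrt{\sat{E}}$ to regularity modulo $\sat{E}$ via unmixedness), which is careful but not a different method.
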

\begin{proof}
It clearly holds when $T_{\geq x_i}=\varnothing$, which is the base case.
By induction and the specification of {\sf Regularize}, 
we know that $\sqrt{\sat{T'}}\subseteq\sqrt{\sat{E}}$.
Since $\init{p}$ is regular modulo both $\sat{T'}$ and $\sat{E}$, 
by Corollary~\ref{Corollary:sqrtsat}, we have 
$\sqrt{\sat{T}}\subseteq\sqrt{\sat{E\cup p}}$.
On the other hand, we have $W(C\cup T'_{\geq x_i})\subseteq\cup W(D)$
and $W(D)\setminus V(h_p) \subseteq\cup~ W(E)$. 
Therefore $W(C\cup T_{\geq x_i})\subseteq\cup_{j=1}^e W(T_j)$, 
where $T_1, \ldots, T_e$ are the regular chains in the output. 
\end{proof}

\begin{Proposition}
{\sf CleanChain} satisfies its specification.
\end{Proposition}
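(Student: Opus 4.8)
The plan is to follow the two branches of the algorithm, using the specification of {\sf Regularize} (available by the induction scheme of Section~\ref{Section:proof}). First I would dispose of the two cases in which {\sf CleanChain} simply returns $C$. If $x_i\notin\mvar{T}$, returning $C$ is literally what the specification prescribes, so nothing is to be done. If $x_i\in\mvar{T}$ but $\dim{C}=\dim{T_{<x_i}}$, the algorithm again returns $C$, and here I must verify that $\{C\}$ meets the three output conditions; two of them, $\sqrt{\sat{C}}\subseteq\sqrt{\sat{C}}$ and $W(C)\setminus V(\init{T_{x_i}})\subseteq W(C)$, are trivial, so the only real point is regularity of $\init{T_{x_i}}$ modulo $\sat{C}$. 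Since $T$ is a regular chain with $x_i\in\mvar{T}$, $\init{T_{x_i}}$ is regular modulo $\sat{T_{<x_i}}$; combining this with the input hypothesis $\sqrt{\sat{T_{<x_i}}}\subseteq\sqrt{\sat{C}}$ and the dimension equality, Proposition~\ref{Proposition:equal-dim} delivers that $\init{T_{x_i}}$ is regular modulo $\sat{C}$.

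For the main branch ($x_i\in\mvar{T}$, $\dim{C}<\dim{T_{<x_i}}$), I would write $D_1,\ldots,D_s$ for all regular chains produced by $\Regularize{\init{T_{x_i}}, C, R}$, with corresponding polynomials $f_1,\ldots,f_s$, and let $T_1,\ldots,T_e$ be the sublist of those $D_k$ with $f_k\neq 0$ --- by inspection of the pseudo-code these are exactly the chains output. From the specification of {\sf Regularize} I obtain that each $D_k$ is a regular chain, that $\init{T_{x_i}}\equiv f_k \pmod{\sqrt{\sat{D_k}}}$, that $f_k=0$ implies $\init{T_{x_i}}\in\sqrt{\sat{D_k}}$ whereas $f_k\neq 0$ makes $f_k$ (hence $\init{T_{x_i}}$) regular modulo $\sqrt{\sat{D_k}}$, and that $C\longrightarrow D_1,\ldots,D_s$. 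Then I would check the three output conditions in turn. $(a)$ For a retained chain $T_j=D_k$ we have $f_k\neq 0$, so $\init{T_{x_i}}$ is regular modulo $\sqrt{\sat{D_k}}$, and since the saturated ideal of a regular chain is unmixed it has the same associated primes as its radical, whence $\init{T_{x_i}}$ is regular modulo $\sat{D_k}$. $(b)$ The inclusion $\sqrt{\sat{C}}\subseteq\sqrt{\sat{T_j}}$ is exactly condition $(L_1)$ of $C\longrightarrow D_1,\ldots,D_s$, read off for the retained chains. $(c)$ From $C\longrightarrow D_1,\ldots,D_s$, i.e. $(0,C)\longrightarrow D_1,\ldots,D_s$, we get $W(C)\subseteq\bigcup_{k=1}^s W(D_k)$; for a discarded chain $f_k=0$ forces $\init{T_{x_i}}\in\sqrt{\sat{D_k}}$, so $W(D_k)\subseteq\overline{W(D_k)}=V(\sat{D_k})\subseteq V(\init{T_{x_i}})$ and hence $W(D_k)\setminus V(\init{T_{x_i}})=\varnothing$; dropping the discarded chains from the union yields $W(C)\setminus V(\init{T_{x_i}})\subseteq\bigcup_{j=1}^e W(T_j)$.

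The main obstacle, and the point I would be most careful about, is the interchange between ``regular modulo $\sqrt{\sat{D_k}}$'' and ``regular modulo $\sat{D_k}$'' --- used in step $(a)$ above and, implicitly, when Proposition~\ref{Proposition:equal-dim} (which is phrased with $\sat{}$) is invoked in the boundary case. This rests on the classical fact that the saturated ideal of a regular chain is unmixed, so that its set of zerodivisors is the union of the minimal primes of its radical; granting that, the rest is bookkeeping with the specification of {\sf Regularize} and with Proposition~\ref{Proposition:equivalence}. It is also worth recording that the input hypothesis $\sqrt{\sat{T_{<x_i}}}\subseteq\sqrt{\sat{C}}$ forces $\dim{C}\leq\dim{T_{<x_i}}$, so the test $\dim{C}=\dim{T_{<x_i}}$ genuinely separates the boundary case from the generic case $\dim{C}<\dim{T_{<x_i}}$ treated by the loop.
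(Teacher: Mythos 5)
Your argument is correct and follows essentially the same route as the paper, whose entire proof is the one-line remark that the result ``follows directly from Proposition~\ref{Proposition:equal-dim}'': that proposition is exactly what you use to justify the early return when $\dim{C}=\dim{T_{<x_i}}$, and the remaining branch is, as you show, routine bookkeeping with the specification of {\sf Regularize}. Your explicit handling of the passage between regularity modulo $\sqrt{\sat{D_k}}$ and modulo $\sat{D_k}$ (via unmixedness of saturated ideals of regular chains) is a detail the paper leaves implicit, but it does not change the approach.
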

\begin{proof}
It follows directly from 
Proposition~\ref{Proposition:equal-dim}.
\end{proof}

\begin{Proposition}
{\sf RegularGcd} satisfies its specification.
\end{Proposition}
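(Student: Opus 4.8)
The plan is to induct on the iterations of the main \textbf{while} loop, carrying a loop invariant on the pairs stored in the worklist, and then to read the three output requirements off that invariant together with Theorem~\ref{Theorem:regulargcd}. Throughout I keep the notation of that theorem: $\B=\K[x_1,\ldots,x_{k-1}]$, $m=\deg(p,v)$, $n=\mdeg{q}$, $\lambda=\min{m,n}$, with $S_i,s_i$ the subresultants and principal subresultant coefficients of $p,q$ with respect to $v$, and with $S_\lambda,S_{\lambda+1},s_\lambda,s_{\lambda+1}$ defined from the two cases $m\geq n$ and $m<n$ as in that setup. The invariant I would establish is: every pair $(C,j)$ ever inserted in the worklist satisfies that $C$ is a regular chain of $\B$ with $\sqrt{\sat{T}}\subseteq\sqrt{\sat{C}}$ and $\dim{C}=\dim{T}$, and that $s_0=s_1=\cdots=s_{j-1}=0$ in $\B/\sqrt{\sat{C}}$. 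The base case $(T,1)$ holds because $s_0=S_0=\res{p,q,v}$, which lies in $\sqrt{\sat{T}}$ by the input precondition; this is the only use of that precondition, and it is what allows the loop to start testing at index $1$ rather than $0$. For the inductive step, processing $(C,j)$ triggers the call $\Regularize{s_j,C,R}$; by the specification of {\sf Regularize} one has $C\longrightarrow D_1,\ldots,D_s$ for the output chains $D_1,\ldots,D_s$, so each output chain $D$ has $\sqrt{\sat{C}}\subseteq\sqrt{\sat{D}}$ --- whence $s_0,\ldots,s_{j-1}$ still vanish modulo $\sqrt{\sat{D}}$ --- and $\dim{D}\leq\dim{C}$; moreover a pair $(D,j+1)$ is enqueued only when $\dim{D}=\dim{C}$ (preserving the dimension) and $f=0$, the latter meaning $s_j\in\sqrt{\sat{D}}$ by the {\sf Regularize} specification, which extends the vanishing to $s_j$.

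For termination, I would show that the index $j$ never exceeds $\lambda+1$. Whichever of $m\geq n$, $m<n$ holds, the corresponding one of $s_\lambda,s_{\lambda+1}$ equals $\init{q}$, which is regular modulo $\sqrt{\sat{T}}$ by hypothesis and hence, by Proposition~\ref{Proposition:equal-dim} and the invariant $\dim{C}=\dim{T}$, regular modulo $\sqrt{\sat{D}}$ on every output branch $D$ with $\dim{D}=\dim{C}$. Thus at the level $j$ where $s_j=\init{q}$ (namely $j=\lambda$ if $m\geq n$, and $j=\lambda+1$ if $m<n$), processing any worklist pair can only output $[0,D]$ (when $\dim{D}<\dim{C}$) or $[S_j,D]$ (when $\dim{D}=\dim{C}$, since then $f\neq 0$), never enqueuing a pair of index $j+1$. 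So no index exceeds $\lambda+1$; since each {\sf Regularize} call returns finitely many pairs, the worklist is drained in finitely many steps (consistently with the global termination theorem, as {\sf RegularGcd} calls only {\sf Regularize}).

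It then remains to verify the output specification. Two kinds of pairs are emitted. A pair $[0,D]$ is emitted only on the branch $\dim{D}<\dim{C}=\dim{T}$, so $\dim{D}<\dim{T}$ and the value $0$ (``undefined'') is as required. A pair $[S_j,D]$ is emitted only when $\dim{D}=\dim{C}=\dim{T}$ and $f\neq 0$; then the invariant gives $s_0=\cdots=s_{j-1}=0$ in $\A_D:=\B/\sqrt{\sat{D}}$, while the {\sf Regularize} specification gives that $s_j$ is regular in $\A_D$, so Theorem~\ref{Theorem:regulargcd} applied with $\A=\A_D$ shows that $S_j$ is a regular GCD of $p$ and $q$ modulo $\sqrt{\sat{D}}$, matching the requirement in the case $\dim{T_i}=\dim{T}$. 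Finally, $T\longrightarrow T_1,\ldots,T_e$ follows by composing, along the finite tree implicitly generated by the worklist, the relations $C\longrightarrow D_1,\ldots,D_s$ supplied by each {\sf Regularize} call; transitivity of $\longrightarrow$ for the zero polynomial is immediate, since inclusion of radicals of saturated ideals is transitive and the coverings $W(C)\subseteq\bigcup_\ell W(D_\ell)$ compose.

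The step I expect to be the main obstacle is the bookkeeping needed to invoke Theorem~\ref{Theorem:regulargcd} legitimately: one must be sure that when $S_j$ is output, \emph{all} of $s_0,\ldots,s_{j-1}$ vanish modulo $\sqrt{\sat{D}}$ --- not merely $s_1,\ldots,s_{j-1}$, which are the coefficients the algorithm explicitly regularizes --- and that the dimension has not silently dropped along the way. The loop invariant above is designed precisely to transport these two facts down the tree, and the missing index $0$ is handed to us for free by the precondition $\res{p,q,v}\in\sqrt{\sat{T}}$.
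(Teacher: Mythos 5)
Your proof is correct and follows essentially the same route as the paper's (much terser) argument: the regular split $T\longrightarrow T_1,\ldots,T_e$ comes from the specification of {\sf Regularize}, Proposition~\ref{Proposition:equal-dim} transports regularity to the equal-dimension branches, and Theorem~\ref{Theorem:regulargcd} yields that $S_j$ is a regular GCD on those branches. Your explicit loop invariant (vanishing of $s_0,\ldots,s_{j-1}$, with the index~$0$ case supplied by the precondition $\res{p,q,v}\in\sqrt{\sat{T}}$) and the bound $j\leq\lambda+1$ simply spell out details the paper leaves implicit.
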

\begin{proof}
Let $[g_i, T_i]$, $i=1,\ldots,e$, be the output.
First from the specification of {\sf Regularize}, 
we have $T\longrightarrow T_1,\ldots,T_e$.
When $\dim{T_i}=\dim{T}$, by Proposition~\ref{Proposition:equal-dim}
and Theorem~\ref{Theorem:regulargcd}, $g_i$ is 
a regular GCD of $p$ and $q$ modulo $\sqrt{\sat{T}}$.
\end{proof}

\section{ Kalkbrener decomposition}
\label{sec:Kalkbrener}

\begin{figure*}
\centering
{\small
\begin{tabular}{|l|c|c|c|c|c|c|c|c||c|c|c|c|c|}
\hline
sys   & \multicolumn{8}{c||}{{\sf Triangularize}} & \multicolumn{5}{c|}{{\sf Triangularize} versus other solvers}\\\hline
                     & TK13       & TK14        & TK       & TL13     &TL14      &TL   &STK & STL & GL          & GS          & WS   &    TL          & TK \\\hline
1 & - & 241.7 & 36.9 &-&-& - & 62.8 & -  &   - &   - &-&  	 - & 	  36.9 \\
2 & 8.7 & 5.3 & 5.9 &29.7&24.1& 25.8 & 6.0 & 26.6 &   108.7 &   - &27.8 & 25.8 & 5.9\\
3 & 0.3 & 0.3 & 0.4 &14.0 &2.4& 2.1 & 0.4 & 2.2 &   3.4 &   - &7.9 & 2.1 & 0.4 \\
4 & - & - & 88.2 &-&-& - & - & 	 - &   - &   - &-&  	 - & 88.2 \\
5 & 0.4 & 0.5 & 0.7 &-&-& - & 451.8 &  	- &   2609.5 &   - &-&  	 - & 0.7 \\
7 & - & - & - &-&-& - & 1326.8 & 1437.1 &   19.3 &   - &-&  	 - &  	 - \\
8 & 3.2 & 0.7 & 0.6 &-&55.9 & 7.1 & 0.7 & 8.8 &   63.6 &   - &-& 7.1 & 0.6 \\
9 & 166.1 & 5.0 & 3.1 &-&-& - & 3.3 & - &   - &   - &-&  	 - & 3.1 \\
10 & 5.8 & 0.4 & 0.4 &-&1.5& 1.5 & 0.4 & 1.5 &   - &   - &0.8 & 1.5 & 0.4 \\
11 & - & 29.1 & 12.7 &-&27.7& 14.8 & 12.9 & 15.1 &   30.8 &   - &-& 14.8 & 12.7 \\
12 & 452.3 & 454.1 & 0.3 &-&-& - & 0.3 & - &   - &   - &-&  	 - & 0.3 \\
14 & 0.7 & 0.7 & 0.3 &801.7 &226.5 & 143.5 & 0.3 & 531.3 &   125.9 &   - &-& 143.5 & 0.3 \\
16 & 0.4 & 0.4 & 0.4 &4.7 &2.6 & 2.3 & 0.4 & 4.4 &   3.2 &   2160.1 &40.2 & 2.3 & 0.4 \\
17 & - & 2.1 & 2.2 &-&4.5 & 4.5 & 2.2 & 6.2 &   - &   - &5.7 & 4.5 & 2.2 \\
18 & - & 15.6 & 14.4 &-&126.2 & 51.1 & 14.5 & 63.1 &   - &   - &-& 51.1 & 14.4 \\
19 & - & 871.1 & 859.4 &-&1987.5 & 1756.3 & 859.2 & 1761.8 &   - &   - &-& 1756.3 & 859.4 \\
21 & 1.2 & 0.6 & 0.3 &-&1303.1 & 352.5 & 0.3 & - &   6.2 &   - &792.8 & 352.5 & 0.3 \\
22 & 168.8 & 5.5 & 0.3 &-&-& - & 0.3 & - &   33.6 &   - &-&  	 - & 0.3 \\
23 & 0.8 & 0.9 & 0.5 &-&10.3 & 7.0 & 0.4 & 12.6 &   1.8 &   - &-& 7.0  & 0.5 \\
24 & 1.5 & 0.7 & 0.8 &-&1.9 & 1.9 & 0.9 & 2.0 &   35.2 &   - &9.1 & 1.9 & 0.8 \\
25 & 0.5 & 0.6 & 0.7 &0.6 &0.8 & 0.8 & 0.8 & 0.9 &   0.2 &   1580.0 &0.8 & 0.8 & 0.7 \\
26 & 0.2 & 0.3 & 0.4 &4.0 &1.9 & 1.9 & 0.5 & 2.7 &   4.7 &   0.1 &12.5 & 1.9 & 0.4 \\
27 & 3.3 & 0.9 & 0.6 &-&-& - & 0.7 & -  &   0.3 &   - &-&  	 - & 0.6   \\
\hline
\end{tabular}
}
\newline
{\textbf{Table 2} Timings of Triangularize versus other solvers}
\end{figure*}

In this section, we adapt the Algorithm 
{\sf Triangularize} (Algorithm~\ref{Algo:Triangularize}), 
in order to compute efficiently a Kalkbrener
triangular decomposition. 
The basic technique we rely on follows from 
Krull's principle ideal theorem.
\begin{Theorem}
Let $F\subset\K[\x]$ be finite, with cardinality $\#(F)$.
Assume $F$ 
generates a proper ideal of $\K[\x]$.
Then, for any minimal prime ideal ${\p}$ associated with $\langle F \rangle$, 
the height of ${\p}$ is less than or equal to $\#(F)$.
\end{Theorem}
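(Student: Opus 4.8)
The plan is to obtain the statement as an immediate consequence of the generalized form of Krull's principal ideal theorem (the Krull height theorem); no genuinely new argument is required, and the work reduces to checking the hypotheses of that theorem.

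First I would record that the ring $\K[\x] = \K[x_1,\ldots,x_n]$ is Noetherian: this is Hilbert's basis theorem, applied $n$ times starting from the field $\K$. Next, since $F$ is finite, the ideal $\ideal{F}$ is generated by $\#(F)$ elements. The hypothesis that $\ideal{F}$ is proper guarantees that the set of prime ideals of $\K[\x]$ containing $\ideal{F}$ is nonempty; by Noetherianity this set has finitely many minimal elements, and these are precisely the minimal primes associated with $\ideal{F}$, so the statement is not vacuous.

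Then I would invoke the Krull height theorem in the form: if $R$ is a Noetherian ring and $I = \ideal{a_1,\ldots,a_m} \subseteq R$, then every prime $\p$ minimal over $I$ satisfies $\height{\p} \leq m$. Applying this with $R = \K[\x]$, $I = \ideal{F}$ and $m = \#(F)$ gives $\height{\p} \leq \#(F)$ for every minimal prime $\p$ associated with $\ideal{F}$, which is exactly the claim.

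There is no real obstacle beyond citing the generalized Hauptidealsatz (see, e.g., Matsumura or Eisenbud). The case $m=1$ is the classical principal ideal theorem, and the general case follows from it by a standard induction on $m$: one reduces the $m$-generator case to the $(m-1)$-generator case by localizing at a suitable prime and invoking prime avoidance. For the purposes of this paper it suffices to quote the result rather than reproduce this induction.
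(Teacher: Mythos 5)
Your proposal is correct and follows exactly the route the paper intends: the paper offers no separate argument for this theorem, simply invoking Krull's principal ideal theorem (in its generalized height form), which is precisely what you do after the routine checks that $\K[\x]$ is Noetherian, that $\ideal{F}$ has $\#(F)$ generators, and that properness makes the set of minimal primes nonempty.
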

\begin{Corollary}
Let $\T$ be a Kalkbrener triangular decomposition
of $V(F)$.
Let $T$ be a regular chain of $\T$, 
the height of which is greater than $\#(F)$.
Then $\T\setminus\{T\}$ is also a Kalkbrener triangular decomposition
of $V(F)$.
\end{Corollary}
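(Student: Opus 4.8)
The plan is to show that dropping $T$ from $\T$ does not remove any point of $V(F)$, i.e.\ that $\overline{W(T)} \subseteq \bigcup_{T' \in \T \setminus \{T\}} \overline{W(T')}$. Since $\T$ is a Kalkbrener triangular decomposition we already have $V(F) = \bigcup_{T' \in \T} \overline{W(T')}$, so it suffices to prove the reverse inclusion once $T$ is removed; the inclusion $\bigcup_{T' \in \T \setminus \{T\}} \overline{W(T')} \subseteq V(F)$ is inherited from $\T$.

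First I would reduce to the level of irreducible components. Write $\overline{W(T)} = V(\sat{T})$ and decompose it into its irreducible components, each of which is $V(\p)$ for some minimal (equivalently, associated) prime $\p$ of $\sat{T}$. By the standard dimension theory of regular chains (the height of $\sat{T}$ equals $n - \#(\mvar{T})$, and all associated primes of $\sat{T}$ are unmixed of that height --- this is essentially Proposition~\ref{Proposition:equal-dim} together with the known equidimensionality of saturated ideals), every such $\p$ has height equal to $\height{T}$, which by hypothesis is greater than $\#(F)$. On the other hand, $V(\p) \subseteq \overline{W(T)} \subseteq V(F)$, so $\langle F \rangle \subseteq \p$, hence $\p$ contains some minimal prime $\q$ of $\langle F \rangle$. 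Here is where I invoke the preceding Theorem (Krull's height theorem): $\height{\q} \le \#(F) < \height{\p}$, so $\q \subsetneq \p$ strictly.

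Next I would use this strict containment to relocate $V(\p)$ inside another component of the decomposition. Since $\q$ is a minimal prime of $\langle F \rangle$, we have $V(\q) \subseteq V(F) = \bigcup_{T' \in \T} \overline{W(T')}$, and because $V(\q)$ is irreducible it must lie entirely in a single $\overline{W(T')} = V(\sat{T'})$ for some $T' \in \T$. Then $V(\p) \subsetneq V(\q) \subseteq V(\sat{T'})$. The point is that $T' \neq T$: if $T' = T$ then $V(\q)$ would be an irreducible subset of $\overline{W(T)}$ strictly containing the component $V(\p)$, contradicting that $V(\p)$ is a maximal irreducible subset of $\overline{W(T)}$ (a component). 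Applying this to every component $V(\p)$ of $\overline{W(T)}$ gives $\overline{W(T)} \subseteq \bigcup_{T' \in \T \setminus \{T\}} \overline{W(T')}$, and therefore $V(F) = \bigcup_{T' \in \T} \overline{W(T')} = \bigcup_{T' \in \T \setminus \{T\}} \overline{W(T')}$, which is the claim.

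The main obstacle is the equidimensionality input: the argument hinges on \emph{every} associated prime of $\sat{T}$ having height exactly $\height{T}$, so that the hypothesis $\height{T} > \#(F)$ transfers to each $\p$. This is a known property of saturated ideals of regular chains (and is implicit in the $\dim$-based reasoning used throughout Section~\ref{sec:regularchains}, cf.\ Proposition~\ref{Proposition:equal-dim}), but it is the one place where a purely set-theoretic reading of $\overline{W(T)} = V(\sat{T})$ is not enough and one must appeal to the commutative-algebra structure of $\sat{T}$; everything else is a routine chase through irreducible components and Krull's theorem.
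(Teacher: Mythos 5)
Your proof is correct and is essentially the intended argument: the paper gives no separate proof of this corollary, treating it as an immediate consequence of the preceding Krull-type theorem together with the classical dimension/unmixedness theorem for saturated ideals of regular chains (every associated prime of $\sat{T}$ has height $\#(\mvar{T})$, cf.~\cite{ALM97}; that is the right reference for this fact rather than Proposition~\ref{Proposition:equal-dim}), and that is exactly the route you take. Two small repairs: the parenthetical ``the height of $\sat{T}$ equals $n-\#(\mvar{T})$'' should say that this is its \emph{dimension} (its height is $\#(\mvar{T})$, which is what you actually use); and since a prime $\mathfrak{q}\subset\K[\x]$ need not define an irreducible subset of $\KK^n$ when $\K$ is not algebraically closed, the steps ``$V(\mathfrak{q})$ is irreducible, hence lies in a single $\overline{W(T')}$'' and ``$V(\p)$ is a maximal irreducible subset of $\overline{W(T)}$'' should be interpreted in the $\K$-Zariski topology or rephrased ideal-theoretically ($\mathfrak{q}\supseteq\sqrt{\langle F\rangle}=\bigcap_{T'\in\T}\sqrt{\sat{T'}}$ forces $\mathfrak{q}\supseteq\sqrt{\sat{T'}}$ for some $T'$, and $T'\neq T$ since otherwise $\mathfrak{q}$ would contain an associated prime of $\sat{T}$, whose height exceeds $\#(F)\geq\height{\mathfrak{q}}$), after which everything goes through unchanged.
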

Based on this corollary, we prune the decomposition tree 
generated during the computation of a Lazard-Wu triangular decomposition 
and remove the computation branches in which
the  height of every generated regular chain is 
greater than the number of polynomials in $F$.

Next we explain how to implement this tree pruning
technique to the algorithms of Section~\ref{sec:incremental}.
Inside {\sf Triangularize}, define $A=\#(F)$ 
and pass it to every call to {\sf Intersect} in order to signal 
{\sf Intersect} to output only regular chains 
with height no greater than $A$.
Next, in the second while loop of {\sf Intersect}, 
for the $i$-th iteration, 
we pass the height $A-\#(T_{\geq x_{i+1}})$ to 
{\sf CleanChain}, {\sf IntersectFree} and {\sf IntersectAlgebraic}.

In {\sf IntersectFree}, we pass its input height $A$ to 
every function call. 
Besides,  Lines $5$ to $6$ are executed only if the height of $D$ is
strictly less than $A$, since otherwise we would obtain regular chains
of height greater than $A$.
In other algorithms, we apply similar strategies as in {\sf Intersect}
and {\sf IntersectFree}.

\section{Experimentation}
\label{sec:Implementation}
Part of the algorithms presented in this paper 
are implemented in {\Maple 14} while
all of them are present  in the current development
version of {\Maple}.
Tables 1 and 2 report on our comparison between
 {\sf Triangularize} and other {\sc Maple} solvers.
The notations used in these tables are defined  below.

\smallskip\noindent{\small \bf Notation for {\sf Triangularize}.}
We denote by TK and TL the latest 
implementation of {\sf Triangularize}
for computing, respectively, Kalkbrener and Lazard-Wu decompositions,
in the current version of {\Maple}.
Denote by TK14 and TL14 the corresponding 
implementation in {\Maple14}.
Denote by TK13, TL13 the implementation
based on the algorithm of~\cite{moreno00} in {\Maple13}.
Finally, STK and STL are versions of  TK and TL respectively,
enforcing that all computed regular chains are squarefree,
by means of the algorithms in Appendix~\ref{sec:squarefree}.

\smallskip\noindent{\small \bf Notation for the other solvers.}
Denote by GL, GS, GD, respectively the function 
{\sf Groebner:-Basis} (plex order), 
{\sf Groebner:-Solve}, 
{\sf Groebner:-Basis} (tdeg order)
in current beta version of \Maple. 
Denote by WS the function {\sf wsolve} 
of the package {\tt Wsolve}~\cite{DingkangWang}, 
which decomposes a variety as a union of quasi-components 
of Wu Characteristic Sets. 

The tests were launched
on a machine with Intel Core 2 Quad {\small CPU} (2.40{\small GHz}) 
and 3.0{\small Gb} total memory.
The time-out is set as $3600$ seconds.
The memory usage is limited to $60\%$ of total memory.
In both Table 1 and 2, the symbol ``-'' means either
time or memory exceeds the limit we set.

The examples are mainly in positive dimension
since other triangular decomposition algorithms
are specialized to dimension zero~\cite{DMSWX05a}.
All examples are in characteristic zero.

In Table 1, we provide characteristics of the input systems 
and the sizes of the output obtained by different solvers.
For each polynomial system $F\subset\Q[\x]$, 
the number of variables appearing in $F$,
the number of polynomials in $F$,
the maximum total degree of a polynomial in $F$,
the dimension of the algebraic variety $V(F)$ 
are denoted respectively by $\#v$, $\#e$, $\deg$, $\dim$.
For each solver, the size of its output is measured by the 
total number of characters in the output.
To be precise, let ``dec'' and ``gb'' be respectively the output of the
{\sf Triangularize} and {\sf Groebner} functions. 
The {\Maple} command we use are {\sf length(convert(map(Equations, dec, R), string))} and 
{\sf length(convert(gb, string))}.
From Table 1, it is clear that {\sf Triangularize}
produces much smaller output than commands based 
on Gr\"obner basis computations.

TK, TL, GS, WS (and, to some extent, GL)
can all be seen as 
polynomial system solvers in the sense of that
they provide equidimensional decompositions
where components are represented by triangular sets.
Moreover, they are implemented in {\Maple}
(with the support of efficient C code in the case
of GS and GL).
The specification of TK
are close to those of  GS while
TL is related to WS, though the triangular sets
returned by WS are not necessarily regular chains.

In Table 2, we provide the timings of different
versions of {\sf Triangularize} and other solvers. 
From this table, it is clear that the implementations of {\sf Triangularize}, 
based on the algorithms presented in this paper
(that is TK14, TL14, TK, TL) outperform
the previous versions (TK13, TL13),
based on~\cite{moreno00},
by several orders of magnitude.
We observe also that TK outperforms GS and GL
while TL outperforms WS.

\smallskip\noindent{\bf Acknowledgments.}
The authors would like to thank  the support of 
{\sc Maplesoft}, {\sc Mitacs} and {\sc Nserc} of Canada.

\appendix
\medskip

\section{Specialization properties of subresultant chains}
\label{app:subresultantchain}
Let $\A$ be a commutative ring with identity
and let $k\leq \ell$ be two positive integers.
Let $M$ be an $k\times \ell$ matrix with coefficients in $\A$.
Let $M_i$ be the square submatrix of $M$ consisting of 
the first $k-1$ columns of $M$ and the $i_{th}$ column of $M$, 
for $i=k\cdots \ell$.
Let $\det(M_i)$ be the determinant of $M_i$. 
We denote by $\dpol{M}$ the element of $\A[x]$, called
the determinant polynomial of $M$, given by 
$$
\det{M_k}x^{\ell-k}+\det{M_{k+1}}x^{\ell-k-1}+\cdots+\det{M_{\ell}}.
$$
Let $f_1(x),\ldots,f_k(x)$ be a set of polynomials of $\A[x]$.
Let $\ell=1+\max{\deg{f_1(x)},\ldots,\deg{f_k(x)}}$.
The matrix $M$ of $f_1,\ldots,f_k$ is defined by $M_{ij}=\coeff{f_i,x^{\ell-j}}$.

Let $f=a_mx^m+\cdots+a_0$, $g=b_nx^n+\cdots+b_0$ be 
two polynomials of $\A[x]$ with positive degrees $m$ and $n$.
Let $\lambda=\min{m,n}$.
For any $0\leq i<\lambda$, 
let $M$ be the matrix of the polynomials $x^{n-1-i}f,\ldots,xf,f,x^{m-1-i}g,\ldots,xg,g$.
We define the $i_{th}$ subresultant
of $f$ and $g$, denoted by $S_i(f, g)$, as
$$
\begin{array}{rcl}
S_i(f, g)&=&\dpol{x^{n-1-i}f,\ldots,xf,f,x^{m-1-i}g,\ldots,xg,g}\\
         &=&\dpol{M}.
\end{array}
$$
Note that $S_i(f,g)$ is a polynomial in $\A[x]$ with degree at most $i$.
Let $s_i(f,g)=\coeff{S_i(f,g),x^i}$ and call it the principle subresultant
coefficient of $S_i$.

Let $\B$ be a UFD. Let $\phi$ be a homomorphism
from $\A$ to $\B$, which induces naturally also a homomorphism from $\A[x]$
to $\B[x]$. Let $m'=\deg(\phi(f))$ and $n'=\deg(\phi(g))$.

\begin{Lemma}
\label{Lemma:sk}
For any integer $0\leq k < \lambda$, if $\phi(s_k) \neq 0$, then $\phi(a_m)$ and $\phi(b_n)$
does not vanish at the same time.
Moreover, we have both $\deg{\phi(f)}\geq k$
and $\deg{\phi(g)}\geq k$.
\end{Lemma}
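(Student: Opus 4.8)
The plan is to unwind $s_k$ into a single determinant and then to read off the three conclusions by contraposition, each time exhibiting a row or a column of that determinant which becomes zero after applying $\phi$.

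First I would make the shape of the defining matrix explicit. Write $\rho=m+n-2k$ for the number of polynomials in the list $x^{n-1-k}f,\dots,xf,f,x^{m-1-k}g,\dots,xg,g$. Each of these has degree at most $m+n-1-k$, with equality for $x^{n-1-k}f$, so the matrix $M$ of this list (as in the appendix) is $\rho\times\ell$ with $\ell=m+n-k$; in particular $\ell-\rho=k$. By the definition of $\dpol{\cdot}$, the coefficient of $x^{k}$ in $S_k(f,g)=\dpol{M}$ equals $\det M_\rho$, the determinant of the square submatrix formed by the first $\rho$ columns of $M$; hence $s_k=\det M_\rho$. Since column $j$ of $M$ records the coefficient of $x^{\ell-j}$, the columns of $M_\rho$ record, in order, the coefficients of $x^{m+n-k-1},x^{m+n-k-2},\dots,x^{k}$. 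I would then record two features of $M_\rho$: $(a)$ its first column has $a_m$ in the row coming from $x^{n-1-k}f$, has $b_n$ in the row coming from $x^{m-1-k}g$, and is $0$ in every other row, because every lower shift of $f$ or of $g$ has degree strictly below $m+n-1-k$; $(b)$ the row of $M_\rho$ coming from the unshifted $f$ consists of the coefficients $a_{m+n-k-1},\dots,a_{k}$ of $f$ (with $a_j=0$ for $j>m$), and since $k<n$ forces $m+n-k-1\ge m$, this row involves only $a_m,a_{m-1},\dots,a_k$ preceded by zeros; symmetrically, using $k<m$, the row coming from the unshifted $g$ involves only $b_n,\dots,b_k$ preceded by zeros.

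Because $\phi$ is a ring homomorphism it commutes with determinants, so $\phi(s_k)=\det(\phi(M_\rho))$, the determinant of the entrywise image. Now I argue contrapositively. If $\phi(a_m)=\phi(b_n)=0$, then by $(a)$ the first column of $\phi(M_\rho)$ vanishes, so $\phi(s_k)=0$; hence $\phi(s_k)\ne 0$ forces $\phi(a_m)\ne 0$ or $\phi(b_n)\ne 0$. If $\deg\phi(f)<k$ --- equivalently $\phi(a_m)=\phi(a_{m-1})=\cdots=\phi(a_k)=0$, which also covers $\phi(f)=0$ under the convention $\deg 0=-\infty$ --- then by $(b)$ the row of $\phi(M_\rho)$ coming from $f$ vanishes, so again $\phi(s_k)=0$; thus $\phi(s_k)\ne 0$ implies $\deg\phi(f)\ge k$, and symmetrically $\deg\phi(g)\ge k$.

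The only delicate part is the index bookkeeping in the first step: matching the columns of $M$ to the powers of $x$, locating the unshifted $f$ and $g$ among the rows, and verifying that the inequalities $k<m$ and $k<n$ are exactly what force the coefficients $a_m,\dots,a_k$ and $b_n,\dots,b_k$ to lie inside $M_\rho$. No properties of subresultants beyond the determinant-polynomial definition are needed, and the hypothesis that $\B$ is a UFD is not used here; all that is invoked is that a matrix with a zero row or a zero column has zero determinant over any commutative ring.
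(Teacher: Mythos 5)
Your proof is correct and takes essentially the same approach as the paper: both write $s_k$ as the determinant of the square coefficient submatrix and argue that $\phi(a_m)=\phi(b_n)=0$ annihilates its first column, while $\deg\phi(f)<k$ (resp.\ $\deg\phi(g)<k$) annihilates the row coming from the unshifted $f$ (resp.\ $g$), forcing $\phi(s_k)=0$. You simply make explicit the index bookkeeping and the column/row-vanishing steps that the paper compresses into its one-line ``the conclusion follows.''
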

\begin{proof}
Observe that
$$
s_k=\left|
\begin{array}{ccccc}
a_m & a_{m-1} & \cdots & a_0&\\
&\cdots&        &\cdots    &\\
      & a_m  & a_{m-1} &\cdots&a_k\\
b_n & b_{n-1} & \cdots & b_0&\\
&\cdots&        &\cdots    &\\
      & b_n  & b_{n-1} &\cdots&b_k\\
\end{array}
\right|.
$$
Therefore there exists $i\geq k, j\geq k$
such that $\phi(a_i)\neq 0$ and $\phi(b_j)\neq0$.
The conclusion follows.
\end{proof}

\begin{Lemma}
\label{Lemma:smn}
Assume that $\phi(s_0)=\cdots=\phi(s_{\lambda-1})=0$.
Then, if $m\leq n$, we have
\begin{itemize}
\item[$(1)$] if $\phi(a_m)\neq 0$ and $\phi(b_n)=\cdots=\phi(b_{m})=0$, then $\phi(g)=0$
\item[$(2)$] if $\phi(a_m)=0$ and $\phi(b_n)\neq 0$, then $\phi(f)=0$
\end{itemize}
Symmetrically, if $m>n$, we have
\begin{itemize}
\item[$(3)$] if $\phi(b_n)\neq 0$ and $\phi(a_m)=\cdots=\phi(a_{n})=0$, then $\phi(f)=0$
\item[$(4)$] if $\phi(b_n)=0$ and $\phi(a_m)\neq 0$, then $\phi(g)=0$
\end{itemize}
\end{Lemma}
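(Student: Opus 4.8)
The plan is to establish $(1)$ and $(2)$ directly from the explicit determinantal expression for $s_e$ displayed in the proof of Lemma~\ref{Lemma:sk}, and then to derive $(3)$ and $(4)$ from them by swapping $f$ and $g$: interchanging the two polynomials replaces $S_i(f,g)$ by $(-1)^{(m-i)(n-i)}S_i(f,g)$, hence $s_i$ by $\pm s_i$, so it preserves the hypothesis $\phi(s_0)=\cdots=\phi(s_{\lambda-1})=0$ and turns $(3)$, $(4)$ into instances of $(1)$, $(2)$ applied to the pair $(g,f)$. So I would assume $m\leq n$, hence $\lambda=m$, and concentrate on $(1)$ and $(2)$.

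For $(1)$, I argue by contradiction: suppose $\phi(g)\neq 0$ and put $e:=\deg\phi(g)$. From $\phi(b_n)=\cdots=\phi(b_m)=0$ we get $e\leq m-1<\lambda$, so $\phi(s_e)=0$ by hypothesis. Next I apply $\phi$ to the $(m+n-2e)\times(m+n-2e)$ matrix $M_e$ with $\det M_e=s_e$, which has $n-e$ rows built from the coefficients $(a_m,\ldots,a_0)$ of $f$ (with $\phi(a_m)\neq 0$) and $m-e$ rows built from $(b_n,\ldots,b_0)$ (with $\phi(b_j)=0$ for all $j>e$, since $e=\deg\phi(g)$). Ordering the $f$-rows first, $\phi(M_e)$ becomes block upper triangular: the leading $(n-e)\times(n-e)$ block is triangular with $\phi(a_m)$ on the diagonal; the $m-e$ $g$-rows are identically zero on those first $n-e$ columns, because every coefficient of $g$ that could occur there has index $>e$ and hence specializes to $0$; and the trailing $(m-e)\times(m-e)$ block is triangular with $\phi(b_e)=\mathrm{lc}(\phi(g))$ on the diagonal. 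Since $\det$ commutes with $\phi$, this gives $\phi(s_e)=\pm\,\phi(a_m)^{\,n-e}\phi(b_e)^{\,m-e}$, which is nonzero because $\B$ is a domain and $\phi(a_m),\phi(b_e)\neq 0$ — contradicting $\phi(s_e)=0$. Hence $\phi(g)=0$.

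For $(2)$, I run the mirror argument: suppose $\phi(f)\neq 0$ and set $e:=\deg\phi(f)$; since $\phi(a_m)=0$ we again have $e\leq m-1<\lambda$ and $\phi(s_e)=0$. Now in $\phi(M_e)$ it is the $f$-rows that are zero on their first $m-e$ columns (the coefficients appearing there are $a_m,\ldots,a_{e+1}$, all specialized to $0$), while the $g$-rows still carry $\phi(b_n)\neq 0$ as leading coefficient. Ordering the $g$-rows first makes $\phi(M_e)$ block upper triangular with a $(m-e)\times(m-e)$ triangular block having $\phi(b_n)$ on the diagonal and a $(n-e)\times(n-e)$ triangular block having $\phi(a_e)=\mathrm{lc}(\phi(f))$ on the diagonal, so $\phi(s_e)=\pm\,\phi(b_n)^{\,m-e}\phi(a_e)^{\,n-e}\neq 0$, a contradiction; hence $\phi(f)=0$.

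The substance of the proof is the combinatorial bookkeeping inside $M_e$: from the definition of $S_e$ one has to locate precisely the column in which $a_m$ (resp. $b_n$) sits in each of the $n-e$ (resp. $m-e$) rows, confirm that the two claimed blocks really have sizes $n-e$ and $m-e$ and become triangular with the asserted diagonals once $\phi$ is applied, and check that the remaining off-diagonal block is the one that vanishes. I expect this indexing verification — together with keeping straight which coefficients are forced to zero by the degree hypothesis of the given item versus by the choice $e=\deg\phi(\cdot)$ — to be the only delicate point; everything after it is immediate from the block-triangular determinant formula and from $\B$ being an integral domain.
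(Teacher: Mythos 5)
Your proposal is correct. The delicate indexing you flag does check out: in the square matrix whose determinant is $s_e$ (with $e=\deg\phi(g)$ in case $(1)$, resp.\ $e=\deg\phi(f)$ in case $(2)$, so that $e\leq m-1<\lambda$ and $\phi(s_e)=0$ is available), the row built from $x^{\,n-1-e-j}f$ has $a_m$ in column $j+1$, the $g$-rows have all entries of index $>e$ in the first $n-e$ columns, and the trailing block is triangular with $\phi(b_e)$ on the diagonal, so $\phi(s_e)=\pm\phi(a_m)^{\,n-e}\phi(b_e)^{\,m-e}\neq 0$ in the UFD $\B$, giving the contradiction; the mirror computation for $(2)$ and the sign $(-1)^{(m-i)(n-i)}$ under swapping $f$ and $g$ (which reduces $(3)$, $(4)$ to $(1)$, $(2)$) are likewise fine.

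Your route differs from the paper's in organization, though both exploit the same determinantal formulas for the principal subresultant coefficients. The paper argues forward and iteratively: it writes out $s_{m-1}$, uses $\phi(s_{m-1})=0$ together with the hypotheses to force $\phi(b_{m-1})=0$ (resp.\ $\phi(a_{m-1})=0$), then moves to $s_{m-2}$ to kill $b_{m-2}$ (resp.\ $a_{m-2}$), and so on down the chain until all coefficients vanish --- a descending induction whose final steps are left as ``so on so forth.'' You instead argue by contradiction at a single, well-chosen index $e$, the degree of the putative nonzero specialization, and evaluate $\phi(s_e)$ in closed form via block triangularity. Your version avoids the iteration entirely and replaces the paper's informal induction with one explicit determinant evaluation (at the price of the one-time bookkeeping of where $a_m$, $b_n$ and the index-$\leq e$ coefficients sit in $M_e$); the paper's version needs no choice of $e$ and reads off each step from a small displayed determinant, but must track how the matrix shape changes as the index decreases. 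Both are valid proofs of the lemma.
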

\begin{proof}
We prove $(1)$ and $(2)$, whose correctness implies $(3)$ and $(4)$ by symmetry. 
Let $i=\lambda-1=m-1$, then we have
$$
S_{m-1}=\dpol{x^{n-m}f,\ldots,xf, f, g}.
$$
Therefore 
$$
s_{m-1}=\left|
\begin{array}{ccccc}
a_m & \cdots & a_0     &\\
    &\ddots  & \ddots  &\\
    &        & a_{m}    &a_{m-1}\\
b_n & \cdots & b_m      &b_{m-1}\\
\end{array}
\right|.
$$
So from $\phi(b_n)=\cdots=\phi(b_m)=0$ and $\phi(s_{m-1})=0$, 
we conclude that $\phi(b_{m-1})=0$. 
On the other hand, if $\phi(a_m)=0$ and $\phi(b_n)\neq 0$, then $\phi(a_{m-1})=0$.

Now let consider $S_{m-2}$.
We have
$$
s_{m-2}=\left|
\begin{array}{ccccc}
a_m & a_{m-1}  & \cdots     &a_0     &        \\
    &\ddots  &      &\ddots  &        \\
    &        &a_{m} &a_{m-1}  & a_{m-2}\\
b_n & \cdots    &b_{m-1}&b_{m-2} &        \\    
    & b_n &\cdots    &b_{m-1} & b_{m-2}\\
\end{array}
\right|.
$$
From $\phi(b_{m-1})=0$, we conclude that $\phi(b_{m-2})=0$.
From $\phi(a_{m-1})=0$, we conclude that $\phi(a_{m-2})=0$.

So on so forth, finally, 
if $\phi(a_m)\neq 0$ and $\phi(b_n)=\cdots=\phi(b_m)=0$, we deduce that $\phi(b_{i})=0$, for all $0\leq i\leq m-1$, 
which implies that $\phi(g)=0$; 
if $\phi(a_m)= 0$ and $\phi(b_n)\neq 0$, we deduce that $\phi(a_{m-1})=\cdots=\phi(a_0)=0$, 
which implies that $\phi(f)=0$.
\end{proof}

\begin{Lemma}
\label{Lemma:dpol}
Let $i$ be an integer such that $1\leq i <\lambda$.
Assume that $\phi(a_m)\neq 0$.
If $i\leq n'$, then we have
$$
\begin{array}{ll}
\phi(S_i)=\phi(a_m)^{n-n'}{\rm dpol}(&\!\!\!\!\!x^{n'-1-i}\phi(f),\ldots,x\phi(f), \phi(f), \\
                               & \!\!\!\!\!x^{m-1-i}\phi(g),\ldots,x\phi(g), \phi(g))
\end{array}
$$
\end{Lemma}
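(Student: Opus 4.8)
The plan is to read the claimed identity off of a block-triangular structure that the hypotheses $\phi(a_m)\neq 0$ and $\deg\phi(g)=n'$ impose on the matrix defining $\phi(S_i)$. By construction $S_i=S_i(f,g)=\dpol{M}$, where $M$ is the matrix of the $k:=m+n-2i$ polynomials $x^{n-1-i}f,\ldots,xf,f,x^{m-1-i}g,\ldots,xg,g$; it has $\ell:=m+n-i$ columns, and its $(r,j)$ entry is the coefficient of $x^{\ell-j}$ in the $r$-th of these polynomials. Since $\phi$ is a ring homomorphism it commutes with determinants, hence with $\dpol{\cdot}$, so $\phi(S_i)=\dpol{\phi(M)}$, where $\phi(M)$ is the entrywise image of $M$; equivalently, $\phi(M)$ is the matrix of $x^{n-1-i}\phi(f),\ldots,\phi(f),x^{m-1-i}\phi(g),\ldots,\phi(g)$.

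The heart of the argument is a structural observation about $\phi(M)$: a row equal to $x^{j}\phi(f)$ or $x^{j}\phi(g)$ of degree $d$ has its leading term in column $\ell-d$. Since $\deg\phi(f)=m$ (because $\phi(a_m)\neq0$), the row $x^{j}\phi(f)$ has leading term $\phi(a_m)x^{m+j}$ in column $n-i-j$, and every strictly smaller column index of that row is zero. Letting $j$ run from $n-1-i$ down to $n'-i$ produces the first $n-n'$ rows of $\phi(M)$, with leading columns $1,\ldots,n-n'$ respectively; so the top-left $(n-n')\times(n-n')$ block of $\phi(M)$ is upper triangular with every diagonal entry equal to $\phi(a_m)$. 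Every other row, namely the lower shifts $x^{j}\phi(f)$ with $0\le j\le n'-1-i$ and all shifts $x^{j}\phi(g)$ with $0\le j\le m-1-i$ (here using $\deg\phi(g)=n'\le n$), has leading degree at most $m+n'-1-i$, hence is supported entirely on columns $n-n'+1,\ldots,\ell$.

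To conclude I would do a column-by-column computation on the determinant polynomial. Put $k':=m+n'-2i$ and $\ell':=m+n'-i$, and let $M'$ be the matrix of $x^{n'-1-i}\phi(f),\ldots,\phi(f),x^{m-1-i}\phi(g),\ldots,\phi(g)$, which is exactly the matrix whose $\dpol$ is the right-hand side of the lemma. Using $i<\lambda$ and $i\le n'$ one checks $n-n'\le k-1$, so for every $j$ with $k\le j\le\ell$ the square submatrix $\phi(M)_j$ formed by the first $k-1$ columns together with column $j$ has block form $\bigl(\begin{smallmatrix}A&B\\0&C\end{smallmatrix}\bigr)$, where $A$ is the triangular block above and $C=M'_{j-(n-n')}$; hence $\det\phi(M)_j=\phi(a_m)^{n-n'}\det M'_{j-(n-n')}$. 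Summing over $j$ and re-indexing by $j'=j-(n-n')$, which preserves the exponent because $\ell'-j'=\ell-j$, gives $\phi(S_i)=\dpol{\phi(M)}=\phi(a_m)^{n-n'}\dpol{M'}$, as claimed. The conceptual content is short, being just block triangularity; the part that needs care is the index bookkeeping, i.e.\ matching degrees and row/column ranges between the $k\times\ell$ and $k'\times\ell'$ matrices and handling the boundary case $i=n'$, in which all $f$-shifts of $\phi(M)$ get absorbed into the triangular block and $M'$ reduces to its $g$-part. I would therefore write these ranges out explicitly rather than leave them to the reader.
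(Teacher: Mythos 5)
Your proof is correct and takes essentially the same route as the paper's: apply $\phi$ entrywise to the defining matrix (legitimate since $\phi(a_m)\neq 0$ keeps the maximal degree, hence the column count $m+n-i$, unchanged) and then extract the factor $\phi(a_m)^{n-n'}$ from the upper-triangular block formed by the top $n-n'$ shifts of $\phi(f)$. The paper compresses these two steps into a short chain of equalities, and your block-triangular index bookkeeping is exactly the justification it leaves implicit.
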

\begin{proof}
If $i\leq n'$, then $n-n'\leq n-i$.
Therefore we have 
$$
\begin{array}{rcl}
\phi(S_i) &=& \phi(\dpol{x^{n-1-i}f,\ldots,xf, f,x^{m-1-i}g,\ldots, xg, g})\\
          &=& \phi({\rm dpol}(x^{n-1-i}\phi(f),\ldots,x\phi(f), \phi(f), \\
          &&x^{m-1-i}\phi(g),\ldots,x\phi(g), \phi(g)))\\
          &=& \phi(a_m)^{n-n'}{\rm dpol}(x^{n'-1-i}\phi(f),\ldots,x\phi(f), \phi(f), \\
          &&x^{m-1-i}\phi(g),\ldots,x\phi(g), \phi(g)))
\end{array}
$$
Done.
\end{proof}

\begin{Theorem}
\label{Theorem:gcd}
We have the following relations between 
the subresultants and the {\small GCD}
of $\phi(f)$ and $\phi(g)$: 
\begin{enumerate}
\item Let $k$, $0\leq k<\lambda$, be an integer such that
$\phi(s_k)\neq 0$ and for any $i$, $0\leq i<k$, 
$\phi(s_i)=0$. Then ${\rm gcd}(\phi(f), \phi(g))=\phi(S_k)$.
\item Assume that $\phi(s_i)=0$ for all $0\leq i<\lambda$. 
we have the following cases
\begin{enumerate}
\item if  $m\leq n$ and $\phi(a_m)\neq0$, then
${\rm gcd}(\phi(f), \phi(g))=\phi(f)$; symmetrically,
if $m>n$ and $\phi(b_n)\neq0$, then we have
${\rm gcd}(\phi(f), \phi(g))=\phi(g)$
\item if $m\leq n$ and $\phi(a_m)=0$ but $\phi(b_n)\neq0$, 
then we have ${\rm gcd}(\phi(f), \phi(g))=\phi(g)$; 
symmetrically, if $m\geq n$ and $\phi(b_n)=0$ but $\phi(a_m)\neq0$,
then we have ${\rm gcd}(\phi(f), \phi(g))=\phi(f)$
\item if $\phi(a_m)=\phi(b_n)=0$, then 
$${\rm gcd}(\phi(f), \phi(g))={\rm gcd}(\phi(\red{f}), \phi(\red{g}))$$
\end{enumerate}
\end{enumerate}
\end{Theorem}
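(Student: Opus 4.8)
The plan is to reduce everything to the classical subresultant theory over the field of fractions of the UFD $\B$, combined with the specialization lemmas already proved (Lemmas~\ref{Lemma:sk}, \ref{Lemma:smn}, \ref{Lemma:dpol}). The central fact I would invoke is the \emph{block structure / gap theorem} for subresultants over a field: if $\phi(f),\phi(g)\in\B[x]$ have degrees $m',n'$ (with, say, $m'\le n'$ after possibly swapping), and if $k$ is the smallest index with $\phi(s_k)\neq 0$ among the \emph{genuine} subresultants, then $\phi(S_k)$ is (up to a nonzero scalar in $\B$, hence an associate, which is all that matters for a GCD statement in a UFD) the last nonzero polynomial remainder in the Euclidean remainder sequence of $\phi(f)$ and $\phi(g)$, i.e.\ their GCD. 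The subtlety that makes the statement look complicated is purely bookkeeping: the indices $s_0,\dots,s_{\lambda-1}$ are formed from $f,g$ \emph{before} specialization, so when $\phi$ drops the degree of $f$ or $g$ the dictionary between ``$\phi(s_i)$'' and ``$i$-th subresultant of $\phi(f),\phi(g)$'' shifts. Lemma~\ref{Lemma:dpol} is exactly the bridge that repairs this dictionary, up to the harmless factor $\phi(a_m)^{n-n'}$.

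First I would dispose of case~(1). Here $\phi(s_k)\neq 0$ with $\phi(s_i)=0$ for $i<k$. By Lemma~\ref{Lemma:sk}, $\deg\phi(f)\ge k$ and $\deg\phi(g)\ge k$ and at least one leading coefficient survives; in particular $k\le\min(m',n')=:\lambda'$. Apply Lemma~\ref{Lemma:dpol} (after swapping so that the polynomial whose leading coefficient survives plays the role of $f$): for $i\le n'$ we get $\phi(S_i)=\phi(a_m)^{n-n'}\cdot(\text{$i$-th subresultant of }\phi(f),\phi(g))$. Hence the vanishing pattern of $\phi(s_0),\dots,\phi(s_{k})$ translates, after dividing by the unit-up-to-associates factor $\phi(a_m)^{n-n'}$, into the statement that the $i$-th principal subresultant coefficient of $\phi(f),\phi(g)$ vanishes for $i<k$ and is nonzero at $i=k$. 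The classical gap theorem over the fraction field then gives $\gcd(\phi(f),\phi(g))=\phi(S_k)$ up to a nonzero scalar; since a GCD in a UFD is only defined up to associates (and the theorem's equalities are understood that way), we are done.

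Next, case~(2), where $\phi(s_i)=0$ for all $0\le i<\lambda$. I would treat the three subcases (a), (b), (c) as follows. For (a) and (b), Lemma~\ref{Lemma:smn} does the work directly: under the stated hypotheses on which leading coefficients survive, it forces $\phi(g)=0$ (resp.\ $\phi(f)=0$, in the swapped situation), and then $\gcd(\phi(f),\phi(g))$ is trivially the surviving polynomial — one just has to check that the hypotheses of the relevant item of Lemma~\ref{Lemma:smn} are met, which is a short case split on $m\le n$ versus $m>n$ and on whether $\phi(a_m)$ or $\phi(b_n)$ is zero (note $\phi(b_n)=\dots=\phi(b_m)=0$ when $m\le n$ and $\deg\phi(g)<m$, which is forced once the principal subresultants all vanish and $\phi(a_m)\neq0$, via Lemma~\ref{Lemma:dpol} applied down to index $0$). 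For (c), $\phi(a_m)=\phi(b_n)=0$, so $\phi(f)=\phi(\red{f})$ and $\phi(g)=\phi(\red{g})$ as polynomials, whence the claimed identity $\gcd(\phi(f),\phi(g))=\gcd(\phi(\red f),\phi(\red g))$ is a tautology — there is genuinely nothing to prove beyond unwinding the definition of $\red{\cdot}$.

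I expect the main obstacle to be case~(2)(a)/(b): one must be careful that ``all $\phi(s_i)=0$ for $i<\lambda$'' really does propagate, through Lemma~\ref{Lemma:dpol} and then Lemma~\ref{Lemma:smn}, to the conclusion that the lower-degree polynomial actually \emph{divides} the other (not merely that the remainder sequence stalls), and to get the index arithmetic right when a degree drop occurs — i.e.\ confirming that the hypothesis list of Lemma~\ref{Lemma:smn} is the exhaustive set of possibilities once one knows one of $\phi(a_m),\phi(b_n)$ is nonzero. The argument of case~(1) is essentially the textbook subresultant GCD theorem and should go through with only the degree-bookkeeping care noted above; case~(2)(c) is immediate.
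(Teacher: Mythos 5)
Your plan is correct and follows essentially the same route as the paper's own proof: it uses Lemma~\ref{Lemma:sk} to bound $k$, Lemma~\ref{Lemma:dpol} to identify $\phi(S_i)$ with the subresultants of $\phi(f),\phi(g)$ up to the factor $\phi(a_m)^{n-n'}$, the classical gap theorem over the fraction field for case (1), Lemma~\ref{Lemma:smn} for cases (2a)--(2b), and the tautology for (2c). The only detail the paper spells out that you fold into ``degree bookkeeping'' is the defective index $k=n'$, where $\phi(S_k)$ is computed directly as $\phi(a_m)^{n-n'}\phi(b_{n'})^{m-1-k}\phi(g)$, hence an associate of $\phi(g)$, and the conclusion still reads $\gcd{\phi(f),\phi(g)}=\phi(S_k)$.
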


\begin{proof}
Let us first prove $(1)$.
W.l.o.g, we assume $\phi(a_m)\neq 0$.
From Lemma~\ref{Lemma:sk}, we know that $k\leq n'$.
Therefore for $i\leq k$, we have $i\leq n'$.
By Lemma~\ref{Lemma:dpol}, 
$$
\begin{array}{rcl}
\phi(S_i) &=& \phi(a_m)^{n-n'}{\rm dpol}(x^{n'-1-i}\phi(f),\ldots,x\phi(f), \phi(f),\\
          &&x^{m-1-i}\phi(g),\ldots,x\phi(g), \phi(g))
\end{array}
$$
If $i<n'$, we have $\phi(S_i)=\phi(a_m)^{n-n'}S_i(\phi(f), \phi(g))$.
If $i=n'$, since $i<m$, we have 
$$
\begin{array}{rcl}
\phi(S_i) &=& \phi(a_m)^{n-n'}\dpol{x^{m-1-i}\phi(g),\ldots,x\phi(g), \phi(g)}\\
&=&\phi(a_m)^{n-n'}\phi(b_{n'})^{m-1-i}\phi(g).
\end{array}
$$
So for all $i<k$, we have $s_i(\phi(f), \phi(g))=0$.
If $k<n'$, we have $s_k(\phi(f), \phi(g))\neq0$. 
So $\gcd{\phi(f), \phi(g)}=\phi(S_k)$.
If $k=n'$, we have $\phi(b_{n'})=\phi(b_k)\neq 0$. 
Therefore $\gcd{\phi(f), \phi(g)}=\phi(g)=\phi(S_k)$.

Next we prove $(2a)$.
By symmetry, we prove it when $m\leq n$.
If $\phi(b_n)=\cdots=\phi(b_m)=0$, it follows directy from Lemma~\ref{Lemma:smn}.
Otherwise, we have $n'\geq m$.
By Lemma~\ref{Lemma:dpol}, for all $i<m$ we have
$$
\begin{array}{rcl}
\phi(S_i)&=&\phi(a_m)^{n-n'}{\rm dpol}(x^{n'-1-i}\phi(f),\ldots,x\phi(f), \phi(f), \\
         &&x^{m-1-i}\phi(g),\ldots,x\phi(g), \phi(g))
\end{array}
$$
That is $\phi(S_i)=\phi(a_m)^{n-n'}S_i(\phi(f), \phi(g))$.
Since $\phi(s_i)=0$, we deduce that $\phi(S_i)=\gcd{\phi(f), \phi(g)}$.

Finally $(2b)$ follows directly from Lemma~\ref{Lemma:smn} and $(2c)$ is obviouly true.
All done.
\end{proof}

\section{Squarefree decomposition}
\label{sec:squarefree}

Throughout this section, we assume that the coefficient field $\K$
is of characteristic zero.
We propose two strategies for computing a squarefree triangular decomposition.
The first one is a post-processing 
which applies Algorithm~\ref{Algo:Squarefree}
to every regular chain returned by Algorithm~\ref{Algo:Triangularize}.
The second consists of ensuring that, each output or intermediate 
regular chain generared during the execution
of Algorithm~\ref{Algo:Triangularize} is squarefree.

To implement the second strategy, we add an 
{\em squarefree option} to Algorithm~\ref{Algo:Triangularize}
and each of its subalgorithms.
If the option is set to {\em true}, this option requires that each output
regular chain is squarefree.
This is achieved by using Algorithm~\ref{Algo:Squarefree-poly}
whenever we need to construct new regular chains 
from a previous regular chain $T$ and a polynomial $p$
such that $T\cup p$ is known to be a regular chain.

\begin{algorithm}
\label{Algo:Squarefree-poly}
\linesnumbered
\caption{\Squarefree{p, x_i, T, R}\label{Algo:Squarefree}}
\KwIn{a polynomial ring $R=\K[x_1,\ldots,x_n]$, a variable\\
 $x_i$ of $R$,
a squarefree regular chain $T$ of $\K[x_1,\ldots,x_{i-1}]$,
a polynomial $p$ of $R$ with main variable $x_i$ such that
$T\cup p$ is a regular chain.
}
\KwOut{
a set of squarefree regular chains $T_1,\ldots,T_e$ such that 
$p\cup T\longrightarrow T_1,\ldots,T_e$.
}
       $p := \SquarefreePart{p}$\;
       {\bf if} $\mdeg{p}=1$ {\bf then} return $T\cup p$\;
       \Else{
          $src := \SubresultantChain{p,p',x_i, R}$\;
          return $\Squarefree{p, x_i, src, T, R}$\;
       }
\end{algorithm}

\begin{algorithm}
\linesnumbered
\caption{\Squarefree{p, x_i, src, T, R}\label{Algo:Squarefree-src}}
\KwIn{a polynomial ring $R=\K[x_1,\ldots,x_n]$, a variable\\ 
$x_i$ of $R$, a squarefree regular chain $T$ of $\K[x_1,\ldots,x_{i-1}]$,
a squarefree polynomial $p$ of $R$ with main variable $x_i$ such that
$T\cup p$ is a regular chain, 
the sub-resultant chain $src$ of $p$ and $p'$ w.r.t $x_i$.
}
\KwOut{a set of squarefree regular chains $T_1,\ldots,T_e$ such that 
$p\cup T\longrightarrow T_1,\ldots,T_e$.}
      $r := \resultant{src, R}$\;
      $\T := \{~\}$\;
      \For{$C\in\Regularize{r, T, R}$}{
           \lIf{$r\notin\sat{C}$}{
                output $C\cup p$; next\;
           }
           \Else{
                \If{$\dim{C}=\dim{T}$}{
                     $\T := \T\cup\{C\}$; next\;
                 }
                 \Else{
                     \For{$[f, D]\in\Regularize{\init{p}, C, R}$}{
                          {\bf if } $f\neq0$ {\bf then} $\T := \T\cup\{D\}$\;
                     }
                }
           }
      } 
      \While{$\T\neq\{~\}$}{
             let $C\in\T$; $\T:=\T\setminus\{C\}$\;
             \For{$[g, D]\in\RegularGcd{p, p', x_i, src, C, R}$}{ 
                  \eIf{$\dim{D}=\dim{C}$}{
                       output $D\cup\pquo{p,g}$\;
                       \For{$E\in\Intersect{\init{g}, D, R}$}{
                            \For{$[f, F]\in\Regularize{\init{p}, E, R}$}{
                                 {\bf if } $f\neq0$ {\bf then} $\T := \T\cup\{F\}$\;
                            }
                       }
                     }{
                       \For{$[f, E]\in\Regularize{\init{p}, D, R}$}{
                            {\bf if } $f\neq0$ {\bf then} $\T := \T\cup\{E\}$\;
                       }
                  }
             }
 
      }
\end{algorithm}

\begin{algorithm}
\linesnumbered
\caption{\Squarefree{T, R}\label{Algo:Squarefree}}
\KwIn{a polynomial ring $R=\K[x_1,\ldots,x_n]$, 
a regular chain $T$ of $R$.
}
\KwOut{
a set of squarefree regular chains $T_1,\ldots,T_e$ such that 
$T\longrightarrow T_1,\ldots,T_e$.
}
       $T := \{\SquarefreePart{p}\mid p\in T\}$\;
       $S := \{~\}$\;
       \For{$p\in T$}{
            \If{$\mdeg{p} > 1$}{$S:=S\cup\{\SubresultantChain{p,p',\mvar{p},R}\}$\;}
       }
       $\T := \{\varnothing\}$; $\T' := \{~\}$; $i := 1$\;
       \While{$i\leq n$}{
              \For{$C\in\T$}{
                   \eIf{$x_i \notin \mvar{T}$}{
                        $\T' := \T'\cup\CleanChain{C, T, x_{i+1}, R}$
                      }{
                        \eIf{$\mdeg{T_{x_i}}=1$}{
                             $\T' := \T'\cup\CleanChain{C\cup\{T_{x_i}\}, T, x_{i+1}, R}$
                        }{
                          \For{$D\in\Squarefree{T_{x_i}, x_i, S_{x_i}, C, R}$}{
                               $\T' := \T'\cup\CleanChain{D, T, x_{i+1}, R}$
                          }
                        }
                   }
              }
              $\T := \T'$; $\T' := \{~\}$; $i := i+1$\;
       }
       
   return $\T$
\end{algorithm}

\end{document}